%% file: article.tex
\documentclass[a4paper,UKenglish,cleveref, autoref, thm-restate]{lipics-v2021}

\usepackage{style}
\hypersetup{final}

\usepackage{caption}
\usepackage{subcaption}
\usepackage{stackengine}

\hypersetup{final}

\theoremstyle{remark}
\newtheorem{notation}[theorem]{Notation}
\newtheorem{convention}[theorem]{Convention}

\title{Intuitionistic Unitary Linear Logic: A Proof-Theoretical Approach to Purely Quantum Higher-Order} 

\titlerunning{Intuitionistic Unitary Linear Logic} 

\author{Julien Lamiroy}{Université Paris-Saclay, CentraleSupélec, CNRS, ENS Paris-Saclay, Inria, Laboratoire Méthodes Formelles, 91190, Gif-sur-Yvette, France.
 }{}{}{}

\author{Benoit Valiron}{Université Paris-Saclay, CentraleSupélec, CNRS, ENS Paris-Saclay, Inria, Laboratoire Méthodes Formelles, 91190, Gif-sur-Yvette, France.}{}{}{}

\author{Renaud Vilmart}{Université Paris-Saclay, CentraleSupélec, CNRS, ENS Paris-Saclay, Inria, Laboratoire Méthodes Formelles, 91190, Gif-sur-Yvette, France.}{}{}{}

\authorrunning{J. Lamiroy, B. Valiron, R. Vilmart}

\Copyright{Julien Lamiroy, Benoit Valiron, Renaud Vilmart}
\ccsdesc[500]{Theory of computation~Quantum computation theory}
\ccsdesc[300]{Theory of computation~Linear logic}

\keywords{Quantum computation, non-causal process, linear logic} 

\funding{This work has been partially funded by the European Union through the MSCA SE project QCOMICAL (Grant Agreement ID: 101182520), by the French National Research Agency (ANR): projects TaQC ANR-22-CE47-0012, and within the framework of ``Plan France 2030'' under the research projects EPIQ ANR-22-PETQ-0007 and HQI-R\&D ANR-22-PNCQ-0002.}

\hideLIPIcs
\nolinenumbers
\begin{document}

\maketitle
\begin{abstract}
Although the circuit model for quantum computation is well established, it is incapable of representing non-causal higher-order quantum processes such as the quantum switch. If several models of non-causal quantum computation have been considered in the literature, the approaches have so far only been focusing on the physicality of such processes, using matrices and other techniques from linear algebra. If these approaches are expressive, they however only provide a static and monolithic understanding of these processes.

In this article, we propose a new formalism for non-causal, higher-order quantum processes. Based on a Curry-Howard interpetation, our proposal offers a computational interpretation that is both compositional and modular. In particular, we present \logicname\ (\logicsymb), a logic based on linear logic focusing on conservation of unitarity for higher order terms. We prove the coherence of \logicsymb\, its completeness with regard to unitaries, and the admissibility of its cut rules. We finally discuss the validity of our approach by revisiting known non-causal quantum processes with \logicsymb.
\end{abstract}

\section{Introduction}

The typical model of quantum computation relies on the so-called coprocessor model: a classical computer equipped with a device holding a quantum memory: information, stored using quantum objects, can be in a superposed state. The device comes with a set of elementary instructions: the unitary gates. The interface to the classical computer gives access to initialization, reading, application of unitary gates on quantum registers.

In the description of quantum algorithms, unitary gates are packaged in quantum circuits: the quantum equivalent of classical circuits. Wires correspond to quantum registers---typically \emph{quantum bits}---and boxes to elementary unitary operations. The standard programming model has so far been based on quantum circuits, seen as classical structures storing instructions to be sent to the coprocessor. Following this approach, quantum programming languages are so far, for the most part, conventional, classical languages with domain-specific features dedicated to circuit description.

If quantum circuits can be built one gate at a time, their description also makes use of higher-order combinators. Indeed, they are typically described in a hierarchical manner, where circuits are built from subcircuits. This yields a natural representation of circuits with holes, or \emph{quantum combs}~\cite{qcombs}: placeholders waiting for concrete subcircuits~\cite{quipper,qcombs}. The circuit combinator that we will focus on in this paper is the \emph{control} operator: the ability to apply a unitary on a set of quantum registers conditioned by the state of another register. Since in general the state of the memory is in superposition, the corresponding subcircuit is only executed on some of the elements in superposition. The control of a subcircuit can be regarded as a restricted notion of \emph{quantum test}: the subcircuit is switched ``on'' or ``off'' in superposition, depending on a quantum state.

However, the capabilities of quantum computation go beyond the limited control offered by regular quantum circuits. As first described in the quantum {\qswitch} protocol~\cite{switch} followed by the Grenoble~\cite{grenoble} and the Lugano processes~\cite{Vanrietvelde2025consistentcircuits}, one can not only apply a fixed subcircuit in superposition, but one can also set in superposition the ordering sequence of subcircuits. This more general notion of control has been experimentally realized~\cite{taddei2021computational,procopio2015experimental}, and it has been proved to bring a competitive advantage~\cite{araujo2014computational,taddei2021computational,zhao2020quantum}.

Despite its promises, the design of a model of computation supporting superposition of executions is still an open question. As discussed above, the naive extension of quantum circuits to higher order, where one just cuts ``holes'' into the circuit for higher order input, does not capture faithfully the above processes. To be faithful, such an extension would require bending wires~\cite{switch}. Multiple graphical quantum languages make use of bending~\cite{Coecke2011,MW, Vanrietvelde2025consistentcircuits}. However, despite being able to capture quantum higher order processes, these languages also capture linear maps that are not representative of quantum theory. Only routed circuits~\cite{Vanrietvelde2025consistentcircuits} provide a framework validating whether a diagram is a valid quantum process, though said framework only handles second order, and is complex and not complete: it might reject diagrams with a valid semantics, even though it accepts them presented with a different syntax.

In parallel, the mathematical representation of higher-order quantum processes still poses difficulties. Based on linear algebra, one of the main approaches followed in the literature is the use of process matrices~\cite{grenoble,Vanrietvelde2025consistentcircuits,Wechs}. While expressive, this model describes more than what is physically realizable~\cite{Vanrietvelde2025consistentcircuits}, and in general lacks modularity and compositionality, in the sense that in general process matrices cannot easily be decomposed into sub-process matrices. The other main approach focuses instead on categorical models based on the semantics of linear logic~\cite{prof,caus}. However, if these categorical approaches are compositional, they fail to provide an operational understanding of quantum superposition of executions.

\medskip
\noindent
\textbf{\sffamily Contributions.}~
In this paper, we solve this tension by providing a computational, modular and expressive interpretation of quantum higher-order processes, with a focus on unitarity.
More specifically, we introduce {\logicname} (\logicsymb), a type system and logic for higher order quantum computation based on intuitionistic multiplicative additive linear logic (\IMALL). 

\begin{itemize}
	\item \textbf{Structured for superposition.} {\logicsymb} presents a two-tiered proof system, separating general higher order processes and processes handling quantum data. The latter enriches its sequent structure with worlds that render the superposition of executions.
	
	\item \textbf{A safer interpretation.} We provide an operational semantics for terms and proofs of {\logicsymb} based on linear algebra. It ensures unitarity, conservation of unitarity, conservation of conservation of unitarity, etc. along with additional constructions. This approaches the theory of superchannels as laid out in~\cite{grenoble, Vanrietvelde2025consistentcircuits} as an analogue to positivity.
	
	\item \textbf{Expressivity.} {\logicsymb} is capable of expressing multiple ICOs, like $\SWITCH$, the Grenoble process, and challenging higher order processes, like controlling an unitary.
	
	\item \textbf{Cut elimination.} We show that {\logicsymb} enjoys sound cut elimination, as a first step to use it as a type system for a quantum programming language for higher order and indefinite causal orders.
\end{itemize}
\section{Quantum Computation and Indefinite Causal Ordering}
\label{sec:quantumcomp}

Unlike classical data that is often modeled by elements of sets (eg $\mathbb B = \{0, 1\}$), quantum data are represented as vectors of complex Hilbert spaces of norm $1$. In this paper, Hilbert spaces are always finite-dimensional, and we use the letter $\cH$ to refer to them. We write the set of linear maps between the vector space $\cV$ and $\cV'$ with $\mathrm{Lin}(\cV,\cV')$. 

The quantum equivalent of the bit, the \emph{quantum bit}, or \emph{qubit}, is a vector of dimension 2: it is written in the canonical basis as $\alpha \ket 0 + \beta \ket 1$, with $|\alpha|^2 + |\beta|^2 = 1$. Most quantum computing manipulates more than one qubit: the state of $n$ qubits lives in the \emph{Kronecker product} (or \emph{tensor}) of copies of $\bC^2$: it corresponds to linear combinations of strings of $n$ bits, and it lives in $\bC^{2^n}$. In general, the tensor product of two Hilbert spaces $\cH_a$ and $\cH_b$ is denoted with $\cH_a\otimes\cH_b$.

Beside initialization and reading, \emph{unitary operations} form the main class of quantum operations on quantum information. A unitary operation is a linear morphism $f\in\text{Lin}(\cH_a,\cH_b)$ such that $f^\dag \circ f= \Id A$ and $f \circ f^\dag = \Id B$, where $f^\dag$ is the conjugate transpose of $f$. The typical quantum algorithm is a sequence of elementary unitary operations called \emph{quantum gates} on one or two qbits at a time. A noteworhty combinator on unitary operations is the \emph{control}: given $f:\cH_a\to\cH_b$, we define $\text{C-}f:\bC^2\otimes\cH_a\to\bC^2\otimes\cH_b$ as follows: $\text{C-}f(\ket 0\otimes v) = \ket 0\otimes v$ and $\text{C-}f(\ket 1\otimes v) = \ket 1\otimes (f v)$. If the direct sum of two spaces is written $\oplus$, the operation $\text{C-}f$ can be seen as block-diagonal operation on $(\ket 0\otimes\cH_a)\oplus(\ket 1\otimes\cH_a)$: the subspace of all linear combinations of basis vectors starting with $\ket0$ (resp., starting with $\ket1$).

The tensor product of two unitaries, and their compositions are also unitaries. This makes them naturally suited to be represented as circuits. Quantum circuits are the standard notation to represent first order quantum programs: wires corresponds to quantum bits, and boxes corresponds to unitary operations, read from left to right. \Cref{fig:circuits} presents circuits of increasing complexity.

		\begin{figure}[ht]
			\begin{subfigure}{0.25\columnwidth}
				\input{example-circuit.tikz}
				\caption{Quantum circuit}
				\label{fig:example-circuit}
			\end{subfigure}
			\hfill
			\begin{subfigure}{0.53\columnwidth}
				\input{Grover.tikz}
				\caption{Grover's circuit}
				\label{fig:grover}
			\end{subfigure}
			\hfill
			\begin{subfigure}{0.2\columnwidth}
				\input{switch-circuit.tikz}
				\caption{The quantum $\SWITCH$'s circuit}
				\label{fig:SwitchCirc}
			\end{subfigure}
			\caption{\ref{fig:example-circuit}: Circuit built over the $\langle H, \text{C-}X, P(\alpha)\rangle_{\alpha\in\mathbb R}$ gate set. The black dot corresponds to the controlling qubit. \ref{fig:grover}: Circuit for Grover's algorithm. The dashed boxes represent holes to be filled with the same oracle. The circuit $R$ can be built using the aforementioned gate set. \ref{fig:SwitchCirc}: Circuit representation of $\SWITCH(U,V)$ described below. The box $U$ connected to the first wire by a black (resp.~white) dot represents the operator $U$ controlled by $\ket1$ (resp.~$\ket0$).}
			\label{fig:circuits}
		\end{figure}

		Circuits with appropriate gate sets (such as that of \Cref{fig:example-circuit}) are universal for unitary computation \cite{nielsen_chuang_2010}. From there, one might then wonder how to take this formalism further, and describe operations that transform unitaries into unitaries. The naive approach would be to simply add ``holes'' to a standard circuit, and allow the user to plug any circuit with the right size into such holes. This is already done implicitly in many quantum algorithm, like Grover's algorithm (\Cref{fig:grover}) that requires an oracle.
		
		Such circuits with holes obey the laws laid out by quantum theory for higher order processes, especially in that, if the holes are filled with unitaries, then the result is obviously also unitary. However, they are not universal. Consider the following higher order algorithm. Given two unitaries $U$ and $V$ each acting on a single qubit, we can create the unitary $\SWITCH(U,V)$ \cite{switch} that acts on two qubits, where the first decides the order in which $U$ and $V$ are to be composed to act on the second, as described as follows:
		\begin{equation}\label{eq:qsw}\begin{cases}
			\SWITCH(U,V)(\ket 0\otimes \ket x) = \ket 0\otimes (VU \ket x)\\
			\SWITCH(U,V)(\ket 1\otimes \ket x) = \ket 1\otimes (UV \ket x)
		\end{cases}\end{equation}
  This operation can be regarded as a higher-order controlled operation. The effect of this operation is to place \emph{executions} in superposition.
  
		\Cref{fig:SwitchCirc} presents a circuit with the functional specification presented in \Cref{eq:qsw}. The controlled unitaries ensure that $U$ and $V$ are composed depending on the value of the control qubit. However, this representation of $\SWITCH$ is problematic from a theoretic point of view. Indeed, quantum theory requires higher order processes to also be linear in their higher order input. While this representation of $\SWITCH$ is in practice linear, in general circuits with two holes that must be filled with the same unitary are not.
		
		In fact, it is proven it is impossible to represent $\SWITCH$ as a circuit with holes using only two holes \cite{switch}. Moreover, $\SWITCH$ has been realized experimentally \cite{Goswani2018}, confirming that the circuit model is indeed too narrow. This is due to the ordering of $U$ and $V$ being dependent of the value of quantum data. In fact, $\SWITCH$ is an example of a wider class of processes, indefinite causal orders (ICOs), where the dependences between higher order input depend on quantum data. Other ICOs include the Grenoble process~\cite{grenoble} and the Lugano process~\cite{Vanrietvelde2025consistentcircuits} These ICOs are impossible to represent within the classic circuit model due to this challenging ordering \cite{switch}.
		
		Existing graphical languages \cite{Coecke2011,Clement2020,MW,Vanrietvelde2025consistentcircuits} can represent $\SWITCH$ and other indefinite causal orders. However, they tend to be too expressive: for instance, the ZX calculus and the Many Worlds calculus \cite{MW} also capture physically unrealistic processes, where data can be interpreted to travel back in time. To address this issue, Routed Circuits \cite{Vanrietvelde2025consistentcircuits} introduce a framework to check physical validity, but said framework is unwieldy, and rejects some semantically sound processes. The PBS-calculus \cite{Clement2020}, by construction, only represents physical processes that include indefinite causal orders. However, it only axiomatises the control part of the coherent control, and to do so requires a strict separation between the control and target systems. An alternative approach based on linear logic is the Caus construction~\cite{caus}, but the focus is the representation of general non-causal execution: the approach is more general than quantum computation, for example by including non-normalized processes, and it  does not come with an operational semantics.
    
    Finally, syntax-based approaches have also been explored to represent the purely unitary aspect of quantum computation, using either first-order approaches~\cite{grattage05functional,hainry:hal-05419254,barsse_et_al:LIPIcs.LICS.2026.14}, or lambda-calculus~\cite{diaz-caro2019realizability,hirata_et_al:LIPIcs.LICS.2026.57} or dedicated structures such as pattern-matching~\cite{sabry2018symmetric,iflet}. However, these works focus on expressivity from the perspective of the programmer, and in general do not explore the capabilities of quantum ICO.

\section{\logicsymb: a Logic Dedicated to ICO}
	\subsection{Terms}
	
	To determine how terms of \logicsymb\ are syntactically defined, we will consider how we wish to handle quantum data and processes. In keeping with our objective of safety over expressivity, we make one observation: it is useful to keep data and processes conceptually separate. Even though quantum theory allows us to view processes as states through the Choi-Jamiołkowski isomorphism \cite{jamiolkowski1972}, such states are typically not normalised, and still in a separate class from states representing quantum data. Furthermore, while freely allowing processes to be entangled or in superposition could lead to a richer system, we will see this degree of expressivity is not necessary to study most known higher order processes. In a more practical sense, our focus on unitaries requires any transformation of quantum data to be reversible, but the fundamental higher order operation of application is fundamentally non-reversible. Therefore, we strictly separate our terms into two layers: \emph{ground terms} ($A, B, C,...$) will be terms that represent quantum data, and \emph{higher terms} ($F, G, H,...$) will be terms that represent processes.
	
	To build higher terms, we will use Girard's Multiplicative Additive Linear Logic \cite{LL} as a base. Its usefulness as a starting point to describe quantum data and processes is well known, and leveraged in~\cite{MW,caus}. We deviate slightly by using the intuitionistic version of Multiplicative Additive Linear Logic, \IMALL, since intuitionistic logics are better adapted for type systems, following the Curry Howard correspondence. In particular, we want to study three basic kinds of composition of types. Given a term $F$ and a type $G$, we can refer to the linear maps between the values of $F$ and those of $G$, denoted $F \multimap G$. We can also refer to a pair composed of a process of type $F$ and a process of type $G$ by $F \otimes G$. Finally, we might refer to a process that behaves as $F$, or as $G$, but is only a single process (like a block diagonal unitary). We will denote these processes $F \with G$. Contrary to \IMALL, we will not include an $\oplus$ connective for an uncontrolled sum type over higher terms. Finally, we include a connective not found in Linear Logic to link higher terms to ground terms. If $A$ and $B$ are ground terms, $A \uarrow B$ is a higher term that represents unitaries between $A$ and $B$. 
	
	Ground terms are also derived from linear logic's connectives. We include the atoms $\unit$, representing $\bC$, and a set of other literals $\mathfrak A$ which will be used to represent states of arbitrary Hilbert spaces. $A \oplus B$ will represent superpositions between $A$ and $B$. For example, $\unit \oplus \unit$ represents qubits. For simplicity, in the following we may denote the $n$-fold sum of $\unit$ as $\underline{n} := \overbrace{\unit\oplus...\oplus\unit}^n$, so the qubit type may be written $\underline2$. Ground types can also be tensored, though their tensor allows entanglement. As such, it will be denoted $A \boxtimes B$, to distinguish it from the higher order tensor that only allows separate pairs. Note that under this grammar, the same underlying Hilbert space can be represented in different ways: for example $\unit \oplus \unit \oplus \unit \oplus \unit$ and $(\unit \oplus \unit) \boxtimes (\unit \oplus \unit)$. The full grammar is given in \Cref{fig:termGrammar}.

	\begin{figure}[h]
	\centering
	\fbox{
		$\quad\begin{aligned}
			A,B&:=\unit\mid a \mid A \boxtimes B \mid A \oplus B\\
			F,G&:=A \uarrow B \mid F \otimes G \mid F \multimap G \mid F \with G\\
		\end{aligned}\quad$
	}
	\caption{Grammar for terms of \logicsymb, where $a \in \mathfrak A$ a set of literals.}
	\label{fig:termGrammar}
	\end{figure}
	
	\begin{example}[The Quantum Switch]
		$\SWITCH$\ over a data type $A$ would have type
		\[((A \uarrow A) \otimes (A \uarrow A)) \multimap (((\unit \oplus \unit)\boxtimes A) \uarrow ((\unit \oplus \unit)\boxtimes A))\]
		
		It is a higher order process that takes two unitaries of $A$ to itself: $(A \uarrow A) \otimes (A \uarrow A)$, to a single unitary from a qubit wire tensored with an $A$ wire to the same space, $((\unit \oplus \unit)\boxtimes A) \uarrow ((\unit \oplus \unit)\boxtimes A)$.
	\end{example}
	
	\begin{example}[Controlled Unitaries]
		Normally, the process that transforms an unitary into its controlled version is not linear. However, it can be made linear by modifying the representation of the unitary so that it also describes its behavior on a vacuum state, and guarantees that if the process is not in use, it does not interfere with the rest of the computation~\cite{Chiribella_2019}. This is expressed by the connector $\with$, composing its ``on'' behavior of the unitary as $A \uarrow A$, and its ``off'' behavior over the one-dimensional vacuum state as $\unit \uarrow \unit$, and we obtain this type for the controlled unitary:
		\[((A \uarrow A) \with (\unit \uarrow \unit)) \multimap  ((\unit \oplus \unit)\boxtimes A) \uarrow ((\unit \oplus \unit)\boxtimes A)\]
	\end{example}
	
	Formally defining the interpretation of terms of \logicsymb\ presents a slight challenge. Usually, higher order quantum maps are described in the mixed state formalism as completely complete positivity preserving maps~\cite{grenoble}, but the presence of more fine grained types like $\with$, and our focus on pure state computing makes lifting this definition difficult. To remain within the framework of linear algebra, we adopt a laxer definition for our values:
	
	\begin{definition}
		\label{def:interp}
		Let $T$ a term of \logicsymb. We define its \emph{space} $\cV_T$ and its set of \emph{values} $\sem T$ inductively as follows, where $A, B$ are ground terms, $F, G$ are higher terms:
		\begin{itemize}
			\item For each $a \in \mathfrak A$, we fix $\cH_a$ a finite dimentional Hilbert space.
			\item $\cV_{\unit} = \bC$, $\cV_{a} = \cH_a$, $\cV_{A \boxtimes B} = \cV_A \otimes \cV_B$, $\cV_{A \oplus B} = \cV_A \oplus \cV_B$.
			\item $\sem A = \{\ket x \in \cV_A ; \braket{x\mid x} = 1 \}$.
			\item $\cV_{A \uarrow B} = \mathrm{Lin}(\cV_A, \cV_B)$, $\sem{A \uarrow B} = \{f \in \cV_{A \uarrow B}; f^\dag f = \Id{\cV_A}, ff^\dag = \Id{\cV_B}\}$.
			\item $\cV_{F \otimes G} = \cV_F \otimes \cV_G$, $\sem{F \otimes G} = \{f \otimes g; f \in \sem F, g \in \sem G\}$.
			\item $\cV_{F \multimap G} = \mathrm{Lin}(\cV_F, \cV_G)$, $\sem{F \multimap G} = \{s \in \cV_{F \multimap G}; \forall f \in \sem F, s(f) \in \sem G\}$.
			\item $\cV_{F \with G} = \cV_F \oplus \cV_G$, $\sem{F \with G} = \{(f,g); f \in \sem F, g \in \sem G\}$.
		\end{itemize}
		
		For convenience when we will define the interpretation of proofs, we also extend $\cV_-$ and $\sem -$ to ordered multisets of ground terms and ordered multisets of higher terms:
		\begin{itemize}
			\item In the context of ground terms, $\cV_\emptyset = \bC$, and $\sem \emptyset = \{z\in\bC; |z| = 1\}$.
			\item $\cV_{A, \Gamma} = \cV_A \otimes \cV_\Gamma$ if $\Gamma \neq \emptyset$
			\item In the context of higher terms, $\cV_\emptyset = \bC$ and $\sem \emptyset = \{1\}$
			\item $\cV_{F, \Phi} = \cV_F \otimes \cV_\Phi$, $\sem{F, \Phi} = \{f \otimes \varphi; f \in \sem F, \varphi \in \sem \Phi\}$ if $\Phi \neq \emptyset$.
		\end{itemize}
	\end{definition}
	
	\subsection{Proof Structure}

		Like terms, proofs of \logicsymb\ will be divided in two fragments, the $\lStile$ fragment describing proofs that have a higher term as a conclusion, and the $\uStile$ fragment describing proofs that have a ground term as a conclusion. The $\lStile$ fragment has a typical construction, where sequents are of the form $\Phi \lStile F$, where $\Phi$ is an ordered multiset of higher terms, and $F$ is a higher term, but $\uStile$ sequents are considerably more complex, due to the more demanding nature of unitaries.
		
		In Linear Logic, the \rOpR\ rule allows to discard any of the two sides of the conclusion if its topmost connective is $\oplus$, reminiscent of an injection. Dually, the \rOpL\ takes a hypothesis of the form $A \oplus B$, and splits the proof into two subproofs, one using $A$ and one using $B$, and the subproofs become independent. This behavior makes guaranteeing unitarity impossible: injection is inherently not surjective, and \rOpL\ would have us sum two processes without guaranteeing any kind of orthogonality that would conserve unitarity. We resolve this by introducing a notion of \emph{worlds} of hypotheses, separated by pipes ($\mid$), to represent the superposition of executions a process might find itself in. As an example, \cref{fig:Xproof} is a simplified derivation representing the $X$ gate. At the root, there is only a single world, containing the term $\unit_{\ket 0} \oplus \unit_{\ket 1}$. That world is then split in two worlds by the \rOpL\ rule, instead of creating two independent subproofs. These worlds are then redistributed to each side of the output by \rOpR, after being permuted by \rExchUnW\ so that $\ket 0$ maps to $\ket 1$ and vice versa. Unlike Linear Logic, it is \rOpR\ that is a binary rule, representing the fact that the direct sum of two unitaries remains an unitary, and \rOpL\ that is an unary rule.
		
		\begin{figure}[h]
		\begin{center}
		\fbox{
		\begin{prooftree}
			\infer0[\rAx]{\unit_{\ket 1} \uStile \unit_{\ket 0}}
			
			\infer0[\rAx]{\unit_{\ket 0} \uStile \unit_{\ket 1}}
			
			\infer2[\rOpR]{\unit_{\ket 1} \mid \unit_{\ket 0} \uStile \unit_{\ket 0} \oplus \unit_{\ket 1}}
			
			\infer1[\rExchUnW]{\unit_{\ket 0} \mid \unit_{\ket 1} \uStile \unit_{\ket 0} \oplus \unit_{\ket 1}}
			
			\infer1[\rOpL]{\unit_{\ket 0} \oplus \unit_{\ket 1} \uStile \unit_{\ket 0} \oplus \unit_{\ket 1}}
		\end{prooftree}}
		\end{center}
		
		\caption{Proof representing the $X$ gate. Subscripts have been added for clarity.}
		\label{fig:Xproof}
		\end{figure}
		
		However, by introducing worlds to solve issues of orthogonality in subproofs, we have introduced a new difficulty to depict higher order, where higher terms and ground terms have to coexist in a $\uStile$ sequent. Consider the naive sequent $A \uarrow A, A \oplus A \uStile A \oplus A$. If we were to simply apply \rOpL, we would obtain the sequent $(A \uarrow A, A )\mid (A \uarrow A, A) \uStile C$. In this second sequent, there is no witness to the fact that the two instances of $A \uarrow A$ implicitly represent the same object. The solution is to decorate all higher terms in a $\uStile$ sequent with a label, to witness the relationship between higher terms between worlds.
		
		\begin{definition}[Labeled Terms]
			Let $\mathfrak L$ be a set of label literals, distinct from $\mathfrak A$ the set of term literals. We define the set of labels as:
			\[\alpha, \beta := \ell \mid \otimes_1 \alpha \mid \otimes_2 \alpha \mid \oplus_1 \alpha \mid \oplus_2 \alpha \]
			where $\ell \in \mathfrak L$.
			\emph{Labeled terms} are pairs composed of a higher term $F$ and a label $\alpha$. They are denoted $F^\alpha$ in general, or $F \stackrel{\alpha}{\uarrow} G$, or $F \stackrel{\alpha}{\multimap} G$. A multiset of labeled higher terms will usually be denoted $\widehat \Phi$ or $\widehat \Psi$.
			We also define the unlabeling operation $\unlab$, that removes all labels from a multiset of labeled higher terms.		
			If $\Phi$ is an ordered multiset of higher terms, and $\widehat \Psi$ is an ordered multiset of labeled higher terms, we write $\Phi \labrel \widehat \Psi$ if $\Phi = \unlab(\widehat \Psi)$ and each label in $\widehat \Psi$ is a distinct literal.
		\end{definition}
		
    \begin{convention}
      For clarity, we separate the higher terms from ground terms in a world by a semicolon. As an example of how labels are manipulated by the rules of \logicsymb, consider the previous naive fragment. With labels, it would be rendered as this proof fragment:
      \begin{center}
        \begin{prooftree}
          \hypo{A \stackrel{\alpha}{\uarrow} A; A \mid A \stackrel{\alpha}{\uarrow} A; A \uStile C}
          \infer1[\rOpL]{A \stackrel{\alpha}{\uarrow} A; A \oplus A \uStile C}
        \end{prooftree}
      \end{center}
      Now keeping track of the origin of the two occurrences of $A\uarrow A$ at the top.
		\end{convention}
		
    The exact behavior of labels is described by the formal rules of \logicsymb\ in \Cref{fig:UnitRules2}.

		Worlds can now be formally defined.
		
		\begin{definition}[Worlds and world sets]
			A \emph{world} $\omega = (\widehat \Phi; \Gamma)$ is a pair composed of an ordered multiset of labeled higher terms $\widehat \Phi$, and an ordered multiset of ground terms $\Gamma$.\\
			A \emph{world set} $\Omega$ is an indexed, finite sequence of worlds.\\
			If $\Omega_1$ and $\Omega_2$ are world sets, $\Omega_1 \mid \Omega_2$ is defined as their concatenation.\\
			We define the product of two world sets as $(\widehat {\Phi_i}; \Gamma_i)_{i \in I} \times (\widehat {\Psi_j}; \Delta_j)_{j \in J} := (\widehat{\Phi_i}, \widehat{\Psi_j}; \Gamma_i, \Delta_j)_{ij \in I \times J}$.
		\end{definition}
		
		Worlds present a final difficulty with higher order. To deal with arbitrarily high order, we need to handle $\multimap$ terms within the $\uStile$ fragment. In particular, we need to specify how labels are to be used when a new higher term is generated as a result of a rule. In keeping with the intuition that two terms share the same label if they are semantically equal, whenever a term is created by applying a $\multimap$ rule, it must be either given a fresh label, or be generated by the same terms, with the same labels, as another, existing label. This is formalized by the concept of promises:
		
		\begin{definition}[Promises]
			A promise is a tuple $(\widehat \Phi, \pi_L, (F \multimap G)^\alpha, G^\beta)$ where:
			\begin{itemize}
				\item $\widehat \Phi$ is a labelled ordered multiset of higher terms;
				\item $\pi_L$ is a proof of $\unlab(\widehat \Phi) \lStile F$;
				\item $\alpha, \beta$ are a labels that do not appear in $\widehat \Phi$.
			\end{itemize}
			$G^\beta$ is called the \emph{output term} of the promise.
			
			A \emph{promise set} $P$ is a set of promises such that, for each label $\beta$, there is at most one $G$ such that there exists a promise in $P$ with output $G^\beta$, and no two promises share the same output term. For example, a promise set cannot contain both a a promise with output $G^\alpha$ and a promise with output $H^\alpha$.
		\end{definition}
		
		\begin{definition}[$\uStile$ sequents]
			A $\uStile$ sequent $P:\Omega \uStile C$ is a triple consisting of a promise set $P$, a non-empty set of worlds $\Omega$ and a ground term $C$, the conclusion.
		\end{definition}
		
		\begin{example}
                  An example of a $\uStile$ sequent is as follow.s.
                  \begin{center}
                    \scalebox{.8}{$\underbrace{\left(F^\alpha, \begin{prooftree}\infer0{F \lStile F}\end{prooftree}, F \stackrel{\beta}{\multimap} ((a \oplus a) \uarrow b), (a \oplus a) \stackrel \gamma \uarrow b\right)}_{\text{promise set}}: \underbrace{((a \oplus a) \stackrel \gamma \uarrow b; a) \mid (F \stackrel{\beta}{\multimap} ((a \oplus a) \uarrow b), F^\alpha; a)}_{\text{world set}} \uStile b$}
                  \end{center}
		\end{example}
		
		For labels to fulfil their role of witnesses of some sort of equality between worlds, they need to be coherent with each other. Intuitively, if $F^{\oplus_1 \alpha}$ appears in the sequent, it must imply any higher term with the label $\alpha$ to be of the form $(F \with G)^\alpha$.
		
		\begin{definition}[Consistent Labeling]
			Let $P: \Omega \uStile C$ a sequent.
			
			We say that $P: \Omega \uStile C$ is \emph{consistently labeled} if, for all labels $\alpha$, and all label prefixes $p$, if $F^\alpha$ and $G^{p \cdot \alpha}$ appear in the sequent, then $G = {\downarrow_p} F$, with $\downarrow_p$ a downcasting operation defined as follows:
			\begin{itemize}
				\item ${\downarrow_{\varepsilon}} F = F$, where $\varepsilon$ is the empty prefix.
				\item ${\downarrow_{p\cdot \otimes_i}}(F_1 \otimes F_2) = {\downarrow_p} F_i$
				\item ${\downarrow_{p\cdot \oplus_i}}(F_1 \with F_2) = {\downarrow_p} F_i$.
			\end{itemize}
			We also extend the use of $\downarrow_p$ to values of $\sem F$. For instance, ${\downarrow_{\oplus_2 \otimes_1}}((f_1, f_2) \otimes g) = f_2$.
		\end{definition}
		
		We will now only consider sequents that are consistently labeled.

		\subsection{Rules}
		From this, we can define a \logicsymb\ proof in the usual way: as a tree of sequents inductively constructed from a set of rules.
		
		\begin{figure}[t]
      \begin{subfigure}[b]{.56\textwidth}
			\scalebox{.6}{\fbox{\qquad
				$\addtolength{\jot}{.75em}
				\begin{aligned}
				\begin{prooftree}
					\hypo{\Phi, G, F, \Psi \lStile H}
					\infer1[\rExchLin]{\Phi, F, G, \Psi \lStile H}
				\end{prooftree}
				& &
				\begin{prooftree}
					\hypo{\Phi \labrel \widehat \Phi}
					\hypo{\emptyset:(\widehat \Phi; A) \uStile B}
					\infer2[\rUnitRight]{\Phi \lStile A \uarrow B}
				\end{prooftree} \\
				\begin{prooftree}
					\hypo{F, G, \Phi \lStile H}
					\infer1[\rOtLinL]{F \otimes G, \Phi \lStile H}
				\end{prooftree}
				& &
				\begin{prooftree}
					\hypo{\Phi \lStile F}
					\hypo{\Psi \lStile G}
					\infer2[\rOtLinR]{\Phi, \Psi \lStile F \otimes G}
				\end{prooftree}\\
				\begin{prooftree}
					\hypo{\Phi \lStile F}
					\hypo{G, \Psi \lStile H}
					\infer2[\rArLinL]{\Phi, F \multimap G, \Psi \lStile H}
				\end{prooftree}
				& &
				\begin{prooftree}
					\hypo{F, \Phi \lStile G}
					\infer1[\rArLinR]{\Phi \lStile F \multimap G}
				\end{prooftree}\\
				\begin{prooftree}
					\hypo{F, \Phi \lStile H}
					\infer1[\rWiLinL 1]{F \with G, \Phi \lStile H}
				\end{prooftree}
				& &
				\begin{prooftree}
					\hypo{G, \Phi \lStile H}
					\infer1[\rWiLinL 2]{F \with G, \Phi \lStile H}
				\end{prooftree} \\
				\begin{prooftree}
					\hypo{\Phi \lStile F}
					\hypo{\Phi \lStile G}
					\infer2[\rWiLinR]{\Phi \lStile F \with G}
				\end{prooftree}
				& &
				\begin{prooftree}
					\hypo{\Phi \lStile F}
					\hypo{F, \Psi \lStile H}
					\infer2[\rCutLin]{\Phi, \Psi \lStile H}
				\end{prooftree}
				\end{aligned}$\qquad
			}}
			\caption{Rules}
			\label{fig:LinRules}
    \end{subfigure}
    \begin{subfigure}[b]{.435\textwidth}

			\scalebox{.6}{\fbox{
          $\def\mynl{\\[1ex]}
          \begin{array}{l}
            \sem{\text{\rExchLin}} = \sem {\pi'} \circ (\Id{\Phi} \otimes \sigma_{F, G} \otimes \Id{\Psi})
            \mynl
            \sem{\text{\rUnitRight}} = \sem{\pi'}^{(1)}
            \mynl
            \sem{\text{\rOtLinL}} = \sem{\pi'}
            \mynl
            \sem{\text{\rOtLinR}} = \sem{\pi_1} \otimes \sem{\pi_2}
            \mynl
            \sem{\text{\rArLinL}} = \sem{\pi_2} \circ ((\app_{F, G} \circ (\sem {\pi_1} \otimes \Id{F \multimap G}))\otimes \Id{\Psi})
            \mynl
            \sem{\text{\rArLinR}} = \varphi \mapsto (f \mapsto \sem{\pi'}(f \otimes \varphi))
            \mynl
            \sem{\text{\rWiLinL x}} = \sem{\pi'}\circ(\proj{F_1, F_2}{x} \otimes \Id{\Phi})
            \mynl
            \sem{\text{\rWiLinR}} = \varphi \mapsto (\sem{\pi_1}(\varphi), \sem{\pi_2}(\varphi))
            \mynl
            \sem{\text{\rCutLin}} = \sem{\pi_2}\circ(\sem {\pi_1} \otimes \Id{\Psi})
          \end{array}$
      }} 
      \caption{Semantics (See Def.~\ref{def:LinSem})}
			\label{fig:LinSem}
        \end{subfigure}
        \caption{The $\lStile$ fragment}
			\label{fig:AllOfLin}
  \end{figure}
		The $\lStile$ fragment in \Cref{fig:LinRules} is essentially a fragment of \IMALL, where the $\oplus$ connective has been removed, and with the addition of the \rUnitRight\ rule. Notice the lack of axiom rule. Like in linear logic, it is derivable as long as it exists for literals in $\uStile$.
		\begin{figure}[t]
		\begin{subfigure}[b]{.56\textwidth}
      \scalebox{.6}{\fbox{\parbox{12cm}{
					\centering
					$\addtolength{\jot}{0.4em}\begin{aligned}
					\begin{prooftree}
						\infer0[\rAx]{P: (\emptyset; A) \uStile A}
					\end{prooftree}
					& \hspace*{0em} &
					\begin{prooftree}
						\hypo{P: \Omega \mid \omega' \mid \omega \mid \Omega' \uStile C}
						\infer1[\rExchUnW]{P: \Omega \mid \omega \mid \omega' \mid \Omega' \uStile C}
					\end{prooftree} 
\\
					\begin{prooftree}
						\hypo{P: (\widehat \Phi; \Gamma) \mid \Omega \uStile C}
						\infer1[\rOneL]{P: (\widehat \Phi; \unit, \Gamma) \mid \Omega \uStile C}
					\end{prooftree}
					& &
					\begin{prooftree}
						\hypo{P: (\widehat \Phi; \Gamma, B, A, \Delta) \mid \Omega \uStile C}
						\infer1[\rExchUnG]{P: (\widehat \Phi; \Gamma, A, B, \Delta) \mid \Omega \uStile C}
					\end{prooftree}
\\
					\begin{prooftree}
						\hypo{P:(\widehat \Phi; A, B, \Gamma) \mid \Omega \uStile C}
						\infer1[\rBtL]{P:(\widehat \Phi; A \boxtimes B, \Gamma) \mid \Omega \uStile C}
					\end{prooftree}
					& &
					\begin{prooftree}
						\hypo{P:\Omega_1 \uStile A}
						\hypo{P:\Omega_2 \uStile B}
						\infer2[\rBtR]{P:\Omega_1 \times \Omega_2 \uStile A \boxtimes B}
					\end{prooftree}
\\
					\begin{prooftree}
						\hypo{P:(\widehat \Phi; A, \Gamma) \mid (\widehat \Phi; B, \Gamma) \mid \Omega \uStile C}
						\infer1[\rOpL]{P:(\widehat \Phi; A \oplus B, \Gamma) \mid \Omega \uStile C}
					\end{prooftree}
					& &
					\begin{prooftree}
						\hypo{P: \Omega_1 \uStile A}
						\hypo{P: \Omega_2 \uStile B}
						\infer2[\rOpR]{P: \Omega_1 \mid \Omega_2 \uStile A \oplus B}
					\end{prooftree}
\\
            \begin{prooftree}
						\infer0[\rOneR]{P:(\emptyset; \emptyset) \uStile \unit}
					\end{prooftree}
\quad
					\begin{prooftree}
						\hypo{P:\Omega \uStile C}
						\infer1[\rPhTh]{P:\Omega \uStile C}
					\end{prooftree}
					& &
					\begin{prooftree}
						\hypo{P:(\widehat \Phi; (\unit \oplus \unit), \Gamma) \mid \Omega \uStile C}
						\infer1[\rRotTh]{P:(\widehat \Phi; \Gamma)\mid (\widehat \Phi; \Gamma) \mid \Omega \uStile C}
					\end{prooftree}
				\end{aligned}$
\\[0.5em]
				$\addtolength{\jot}{0.4em}\begin{aligned}
				\begin{prooftree}
					\hypo{P:(\widehat{\Phi_i}; \Gamma_i)_{i \in I} \uStile A}
					\hypo{P:(\widehat \Psi; B, \Delta) \mid \Omega \uStile C}
					
					\infer2[\rUnitUn]{(P:\widehat{\Phi_i}, A \stackrel{\alpha}{\uarrow} B, \widehat \Psi; \Gamma_i, \Delta)_{i \in I} \mid \Omega \uStile C}
				\end{prooftree}\\
				\begin{prooftree}
					\hypo{P:(\widehat {\Phi_i}; \Gamma_i)_{i \in I} \uStile A}
					\hypo{P:(\widehat \Psi; A, \Delta) \mid \Omega \uStile C}
					\infer2[\rCutUnG]{P:(\widehat {\Phi_i}, \widehat \Psi; \Gamma_i, \Delta)_{i \in I} \mid \Omega \uStile C}
				\end{prooftree}
			\end{aligned}$}}}
			\caption{Fragment for ground terms}
			\label{fig:UnitRules1}
		\end{subfigure}
		\begin{subfigure}[b]{.435\textwidth}
			\scalebox{.6}{\fbox{
				\parbox{9.5cm}{\centering
				$\addtolength{\jot}{0.4em}\begin{aligned}
					\begin{prooftree}
						\hypo{P: (\widehat \Phi, G^\beta, F^\alpha, \widehat \Psi; \Gamma) \mid \Omega \uStile C}
						\infer1[\rExchUnH]{P: (\widehat \Phi, F^\alpha, G^\beta, \widehat \Psi; \Gamma) \mid \Omega \uStile C}
					\end{prooftree}
					\\ 
					\begin{prooftree}
						\hypo{P:(F^{\otimes_1 \alpha}, G^{\otimes_2 \alpha}, \widehat \Phi; \Gamma) \mid \Omega \uStile C}
						\infer1[\rOtUn]{P:((F\otimes G)^\alpha, \widehat \Phi; \Gamma)}
					\end{prooftree} \\
					\begin{prooftree}
						\hypo{P:(F^{\oplus_1 \alpha}, \widehat \Phi; \Gamma) \mid \Omega \uStile C}
						\infer1[\rWiUn{1}]{P:((F\with G)^\alpha, \widehat \Phi; \Gamma) \mid \Omega \uStile C}
					\end{prooftree}
					\\ 
					\begin{prooftree}
						\hypo{P:(G^{(\oplus_2\alpha)}, \widehat \Phi; \Gamma) \mid \Omega \uStile C}
						\infer1[\rWiUn{2}]{P:((F \with G)^\alpha, \widehat \Phi; \Gamma) \mid \Omega \uStile C}
					\end{prooftree}
				\end{aligned}$\\[.4em]
				$\begin{aligned}
				\begin{prooftree}

          \hypo{\unlab(\widehat \Phi) \lStile F}
					\hypo{P \sqcup p_L: (G^\beta, \widehat \Psi; \Gamma) \mid \Omega \uStile C}
					\hypo{\beta \ \text{fresh}}
					\infer3[\rArUnNew]{P:(\widehat \Phi, F \stackrel{\alpha}{\multimap} G, \widehat \Psi; \Gamma) \mid \Omega \uStile C}
				\end{prooftree}\\
				\begin{prooftree}
					\hypo{P \sqcup p_L: (G^\beta, \widehat \Psi; \Gamma) \mid \Omega \uStile C}
					\infer1[\rArUnPro]{P\sqcup p_L:(\widehat \Phi, F \stackrel{\alpha}{\multimap} G, \widehat \Psi; \Gamma) \mid \Omega \uStile C}
				\end{prooftree}\\
				\begin{prooftree}
					\hypo{\unlab(\widehat \Phi) \lStile F}
					\hypo{P:(F^\alpha, \widehat{\Psi_i}; \Gamma_i)_{i \in I} \mid \Omega \uStile C}
					\hypo{\alpha \ \text{fresh}}
					\infer3[\rCutUnH]{P:(\widehat \Phi, \widehat{\Psi_i}; \Gamma_i)_{i \in I} \mid \Omega \uStile C}
				\end{prooftree}
				\end{aligned}$}}}
			\caption{Fragment for higher terms}
			\label{fig:UnitRules2}
		\end{subfigure}
    \caption{Rules for $\uStile$. In Fig.~\ref{fig:UnitRules2}, $p_L = (\widehat \Phi, \pi_L, \alpha, \beta)$.}
    \label{fig:UnitRules}
  \end{figure}
  
The rules of the $\uStile$ fragment in \Cref{fig:UnitRules1} and \Cref{fig:UnitRules2} describe the handling of ground terms and of worlds. Worlds can be permuted exactly like terms, as described by rule \rExchUnW. Worlds are created by the \rOpL\ rule, which splits a superposition of quantum data into two distinct worlds. This is what enables us to render quantum control, by applying different rules to distinct worlds.
		
		\rBtR\ is a generalisation of the fact that the tensor product of two unitaries is a unitary, and \rOpR\ that a block diagonal of unitaries is unitary. Contrary to \IMALL, \rOpR\ is a binary rule, to take into account that unitaries must be invertible.
				
		\rPhTh\ and \rRotTh\ are necessary for \logicsymb\ to be complete for unitaries. They represent the phase change, applied to the first world, and rotation, scrambling the first two worlds together. They are in a sense ``free'' unitaries that we can generate on demand, provided we know the angle used. They are both families of rules indexed by an angle $\theta$, fixed in the proof.
		
		\rUnitUn\ describes the application of a unitary during our procedure, consuming it as a higher order input. Since we aim to be reversible, the worlds used to create the unitary's input must become indistinguishable after the unitary is employed.
		
		\rCutUnG\ is much the same, combining existing data together without needing a unitary to apply. It is only one of the two cut rules of the $\uStile$ fragment, the other being described below in \Cref{fig:UnitRules2}.

		\rArLinL\ has been split in two rules to interact properly with labels: \rArUnNew\ creates an entirely new label, and adds a promise to the promise set to allow new creation of that label under the right condition. \rArUnPro\ makes use of that promise to create a new instance of that label.
		
		\rCutUnH\ is quite different from the arrow rule it is supposed to be associated with. It is not split in two, and affects multiple worlds at once to create the new higher order term (though these worlds do not interact between themselves). We could indeed have it behave like \rArUnNew\ and \rArUnPro, and adding cut proofs to promises. However, this approach renders cut elimination much more difficult to state: it would be impossible to entirely eliminate cuts from some $\uStile$ proofs where there are cut-rules employing a promise, but where the root sequent already carries terms with the same label. While the intention behind cut elimination could be conserved through a careful restating, to not get lost in highly specific statements of cut-elimination properties, \rCutUnH\ has been made less symmetrical to the $\multimap$ rules.
		
		We may also ask why not fuse \rArUnNew\ and \rArUnPro\ in a single multi-world rule like \rCutUnH. The consequences of this change are more impactful, and this version of \logicsymb\ loses the cut-elimination property due to indefinite causal orders. The ordering between the terms generated by cut rules and existing $\multimap$ terms might present incompatible orderings depending on worlds, which leads to label inconsistencies when we attempt to eliminate cut.

		\begin{example}
      In the proof of $\SWITCH$ shown in \Cref{fig:switchProof}, worlds are split according to the control qubit, and the two input unitaries are applied in one order or the other depending on the world.
		\end{example}

		\begin{figure}[t]
      \def\myrel#1{{\stackon[-.2ex]{$\uarrow$}{\scalebox{.7}{$#1$}}}}
      \centering
    \scalebox{.7}{\begin{prooftree}
			\infer0[\rAx]{A \uStile A}
			\infer0[\rAx]{A \uStile A}
			\infer0[\rAx]{A \uStile A}
			\infer0[\rAx]{A \uStile A}
			
			\infer0[\rAx]{\unit \uStile \unit}
			\infer0[\rAx]{\unit \uStile \unit}
			\infer2[\rOpR]{\unit \mid \unit \uStile \underline 2}
			
			\infer0[\rAx]{A \uStile A}
			
			\infer2[\rBtR]{\unit, A \mid \unit, A \uStile \underline 2 \boxtimes A}
			\infer2[\rUnitUn]{\unit, A \mid A \myrel{f} A; \unit, A \uStile \underline 2 \boxtimes A}
			\infer2[\rUnitUn]{\unit, A \mid A \myrel{f} A, A \myrel{g} A; \unit, A \uStile \underline 2 \boxtimes A}
			\infer2[\rUnitUn]{A \myrel{g} A; \unit, A \mid A \myrel{f} A, A \myrel{g} A; \unit, A \uStile \underline 2 \boxtimes A}
			\infer2[\rUnitUn]{A \myrel{f} A, A \myrel{g} A; \unit, A \mid A \myrel{f}, A \myrel{g} A; \unit, A \uStile \underline 2 \boxtimes A}
			\infer1[\rOpL]{A \myrel{f} A, A \myrel{g} A; \underline 2, A \uStile \underline 2 \boxtimes A}
			\infer1[\rBtL]{A \myrel{f} A, A \myrel{g} A; \underline 2 \boxtimes A \uStile \underline 2 \boxtimes A}
			\infer1[\rUnitRight]{A \uarrow A, A \uarrow A \lStile (\underline 2 \boxtimes A) \uarrow (\underline 2 \boxtimes A)}
			\infer1[\rOtLinL]{(A \uarrow A) \otimes (A \uarrow A) \lStile (\underline 2 \boxtimes A) \uarrow (\underline 2 \boxtimes A)}
			\infer1[\rArLinR]{\lStile ((A \uarrow A) \otimes (A \uarrow A)) \multimap ((\underline 2 \boxtimes A) \uarrow (\underline 2 \boxtimes A))}
		\end{prooftree}}
		\caption{Proof of $\SWITCH$}
		\label{fig:switchProof}
		\end{figure}

\section{Interpretation}
	
		We now define how to interpret \logicsymb\ proofs. The interpretation of the $\lStile$ fragment is straightforward, where the interpretation of proofs is simply linear maps. The multiple-world paradigm of the $\uStile$ requires a more complex approach. Similarly to circuits with holes, interpreting a proof with multiple worlds as a single map implies that some higher order input might not be used in all cases (when a label is present in one world and not another), or that two ``holes'' must be filled with the same value (when a label is present in multiple worlds). This inherently forbids a $\uStile$ proof being seen as a single linear map. Instead, we interpret these proofs as tuples of linear maps, one for each world. The intuition being that from higher order input, these interpretations output isometries that can be recomposed into an unitary.

\subsection{Semantics}

		\begin{notation}
			We often write $T$ instead of $\cV_T$ when the context is clear, like in the case of $\Id T$ instead of $\Id {\cV_T}$.
			We use the following notations for predefined maps. If $V, V', W, W'$ are vector spaces then:
		  The identity over $V$ is written $\Id V$.
		  The swap map is $\sigma_{V, W}: V \otimes W \rightarrow W \otimes V$.
			The application map is $\app_{V, W}: V \otimes \mathrm{Lin}(V, W) \rightarrow W$.
			The left and right injections are respectively $\inj{V_1,V_2}{i}: V_i \rightarrow V_1 \oplus V_2$ for $i \in \{1, 2\}$
		  The  left and right orthogonal projections are respectively $\proj{V_1, V_2}{i}: V_1 \oplus V_2 \rightarrow V_i$ for $i \in \{1, 2\}$.
			The map such that $1 \otimes x \mapsto x$ is denoted with $\lambda_V : \bC \otimes V \rightarrow V$.
			Note that when $V_1$ and $V_2$ are Hilbert spaces, $(\inj{V_1,V_2}{i})^\dag = \proj{V_1, V_2}{i}$.
		\end{notation}
		
		\begin{definition}[Semantics of \logicsymb]
			Let $\pi_L$ a proof of $\Phi \lStile H$, and $\pi_U$ a proof of $P:(\widehat{\Phi_i}; \Gamma_i)_{1 \leq i \leq n} \uStile C$.
			\begin{itemize}
				\item $\sem {\pi_L}$ is a linear map between $\cV_\Phi$ and $\cV_H$.
				\item $\sem {\pi_U}$ is a $n$-uple of linear maps, such that its $i$-th element $\sem{\pi_U}^{(i)}$ is a map between $\cV_{\unlab({\widehat{\Phi_i}})}$ and $\mathrm{Lin}(\cV_{\Gamma_i}, \cV_{C})$. 
			\end{itemize}
			
			They are defined inductively below:
		\end{definition}
		
		\begin{definition}[$\sem \pi$, $\lStile$ fragment]\label{def:LinSem}
			When $\pi$ is a proof of $\Phi \lStile H$, $\sem \pi$ is a linear map between $\cV_\Phi$ and $\cV_H$, defined inductively as shown in Figure~\ref{fig:LinSem}. In the figure, when a rule contains a single hypothesis, the corresponding proof is denoted with $\pi'$. When it contains two hypotheses, their proofs are written respectively with $\pi_1$ and $\pi_2$.
		In the rule (\rUnitRight), recall that $\pi'$ is a $\uStile$ proof, and as such is a tuple with one singular item.
				The interpretation of $\lStile$ proofs follows directly from the interpretation of higher terms given in \cref{def:interp}.
		\end{definition}

\begin{figure}[t]
  \centering
\scalebox{.7}{\begin{minipage}{1.3\textwidth}
			{
        \[\sem{\text{\rAx}} = \left(z \mapsto z \Id A \right)
          \qquad
        \sem{\text{\rOneR}} = \left(z \mapsto z \Id{\bC} \right)
        \qquad
        \sem{\text{\rBtL}} = \sem{\pi'}
        \qquad
        \sem{\text{\rBtR}}^{(ij)}_{\varphi \otimes \psi} = \sem{\pi_1}^{(i)}_\varphi \otimes \sem{\pi_2}^{(j)}_{\psi}
      \]

      \[\sem{\text{\rExchUnG}}^{(i)}_\varphi = \begin{cases*}
				 \sem{\pi'}^{(1)}_\varphi \circ (\Id \Gamma \otimes \sigma_{A, B} \otimes \Id \Delta) & if $i = 1$\\
				\sem{\pi'}^{(i)}_\varphi \quad\text{  otherwise} &
			\end{cases*} \]
			\[\sem{\text{\rOneL}}^{(i)}_\varphi = \begin{cases*} \sem{\pi'}^{(1)}_\varphi \circ \lambda_\Gamma & if $i=1$ \\
			\sem{\pi'}^{(i)}_\varphi & otherwise \end{cases*}
    \qquad\qquad
      \sem{\text{\rPhTh}}^{(i)} =
        \begin{cases}
          e^{i\theta} \sem{\pi'}^{(1)} & \text{if $i = 1$}\\
          \sem{\pi'}^{(i)} & \text{otherwise}
        \end{cases}
      \]
			\[\sem{\begin{prooftree}
					\hypo{\pi'}
					\infer[no rule]1{P:\Omega \mid \omega' \mid \omega \mid \Omega' \uStile C}
					\infer1[\rExchUnW]{P:\Omega \mid \underbrace{\omega}_{j} \mid \underbrace{\omega'}_{j+1} \mid \Omega' \uStile C}
				\end{prooftree}}^{(i)} = \begin{cases*}
				\sem {\pi'}^{(j+1)} & if $i = j$\\
				\sem {\pi'}^{(j)} & if $i = j+1$\\
				\sem {\pi'}^{(i)} & otherwise
			\end{cases*}
    \]
    \[
      \sem{\begin{prooftree}
          \hypo{\pi_1}
          \infer[no rule]1{P:(\widehat{\Phi_i}; \Gamma_i)_{i \in I} \uStile A}
          
          \hypo{\pi_2}
            \infer[no rule]1{P:(\widehat \Psi; B, \Delta) \mid \Omega \uStile C}
            
            \infer2[\rUnitUn]{P:(\widehat{\Phi_i}, A \stackrel{\alpha}{\uarrow} B, \widehat \Psi; \Gamma_i, \Delta)_{i \in I} \mid \Omega \uStile C}
					\end{prooftree}}^{(j)}_{\varphi_j \otimes f \otimes \psi}
				=\begin{cases}
					\sem{\pi_2}^{(1)} \circ \left((f \circ \sem{\pi_1}^{(j)}_{\varphi_j}) \otimes \Id \Delta \right) & \text{if $j \in I$}\\
					\sem{\pi_2}^{(j-|I|+1)}_{\varphi_j \otimes f \otimes \psi}& \text{otherwise}
				\end{cases}
			\]
			\[\sem{\begin{prooftree}
					\hypo{\pi'}
					\infer[no rule]1{P:(\widehat \Phi; A, \Gamma) \mid (\widehat \Phi; B, \Gamma) \mid \Omega \uStile C}
					\infer1[\rOpL]{P:(\widehat \Phi; A \oplus B, \Gamma) \mid \Omega \uStile C}
			\end{prooftree}}^{(i)}_\varphi =
			\begin{cases}
				\begin{array}{@{}l}
					\sem{\pi'}^{(1)}_\varphi \circ (\proj{A,B}1 \otimes \Id \Gamma)\\
					+ \sem{\pi'}^{(2)}_\varphi \circ (\proj{A,B}2 \otimes \Id\Gamma)
				\end{array}
				& \text{if $i = 1$}\\
				\sem{\pi'}^{i+1}_{\varphi} & \text{otherwise}
			\end{cases} \]
			\[\sem{\begin{prooftree}
					\hypo{\pi_1}
					\infer[no rule]1{P:\Omega_1 \uStile A}
					\hypo{\pi_2}
					\infer[no rule]1{P:\Omega_2 \uStile B}
					\infer2[\rOpR]{P:\Omega_1 \mid \Omega_2 \uStile A \oplus B}
			\end{prooftree}}^{(k)}_\varphi =
			\begin{cases}
				 \inj{A, B}{1} \circ \sem{\pi_1}^{(k)}_\varphi & \text{if $1 \leq k \leq n$}\\
				\inj{A, B}{2} \circ \sem{\pi_2}^{(k-n)}_\varphi & \text{if $n < k \leq n+m$}
			\end{cases}\]
			\[\sem{
				\begin{prooftree}
					\hypo{\pi'}
					\infer[no rule]1{P:(\widehat \Phi; (\unit \oplus \unit), \Gamma) \mid \Omega \uStile C}
					\infer1[\rRotTh]{P:(\widehat \Phi; \Gamma)\mid (\widehat \Phi; \Gamma) \mid \Omega \uStile C}
			\end{prooftree}}^{(i)}_\varphi =
			\begin{cases}
				 \sem{\pi'}^{(1)}_\varphi\left(\left(\begin{smallmatrix}
					\cos \theta \\ \sin \theta
				\end{smallmatrix}\right) \otimes \Id \Gamma\right)&\text{if $i = 1$}\\
				\sem{\pi'}^{(1)}_\varphi \left(\left(\begin{smallmatrix}
					-\sin \theta \\ \cos \theta
				\end{smallmatrix}\right) \otimes \Id \Gamma\right)&\text{if $i = 2$}\\
				\sem{\pi'}^{(i-1)}_\varphi&\text{otherwise}
			\end{cases}\]
			\[
        \sem{\begin{prooftree}
            \hypo{\pi_1}
            \infer[no rule]1{P:(\widehat {\Phi_i}; \Gamma_i)_{i \in I} \uStile A}
            
            \hypo{\pi_2}
            \infer[no rule]1{P:(\widehat \Psi; A, \Delta) \mid \Omega \uStile C}
            \infer2[\rCutUnG]{P:(\widehat {\Phi_i}, \widehat \Psi; \Gamma_i, \Delta)_{i \in I} \mid \Omega \uStile C}
					\end{prooftree}}^{(j)}_{\varphi_j \otimes \psi}
				=
				\begin{cases}
					\sem{\pi_2}^{(1)}_\psi \circ( \sem{\pi_1}^{(j)}_{\varphi_j} \otimes \Id \Delta) & \text{if $j \leq |I|$}\\
					\sem{\pi_2}^{(j-|I|+1)}_{\varphi_j \otimes \psi}&\text{if $j > |I|$}
				\end{cases}
			\]}
  \end{minipage}}
  \caption{$\sem \pi$, $\uStile$ fragment}\label{fig:semuStile}
\end{figure}

		We will now define $\sem -$ over $\uStile$ proofs. 
		\begin{definition}[$\sem \pi$, $\uStile$ fragment]
			If $\pi$ is a proof of $(\widehat{\Phi_i}; \Gamma_i)_{i \in I} \uStile C$, $\sem \pi$ is a collection of linear maps of type $\cV_{\Phi_i} \rightarrow (\cV_{\Gamma_i} \rightarrow \cV_C)$.
      The rules are given in Figure~\ref{fig:semuStile}.
			
			Higher order inputs of $\sem \pi^{(i)}$ will be rendered as a subscript: for example, $\sem \pi^{(1)}_{f}$ is a map from a space of ground terms to the space of the conclusion.

			In $\sem{\text{ax}}$, wecall that $\sem{\pi}^{(1)}: \cV_\emptyset \rightarrow (\cV_A \rightarrow \cV_A)$, and that $\cV_\emptyset = \mathbb C$. Therefore, the $z$ variable is needed to keep $\sem \pi^{(1)}$ linear.
      
      Some rules follow a particular convention.
			In $\sem{\text{\rUnitUn}}$, we consider $\varphi_j \in \sem{\unlab(\widehat{\Phi_j})}$, $f \in \sem{A \uarrow B}$ and $\psi \in \sem{\unlab(\widehat \Psi)}$.
      In $\sem{\text{\rBtR}}$ and $\sem{\text{\rOpR}}$, we assume  $\Omega_1 = (\widehat{\Phi_i}, \Gamma_i)_{i \leq 1 \leq n}$ , and $\Omega_2 = (\widehat {\Psi_j}, \Delta_j)_{1 \leq j \leq m}$.
		\end{definition}

\begin{figure}[ht]
			\scalebox{.66}{\begin{minipage}{1.5\textwidth}
			\[\sem{\begin{prooftree}
				\hypo{\pi'}
				\infer[no rule]1{P: (\widehat \Phi, G^\beta, F^\alpha, \widehat \Psi; \Gamma) \mid \Omega \uStile C}
				\infer1[\rExchUnH]{P: (\widehat \Phi, F^\alpha, G^\beta, \widehat \Psi; \Gamma) \mid \Omega \uStile C}
			\end{prooftree}}^{(i)} \!\!\!\!\!=
			\begin{cases}
				\sem{\pi'}^{(1)}\circ(\Id \Phi \otimes \sigma_{F, G} \otimes \Id \Psi) & \text{if $i = 1$}\\
				\sem{\pi'}^{(i)}&\text{otherwise}
			\end{cases}\]
		\[\sem{\begin{prooftree}
			\hypo{\pi'}
			\infer[no rule]1{P:(F^{\otimes_1 \alpha}, G^{\otimes_2 \alpha}, \widehat \Phi; \Gamma) \mid \Omega \uStile C}
			\infer1[\rOtUn]{P:((F\otimes G)^\alpha, \widehat \Phi; \Gamma)}
		\end{prooftree}} = \sem{\pi'}\]
		\[\sem{
		\begin{prooftree}
			\hypo{\pi_L}
			\infer[no rule]1{\unlab(\widehat \Phi) \lStile F}
			\hypo{\pi'}
			\infer[no rule]1{P \sqcup (\widehat \Phi, \pi_L, F \stackrel{\alpha}{\multimap} G, G^\beta): (G^\beta, \widehat \Psi; \Gamma) \mid \Omega \uStile C}
			\infer2[\rArUnNew]{P:(\widehat \Phi, F \stackrel{\alpha}{\multimap} G, \widehat \Psi; \Gamma) \mid \Omega \uStile C}
		\end{prooftree}}^{(i)}
		= \begin{cases}
			\sem{\pi'}^{(1)} \circ (\app_{F,G} \circ (\sem {\pi_L} \otimes \Id{F \multimap G}) \otimes \Id \Psi) & \text{if $i = 1$}\\
			\sem{\pi'}^{(i)} & \text{otherwise}
		\end{cases}
		\]
		\[\sem{\begin{prooftree}
			\hypo{\pi'}
			\infer[no rule]1{P \sqcup (\widehat \Phi, \pi_L, F \stackrel{\alpha}{\multimap} G, G^\beta): (G^\beta, \widehat \Psi; \Gamma) \mid \Omega \uStile C}
			\infer1[\rArUnPro]{P\sqcup (\widehat \Phi, \pi_L, F \stackrel{\alpha}{\multimap} G, G^\beta):(\widehat \Phi, F \stackrel{\alpha}{\multimap} G, \widehat \Psi; \Gamma) \mid \Omega \uStile C}
		\end{prooftree}}^{(i)}
		= \begin{cases}
			\sem{\pi'}^{(1)} \circ (\app_{F,G} \circ (\sem {\pi_L} \otimes \Id{F \multimap G}) \otimes \Id \Psi) & \text{if $i = 1$}\\
			\sem{\pi'}^{(i)} & \text{otherwise}
		\end{cases}
	 \]
		\[\sem{\begin{prooftree}
				\hypo{\pi'}
				\infer[no rule]1{P:(F_x^{\oplus_x \alpha}, \widehat \Phi; \Gamma) \mid \Omega \uStile C}
				\infer1[\rWiUn{x}]{P:((F_1\with G_2)^\alpha, \widehat \Phi; \Gamma) \mid \Omega \uStile C}
		\end{prooftree}}^{(i)} = \left \lbrace \begin{aligned}
			\sem{\pi'}^{(1)} \circ(\proj{F_1, F_2}x \circ \Id \Phi) && \text{if $i = 1$}\\
			\sem{\pi'}^{(i)} && \text{otherwise}
		\end{aligned}\right. \]
		\[\sem{\begin{prooftree}
				\hypo{\pi_1}
				\infer[no rule]1{\Phi \lStile F}
				
				\hypo{\pi_2}
				\infer[no rule]1{P:(F^\alpha, \widehat{\Psi_i}; \Gamma_i)_{i \in I} \mid \Omega \uStile C}
				\infer2[\rCutUnH]{P:(\widehat \Phi, \widehat{\Psi_i}; \Gamma_i)_{i \in I} \mid \Omega \uStile C}
		\end{prooftree}}^{(i)} = \left \lbrace\begin{aligned}
			\sem{\pi_2}^{(i)} \circ (\sem {\pi_1} \otimes \Id {\Psi_i}) && \text{if $i \in I$}\\
			\sem{\pi_2}^{(i)} && \text{otherwise}
		\end{aligned} \right. \]
		\end{minipage}}
\caption{$\sem \pi$, $\uStile$ fragment with higher-order inputs (See Def.~\ref{def:uStileSem2})}
\label{fig:uStileSem2}
\end{figure}

\begin{definition}[$\sem \pi$, $\uStile$ fragment 2]\label{def:uStileSem2}
  The semantics of the rules dealing with higher order input is shown in \cref{fig:uStileSem2}. In particular, $\sem \pi^{(i)} \circ s$ is composing the higher order input, as compared to $\sem \pi^{(i)}_\varphi \circ f$, which would be composing with $f$ acting on quantum data.
  
  We also define $p_L = \left(\widehat \Phi, \pi_L, F \stackrel{\alpha}{\multimap} G, G^\beta\right)$.
  
  Note how $\otimes$ is essentially syntaxic sugar when on the left.
\end{definition}
			
\subsection{Properties}
	For \logicsymb\ to be useful, its semantics must be coherent with the interpretation of types. For the $\lStile$ fragment, this is simply proving that if $\pi$ is a proof of $\Phi \lStile H$, then $\sem \pi \in \sem{\Phi \multimap H}$. Finding the right condition for $\uStile$ proofs requires introducing a notion of pre-unitarity, that ensures all individual $\sem{\pi}^{(i)}$ output isometries can be summed into an unitary when given the right higher order input. In order to define it, we need to formalize how labels restrict the choice of higher order input of the interpretation of a $\uStile$ proof. This is simply formalizing the intuition that labels witness semantic equality.
	
	\begin{definition}[Consistent Valuations]
		Let $P: (\widehat{\Phi_i}; \Gamma_i)_{i \in I} \uStile C$ be a consistently labeled sequent.
		Let $v$ be a map from labeled terms of the sequent to values.
		We also extend it to multisets inductively: $v(\emptyset) = 1$ and $v(F^\alpha, \widehat \Phi) = v(F^\alpha) \otimes v(\widehat \Phi)$ when $\widehat \Phi$ is nonempty.
		We then say $v$ is a \emph{consistent valuation} for this sequent if:
		\begin{enumerate}
			\item For all $F^\alpha$, $v(F^\alpha) \in \sem F$;
			\item If $F^\alpha \in \widehat{\Phi_i}$ and $(\downarrow_p F)^{p\cdot \alpha}$ are present in the sequent, then $v((\downarrow_p F)^{p\cdot \alpha}) = \downarrow_p v(F^\alpha)$;
			\item If $(\widehat \Phi, \pi_L, F \stackrel{\alpha}{\multimap} G, G^\beta) \in P$, then $v(G^\beta) = v(F \stackrel{\alpha}{\multimap} G) (\sem{\pi_L} v(\widehat \Phi))$.
		\end{enumerate}
	\end{definition}
	
	\begin{definition}[Pre-unitarity]
		Let $\pi$ be a proof of $P: (\widehat{\Phi_i}; \Gamma_i)_{i \in I} \uStile C$, a consistently labeled sequent. 
		We say that $\sem \pi$ is a \emph{pre unitary} if, for all consistent valuations $v$ of $\pi$, for all $i,j \in I; i \neq j$ we have:
    \[
			\sem \pi^{(i)\dag}_{v(\widehat{\Phi_i})} \circ \sem{\pi}^{(i)}_{v(\widehat{\Phi_i})} = \Id{\Gamma_i},
      \quad
			\sem \pi^{(i)\dag}_{v(\widehat{\Phi_i})} \circ \sem \pi^{(j)}_{v(\widehat{\Phi_j})} = \zeromap,
      \quad
			\sum_{i \in I} \sem \pi^{(i)}_{v(\widehat{\Phi_i})} \circ \sem \pi^{(i)\dag}_{v(\widehat{\Phi_i})} = \Id C.
	 \]
	\end{definition}

	Using this definition, we can state this coherence theorem:
\begin{restatable}[Coherence of \logicsymb]{theorem}{consistency}
	Let $\pi$ be an \logicsymb\ proof.\\
	If $\pi$ is a proof of $\Phi \lStile H$, then, for all $\varphi \in \sem \Phi$, $\sem \pi (\varphi) \in \sem H$.\\
	If $\pi$ is a proof of $P: (\widehat{\Phi_i}; \Gamma_i)_{i \in I} \uStile C$ then, $\sem{\pi}$ is a pre-unitary.
\end{restatable}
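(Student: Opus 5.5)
The proof goes by simultaneous induction on the height of $\pi$, covering both fragments at once, since \rUnitRight\ links them: a $\lStile$ proof may end with a $\uStile$ subproof and vice versa. For the $\lStile$ fragment we take an arbitrary $\varphi\in\sem\Phi$ and check each rule of \Cref{fig:LinRules} against \Cref{def:interp}.

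The \rExchLin, \rOtLinL, \rOtLinR, \rWiLinL{}, \rWiLinR, \rArLinR\ and \rCutLin\ cases follow directly from the definitions. Elements of $\sem{F\otimes G}$ are pure tensors $f\otimes g$, so a pure tensor input splits into the hypotheses, and $\app$ applied to $\sem{\pi_1}(\varphi)\in\sem F$ and $s\in\sem{F\multimap G}$ lands in $\sem G$ by definition of $\sem{F\multimap G}$.

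The \rUnitRight\ case uses the $\uStile$ hypothesis. The subproof concludes $\emptyset:(\widehat\Phi;A)\uStile B$ with a single world and distinct literal labels. Every $\varphi=f_1\otimes\dots\otimes f_n\in\sem\Phi$ therefore induces a consistent valuation: we set $v(F_k^{\ell_k})=f_k$, the promise set is empty, and there are no prefixed labels at the root. Pre-unitarity with $|I|=1$ then gives $u^\dag u=\Id A$ and $uu^\dag=\Id B$ for $u=\sem{\pi'}^{(1)}_\varphi$, so $u\in\sem{A\uarrow B}$.

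For the $\uStile$ fragment we fix a consistent valuation $v$ of the conclusion and must produce a consistent valuation of each premise. We then verify the three equations: isometry of each world, mutual orthogonality of distinct worlds, and completeness $\sum_i M_iM_i^\dag=\Id C$. We write $M_i$ for $\sem\pi^{(i)}_{v(\widehat\Phi_i)}$.

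\textbf{Structural and unary ground rules.} Exchanges, \rOneL, \rBtL, \rPhTh\ and the higher-term rules \rOtUn, \rWiUn{x} only precompose the first world with a unitary or a phase, or rearrange tensors. The valuation transfers immediately: for \rWiUn{x} we put $v(F_x^{\oplus_x\alpha})=\proj{}{x}v((F_1\with F_2)^\alpha)$, which is consistent by condition 2.

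\textbf{\rOpL.} Here $M_1=N_1(\proj{}1\otimes\Id\Gamma)+N_2(\proj{}2\otimes\Id\Gamma)$. Isometry follows from the isometry and orthogonality of $N_1,N_2$, and completeness holds because $\inj{}{1}\proj{}1+\inj{}{2}\proj{}2=\Id{}$. \rRotTh\ is the same computation with an orthonormal basis of $\bC^2$.

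\textbf{\rOpR\ and \rBtR.} The valuation restricts to each premise, and the result follows from orthogonality of the injections. For \rBtR\ we use the tensor factorisation $(N_i\otimes N'_j)^\dag(N_k\otimes N'_l)=N_i^\dag N_k\otimes N_j'^\dag N'_l$. Completeness factors as the tensor product of the two premise completenesses.

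\textbf{\rArUnNew, \rArUnPro, \rCutUnH.} We extend $v$ by $v(G^\beta)=v(F\stackrel\alpha\multimap G)(\sem{\pi_L}v(\widehat\Phi))$. This lies in $\sem G$ by the $\lStile$ induction hypothesis together with the definition of $\sem{F\multimap G}$. The extension is forced, hence consistent, by the new promise. In the \rArUnPro\ case the promise already exists, so the equation is exactly condition 3. For \rCutUnH\ we set $v(F^\alpha)=\sem{\pi_1}v(\widehat\Phi)$, which is legitimate because $\alpha$ is fresh. In all three cases the maps $M_i$ coincide with the premise maps under the extended valuation.

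\textbf{Main obstacle: \rUnitUn\ and \rCutUnG.} Let $N_i=\sem{\pi_1}^{(i)}$ for $i\in I$, and let $K_j=\sem{\pi_2}^{(j)}$ be the premise maps, with $K_1$ the distinguished world. The new worlds are $M_j=K_1\circ((f N_j)\otimes\Id\Delta)$ for $j\in I$, and the remaining worlds are $K_2,\dots$. The subtle point is to show that $f$ is the same value $v(A\stackrel\alpha\uarrow B)$ in every world of $I$; this holds because the label is shared across those worlds. A consistent valuation of the conclusion then restricts to consistent valuations of both premises, with $K_1$ receiving $f\otimes\psi$.

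The computation proceeds in three steps.
\begin{itemize}
\item Isometry: $M_j^\dag M_j=((N_j^\dag f^\dag)\otimes\Id{})\,K_1^\dag K_1\,((fN_j)\otimes\Id{})=\Id{}$.
\item Orthogonality between $j,j'\in I$: the product reduces to $N_j^\dag f^\dag f N_{j'}\otimes\Id{}=0$. Orthogonality between $j\in I$ and a world outside $I$ comes from $K_1^\dag K_k=0$.
\item Completeness: $\sum_{j\in I}M_jM_j^\dag=K_1\big((f(\sum_j N_jN_j^\dag)f^\dag)\otimes\Id{}\big)K_1^\dag=K_1K_1^\dag$, since $\sum_j N_jN_j^\dag=\Id A$ and $ff^\dag=\Id B$. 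Adding the outside worlds then gives $\Id C$.
\end{itemize}
\rCutUnG\ is the same computation with $f$ omitted.

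One point needs care: the definitions index worlds in a slightly loose way, and the restriction of the valuation to the premise $\pi_1$ must itself be consistent. The labels in the worlds of $I$ are inherited unchanged, so restricting $v$ preserves conditions 1–3. I expect this bookkeeping, rather than the linear algebra, to be the main thing to check.
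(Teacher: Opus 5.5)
Your proposal is correct and follows essentially the same route as the paper. Both arguments are a rule-by-rule induction in which a consistent valuation of the conclusion is restricted or extended to the premises (through the promise equation for \rArUnNew\ and \rArUnPro, and through $\sem{\pi_1}$ for \rCutUnH), with the same key \rUnitUn\ computation using the shared label to fix a single $f$ across the worlds of $I$. Two minor slips: in \rOpL\ it is the isometry condition that needs $\inj{}{1}\proj{}{1}+\inj{}{2}\proj{}{2}=\Id{}$, whereas completeness uses $\proj{}{i}\inj{}{j}=\delta_{ij}\Id{}$ to reduce $M_1M_1^\dag$ to $N_1N_1^\dag+N_2N_2^\dag$; and in \rUnitUn\ the map $K_1$ receives only $\psi$, since $f$ is consumed in $(fN_j)\otimes\Id\Delta$.
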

	
	\begin{proof}
		The proof, while rather straightforward, requires a careful analysis of all the rules of \logicsymb. It can be found in full in \cref{proofcoh}.
	\end{proof}
	
	From this, we can derive two trivial corollaries that link pre-unitarity and actual unitaries when the original sequent of $\pi$ is particularly simple, when all worlds share the same higher order terms:
	
	\begin{corollary}
		If $\pi$ is a proof of $P:(\widehat \Phi; \Gamma_i)_{i \in I} \uStile C$, and if $\varphi \in \sem{\unlab{(\widehat \Phi)}}$ then
		\[\sum_{i \in I} \sem{\pi}_{\varphi}^{(i)}\circ \proj{\oplus_{j \in I} \Gamma_j}i : \bigoplus_{i \in I} \cV_{\Gamma_i} \rightarrow \cV_C \]
		is a unitary.
		In particular:
		\begin{itemize}
			\item 
		If $\pi$ is a proof of $(\emptyset; \Gamma_i)_{i \in I} \uStile C$, then
		$\sum_{i \in I} \sem \pi^{(i)}_1 \circ \proj{}{i} : \bigoplus_{i} \cV_{\Gamma_i} \rightarrow \cV_C$ is a unitary.
		
		\item
		If $\pi$ is a proof of $(\widehat \Phi; \Gamma) \uStile C$, and if $\varphi \in \sem \Phi$, then $\sem \pi^{(1)}_{\varphi} : \cV_\Gamma \rightarrow \cV_C$ is a unitary.\qed
	\end{itemize}
	\end{corollary}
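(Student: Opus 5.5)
The corollary follows from the Coherence theorem by assembling the per-world maps into one block map and checking that it is unitary. Let $\pi$ be a proof of $P:(\widehat \Phi; \Gamma_i)_{i \in I} \uStile C$ and fix $\varphi \in \sem{\unlab(\widehat \Phi)}$.

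\textbf{Step 1: realize $\varphi$ by a consistent valuation.} We need a consistent valuation $v$ with $v(\widehat \Phi) = \varphi$. Since all worlds share the same labeled multiset $\widehat\Phi$, writing $\varphi = \varphi_1 \otimes \dots \otimes \varphi_k$ along the terms of $\widehat \Phi$ forces $v$ on the root labels. I would then extend $v$ to every labeled term occurring in the sequent:
\begin{itemize}
\item a downcast label $p\cdot\alpha$ receives ${\downarrow_p}$ of the value at $\alpha$;
\item the output $G^\beta$ of a promise receives $v(F\stackrel{\alpha}{\multimap}G)(\sem{\pi_L}v(\widehat\Phi'))$.
\end{itemize}
Condition 1 is preserved by these assignments. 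For the promise case this uses the first part of the Coherence theorem: $\sem{\pi_L}$ sends values to values, so $\sem{\pi_L}v(\widehat\Phi') \in \sem F$, hence its image under $v(F\stackrel{\alpha}{\multimap}G)$ lies in $\sem G$.

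This extension step is the only nontrivial point. Well-definedness follows from consistent labeling and from promise sets having unique outputs. In the common case where the root sequent contains only $\widehat\Phi$ and $P=\emptyset$, which covers both special cases listed, the extension is immediate. In general the Coherence theorem quantifies over all consistent valuations, and the per-world maps only depend on $v(\widehat\Phi)$, so we need nothing beyond existence.

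\textbf{Step 2: apply pre-unitarity and assemble the map.} By the Coherence theorem $\sem\pi$ is pre-unitary. Write $M_i = \sem\pi^{(i)}_\varphi : \cV_{\Gamma_i}\to\cV_C$. Pre-unitarity at $v$ gives
\[
M_i^\dag M_i=\Id{\Gamma_i}, \qquad M_i^\dag M_j = \zeromap \ (i\neq j), \qquad \sum_i M_iM_i^\dag=\Id C .
\]
Set $U=\sum_i M_i\circ \proj{}{i}$, so that $U^\dag=\sum_j \inj{}{j}\circ M_j^\dag$ because $(\proj{}{i})^\dag=\inj{}{i}$.

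\textbf{Step 3: check unitarity.} In one direction,
\[
U^\dag U=\sum_{i,j}\inj{}{j} M_j^\dag M_i\proj{}{i}=\sum_i \inj{}{i}\proj{}{i}=\Id{\oplus\Gamma_i},
\]
using the first two identities. In the other direction,
\[
UU^\dag=\sum_{i,j} M_i\,\proj{}{i}\inj{}{j}\,M_j^\dag=\sum_i M_iM_i^\dag=\Id C,
\]
using $\proj{}{i}\inj{}{j}=\delta_{ij}\Id{}$ and the third identity. Hence $U$ is unitary.

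\textbf{Special cases.} When $\widehat\Phi=\emptyset$, the only value is $\varphi=1$, giving the first bullet. When $|I|=1$, the direct sum collapses to $\cV_\Gamma$ and $U=M_1$, giving the second bullet.
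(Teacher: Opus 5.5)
Your Steps 2 and 3 are the intended argument: the paper gives no separate proof and treats the corollary as immediate from the pre-unitarity part of the Coherence theorem. Your computation of $U^\dag U$ and $UU^\dag$ is correct.

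The gap is Step 1. It is not true that every $\varphi\in\sem{\unlab(\widehat\Phi)}$ equals $v(\widehat\Phi)$ for some consistent valuation $v$, and consistent labeling does not give you this. Consistent labeling only restricts which \emph{terms} may carry related labels; it says nothing about multiplicities. Since a world is a multiset, $\widehat\Phi$ may contain:
\begin{itemize}
\item the same labeled term twice;
\item both $F^\alpha$ and $({\downarrow_p}F)^{p\cdot\alpha}$;
\item the output $G^\beta$ of a promise of $P$ together with that promise's inputs.
\end{itemize}
In each case a consistent $v$ forces equations between tensor factors of $v(\widehat\Phi)$: equal factors, downcasts, or condition~3. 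An arbitrary $\varphi$ need not satisfy them. Your recipe of assigning promise outputs from their inputs even contradicts $\varphi$ whenever $G^\beta$ already occurs in $\widehat\Phi$. Also, your remark that the case $P=\emptyset$ ``covers both special cases'' is wrong for the second bullet, which allows any $\widehat\Phi$.

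No smarter choice of $v$ can fix this, because the statement itself fails in these cases. Let $U:=\underline2\uarrow\underline2$ and take the consistently labeled sequent $\emptyset:(U^\ell,U^\ell;\underline2)\uStile\underline2$. Build a proof from the root up:
\begin{itemize}
\item apply \rOpL;
\item apply \rExchUnH\ in the second world only, reached via \rExchUnW. This is syntactically a no-op but semantically swaps the two inputs;
\item apply one \rUnitUn\ merging both worlds. Its left premise is \rOpR\ of two \rAx\ on $\unit$. Its right premise is $(U^\ell;\underline2)\uStile\underline2$, obtained by \rUnitUn\ from two \rAx.
\end{itemize}
For $\varphi=f\otimes g\in\sem{U,U}$ this gives $\sem\pi^{(1)}_\varphi:\ket0\mapsto gf\ket0,\ \ket1\mapsto fg\ket1$. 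With $f=H$ and $g=X$ both images equal $\ket{+}$, so the map is not unitary. For $f=g$, which are the only consistent valuations, it is $f^2$.

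So the corollary has to be read with $\varphi=v(\widehat\Phi)$ for a consistent $v$. This is automatic when $\Phi\labrel\widehat\Phi$ and $P=\emptyset$, as in the premise of \rUnitRight\ and in the first bullet. The paper states the corollary without this proviso. Once you replace your general existence claim by this hypothesis, Steps 2 and 3 complete the proof and the two bullets follow as you describe.
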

	
	With this interpretation of \logicsymb, we can write proofs that are interpreted as known higher order processes. The proofs can be found in \cref{reps}.
	
	\begin{theorem}
		For any of the $\SWITCH$, the Quantum $n$-Switch, the unitarized Grenoble Process and the mapping of a unitary (with a defined behavior on the vacuum state) to its controlled version, there is a proof $\pi$ such that $\sem \pi$ is equal to that process.\qed
	\end{theorem}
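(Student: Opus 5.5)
This is an existence statement, so the plan is to construct, for each process, an explicit proof tree and then compute its interpretation with the rules of \Cref{fig:LinSem}, \Cref{fig:semuStile} and \Cref{fig:uStileSem2}. I would treat the four cases in increasing difficulty and keep one proof pattern throughout. First introduce $\multimap$ with \rArLinR\ and split tensored inputs with \rOtLinL. Then move to the $\uStile$ fragment with \rUnitRight\ and split the control register into worlds with \rBtL\ and \rOpL. In each world, apply the labelled unitaries with \rUnitUn\ in the order the process prescribes. Finally, recombine the worlds with \rBtR\ and \rOpR\ on top of \rAx\ leaves.

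For $\SWITCH$ the tree is already \Cref{fig:switchProof}. Its semantics unfolds mechanically. The two \rOpL-worlds are $\ket0\otimes-$ and $\ket1\otimes-$. In the first world \rUnitUn\ applies $f$ and then $g$, producing $\inj{}{1}\circ(g f)$; in the second world the order is reversed, producing $\inj{}{2}\circ(f g)$. \rUnitRight\ keeps the single resulting map. The \rOpL\ clause sums these two maps precomposed with the projections, so on input $f\otimes g$ we get $\ket0\otimes x\mapsto \ket0\otimes gfx$ and $\ket1\otimes x\mapsto\ket1\otimes fgx$, which is \Cref{eq:qsw}. The only bookkeeping is the ordering of worlds in \rUnitUn, where the first $|I|$ worlds are the ones consumed.

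The quantum $n$-switch generalises this. The control type is $\underline{n!}$, or a $\boxtimes$ of qudits. I would apply \rOpL\ $n!-1$ times to obtain one world per permutation $\sigma$. In world $\sigma$, a chain of $n$ \rUnitUn\ rules applies the unitaries in the order given by $\sigma$, each to a single world, so that in every world the label $\alpha_k$ refers to the same input $f_k$. An induction on $n!$ then shows that the semantics is $\ket\sigma\otimes x\mapsto\ket\sigma\otimes f_{\sigma(n)}\cdots f_{\sigma(1)}x$.

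For the controlled unitary with type $((A\uarrow A)\with(\unit\uarrow\unit))\multimap(\underline2\boxtimes A\uarrow\underline2\boxtimes A)$, I would split the control qubit into two worlds. In the $\ket0$ world, \rWiUn{2} selects the vacuum behaviour $\unit\uarrow\unit$, obtained through \rOneL/\rOneR\ and a \rUnitUn\ on $\unit$, and this yields a global phase $u_0$. In the $\ket1$ world, \rWiUn{1} selects $U$, which is then applied. Consistent labelling holds because the two worlds carry $\oplus_1\alpha$ and $\oplus_2\alpha$ respectively. The semantics is therefore $\ket0\langle0|\otimes u_0+\ket1\langle1|\otimes U$, which is the vacuum-extended control of \cite{Chiribella_2019}.

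The main obstacle is the unitarized Grenoble process. Its causal structure requires routing one input unitary's output into another in a data-dependent way, and in some branches \rCutUnG\ or the $\multimap$ rules inside worlds are needed. Here I would first fix the concrete unitary matrix being claimed, then derive its block decomposition over the control basis, and build one world per block using \rRotTh\ and \rPhTh\ for any fixed basis change. Checking label consistency across worlds, and the equality of the interpretation, would then be a direct but long computation.
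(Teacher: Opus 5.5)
Your constructions for $\SWITCH$, the $n$-switch and the controlled unitary match the paper's own. These are the tree of \Cref{fig:switchProof}, one world per permutation, and a split of the control qubit in which the two components of $\with$ are selected in the two worlds, as in \Cref{fig:Cproof}. One small correction: $u_0$ is a phase on the $\ket0$ branch only, not a global phase, as your final formula already shows. The genuine gap is the unitarized Grenoble process. There you give no construction, and the tools you name cannot supply one.

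Write $\Pi_b$ for the projector onto $\ket b$ and $a$ for the input ancilla. For control value $k$ the process is $x\mapsto(\ket{k+2}\ket0\,C_{k+2}C_{k+1}\Pi_a+\ket{k+1}\ket1\,C_{k+1}C_{k+2}\Pi_{1-a})C_kx$ (\Cref{fig:UnitGrenobleSpec}). Which of $C_{k+1},C_{k+2}$ acts first depends on the output of $C_k$, and that output is in superposition. This is not a fixed basis change, so \rRotTh\ and \rPhTh\ do not help: they only produce constant unitaries, as in the universality theorem. No $\multimap$ rule is needed inside worlds either, since every input has type $\underline2\uarrow\underline2$. Finally, ``one world per block'' is the wrong world structure, because within a single world the inputs can only be applied in a fixed order.

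The missing idea is the multi-world left premise of \rUnitUn. The construction goes as follows.
\begin{enumerate}
\item In world $k$, apply $C_k$ by \rUnitUn.
\item Split the output qubit of $C_k$ and the ancilla with \rOpL. This gives four worlds $(b,a)$ whose ground terms are only $\unit$'s (removed by \rOneL) and whose remaining labelled terms $C_{k+1}^{\beta},C_{k+2}^{\gamma}$ coincide.
\item Merge $(0,0)$ and $(1,1)$ in a single \rUnitUn\ on $C_{k+1}$, with left premise $(\emptyset;\emptyset)\mid(\emptyset;\emptyset)\uStile\underline2$ (two \rOneR\ under \rOpR). This re-creates $\ket0$, resp.\ $\ket1$, as the input of $C_{k+1}$.
\item Merge $(1,0)$ and $(0,1)$ in the same way into $C_{k+2}$.
\item Apply the remaining unitary in each merged world.
\item Close with \rExchUnW, \rOpR, \rBtR\ and \rAx, sending the six resulting worlds to $\ket{k+2}\ket0$ and $\ket{k+1}\ket1$.
\end{enumerate}
This is the structure of \Cref{fig:GrenobleGraph}.

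The merge is forced. A world holding only $\unit$'s cannot supply a qubit to $C_{k+1}$, because a one-world $\uStile$ proof is interpreted as a unitary and $\dim\bC\neq 2$. You could instead do the merge with a \rCutUnG\ before a one-world \rUnitUn, which may be what you had in mind. Either way, the specific pairing of worlds is exactly what implements the data-dependent order, and your plan never states it.
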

	
	Finally, while the interpretation of terms of the form $F \multimap G$ might be too coarse for a total and formal completeness result, the rules \rRotTh\ and \rPhTh\ are enough to prove that \logicsymb\ is complete for some output terms.
	\begin{restatable}[Universality for Unitaries]{theorem}{universality}
		Let $n \geq 1$ and $f : \bC^n \rightarrow \bC^n$ a unitary and $\underline n := \underbrace{\unit \oplus ... \oplus \unit}_{\text{$n$ times}}$.
		There exists $\pi$ a proof of $(\underline n \uarrow \underline n)$ such that $\sem \pi = f$.
	\end{restatable}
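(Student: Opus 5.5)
The plan is to reduce to a proof of the $\uStile$ sequent $\emptyset:(\emptyset;\underline n)\uStile \underline n$ and conclude with \rUnitRight, using $\emptyset \labrel \emptyset$; by the definition of $\sem{\text{\rUnitRight}}$ the resulting $\lStile$ proof is interpreted as $\sem{\pi'}^{(1)}_1$. First apply \rOpL\ repeatedly (with \rOneL) to split the single world $(\emptyset;\underline n)$ into $n$ worlds $(\emptyset;\emptyset)\mid\dots\mid(\emptyset;\emptyset)$. Unfolding the semantics of \rOpL\ and \rOneL, $\sem{\pi'}^{(1)}_1 = \sum_{i} \sem{\pi''}^{(i)}_1 \circ \proj{}{i}$, where $\pi''$ proves $(\emptyset;\emptyset)^{\mid n}\uStile\underline n$ and each $\sem{\pi''}^{(i)}_1$ is a map $\bC\to\bC^n$, i.e. a vector $u_i$. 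So it suffices to prove, by induction, the stronger claim: for every $m\le n$ and every family of orthonormal vectors $u_1,\dots,u_m\in\bC^n$ together with an orthonormal completion, there is a proof of the $n$-world sequent realising $(u_1,\dots,u_n)$, i.e. the columns of $f$.

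The base case is the identity matrix: \rOpR\ applied $n-1$ times to \rOneR\ gives a proof of $(\emptyset;\emptyset)^{\mid n}\uStile\underline n$ whose $i$-th world map is $\inj{}{i}$, i.e. $e_i$ (associativity of the nested $\oplus$ is handled by the bracketing of $\underline n$ and the ordering of worlds). Then I will show that reading the proof bottom-up, the rules \rExchUnW, \rPhTh, and \rRotTh\ (preceded by \rOpL\ to recover the $\unit\oplus\unit$ hypothesis it consumes, so that each \rRotTh\ step rewrites two adjacent $(\emptyset;\emptyset)$ worlds into one world $(\emptyset;\unit\oplus\unit)$, then split back by \rOpL) act on the tuple $(u_1,\dots,u_n)$ by right-multiplication of the matrix $U=[u_1\cdots u_n]$ by, respectively, a transposition of columns, a diagonal phase $\mathrm{diag}(e^{i\theta},1,\dots,1)$, and a real Givens rotation $\left(\begin{smallmatrix}\cos\theta&-\sin\theta\\ \sin\theta&\cos\theta\end{smallmatrix}\right)$ on the first two columns. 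This is a direct computation from Figure~\ref{fig:semuStile}: composing the \rOpL\ semantics $\sem{\pi'}^{(1)}\circ\proj{}{1}+\sem{\pi'}^{(2)}\circ\proj{}{2}$ with the vector $(\cos\theta,\sin\theta)$ gives $\cos\theta\, u_1+\sin\theta\, u_2$, etc. Note that the order is reversed: rules placed lower in the tree act later, so I build $f$ as a product of elementary matrices read from the top of the tree downward.

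The remaining step is the linear-algebra decomposition: every $n\times n$ unitary is a product of column transpositions, single-column phases on column $1$ (conjugated by transpositions to reach any column), and real rotations on columns $1,2$ (again conjugated by transpositions to reach any pair). This is the standard Givens/two-level decomposition \cite{nielsen_chuang_2010}: any two-level unitary on $\bC^2$ factors as $\mathrm{diag}(e^{i\alpha},1)R_\theta\,\mathrm{diag}(e^{i\beta},1)R_{\theta'}$-type products up to a global phase, which is itself a product of single-column phases, and every unitary is a product of two-level unitaries. The main obstacle I expect is bookkeeping rather than mathematics: making sure \rRotTh\ can always be applied (it needs two adjacent worlds with identical content, which holds since all worlds are $(\emptyset;\emptyset)$) and that the interleaving of \rOpL\ with \rRotTh\ really yields the claimed column operation without permuting the remaining worlds (the index shift $i-1$ in \rRotTh\ versus $i+1$ in \rOpL\ cancels). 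Everything has an empty promise set and no higher terms, so labeling consistency is trivial.
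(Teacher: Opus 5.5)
Your argument is correct, but it assembles the elementary gates differently from the paper. The paper builds, for each complex Givens rotation $G_{1,2,n}(\theta,\phi)$, a standalone proof $\mathfrak g_{1,2,n}(\theta,\phi)$ of $(\emptyset;\underline n)\uStile\underline n$. It uses the same local mechanism as you: \rRotTh\ merges two worlds, \rOpL\ and \rOneL\ split them back, \rPhTh\ phases sit on either side, and there is an \rAx\ leaf on $\underline{n-2}$. It then builds swap proofs $\mathfrak s_{j,k,n}$ from \rExchUnW, glues all the factors with \rCutUnG, and cites the universality of Givens rotations. You instead keep all $n$ worlds $(\emptyset;\emptyset)$ open along a single spine. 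You use that \rExchUnW, \rPhTh\ and the \rOpL/\rOneL/\rRotTh\ gadget act on the world maps by right multiplication of $U=[u_1\cdots u_n]$. The realisable matrices therefore contain $I$ (from \rOpR\ and \rOneR) and are closed under right multiplication by a generating set of the unitary group. The induction you announce ``on $m$'' is not the one you actually carry out; it is an induction on the length of the factorisation, and you should state it that way. Your route yields a cut-free proof outright, so universality holds in the cut-free fragment without appealing to cut elimination. Because arbitrary single-column phases are among your generators, unitaries of every determinant are covered. By contrast, the paper's factors $G_{j,k,n}(\theta,\phi)$ have determinant $1$ and the swaps $\pm1$, so an extra \rPhTh\ factor is tacitly needed there when $\det f\neq\pm 1$. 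The paper's route is more modular: each gate is a self-contained proof of $\underline n\uarrow\underline n$ composed like a circuit. No world indices need to be tracked through one long proof, and an off-the-shelf decomposition can be quoted verbatim.
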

	
	\begin{proof}
		The proof relies on a result found e.g.~in \cite{brennen2005}, which proves that (complex versions of) Givens rotations are universal for unitaries over $\mathbb C^n$. 
		In \cref{proofComp}, given the indices and angles of a Givens rotation, we construct a proof with interpretation equal to that rotation. We then use \rCutUnG\ rules to compose multiple such proofs, and generate $f$.
	\end{proof}

\section{Cut Elimination}
	With semantics now defined, we can prove cut elimination for \logicsymb, and its soundness. Intuitively, since cut rules can be understood as composing processes together, eliminating them can be understood as computing the output of a process within another.
	
	We follow a typical pattern for proving cut elimination, by introducing a notion of cut rank that encodes the complexity of cuts found in the proof, and showing that it can always be decreased. Eliminating \rCutLin, the cut rule for the $\lStile$ fragment, is identical to eliminating cuts in \IMALL, provided cut elimination holds in the $\uStile$ fragment. However, \rCutUnH\ and \rCutUnG\ are made more complex with the introduction of worlds.
	
	First, \rCutUnH\ needs to be applied to multiple worlds at once, so labels are equal without having to resort to promises, which would complexify even stating a cut elimination theorem immensely. This makes commuting \rCutUnH\ up a proof sometimes challenging. Consider the proof fragment in \Cref{fig:cutHex}. While a cut-free fragment trivially exists, the \rBtL\ rule further above \rCutUnH\ needs to be commuted down. This means that eliminating \rCutUnH\ requires to look further than the root rule of its right subproof. To prove the general theorem, special attention needs to be drawn to commutations of rules acting on different worlds. This is solved by observing that two rules applying on distinct sets of worlds do indeed commute.

	\begin{figure}[h]
	\begin{center}
		\fbox{
		\begin{prooftree}
		\hypo{\unlab(\widehat \Phi) \lStile F}
		\hypo{\unlab(\widehat \Phi) \lStile G}
		\infer2[\rWiLinR]{\unlab(\widehat \Phi) \lStile F \with G}
		
		\hypo{P:F^{\oplus_1 \alpha}; A \mid F^{\oplus_1\alpha}; A, B \uStile C}
		\infer1[\rWiUn 1]{P:F^{\oplus_1 \alpha}; A \mid (F\with G)^\alpha; A, B \uStile C}
		\infer1[\rBtL]{P: F^{\oplus_1 \alpha}; A \mid (F\with G)^\alpha; A \boxtimes B \uStile C}
		\infer1[\rWiUn 1]{P:(F\with G)^\alpha; A \mid (F\with G)^\alpha; A \boxtimes B \uStile C}
		\infer2[\rCutUnH]{P: \widehat \Phi; A \mid \widehat \Phi; A \boxtimes B \uStile C}			
		\end{prooftree}}
	\end{center}
	\caption{An example of a challenging \rCutUnH\ elimination}
	\label{fig:cutHex}
	\end{figure}
	
	Second, the fusing of worlds of the \rUnitUn\ and \rRotTh\ rules makes commutations of instances of \rCutUnG\ difficult, like in \Cref{fig:cutGex}. However, the behavior of the $\boxtimes$ and $\oplus$ connectives is derived from \IMALL, where those are positive connectives, that can always be destructed at any point of the proof without changing the provability of the sequent. In \cref{sec:pos}, we prove that this is also the case in \logicsymb, and therefore all terms created by \rCutUnG\ can be destructed immediately, and the cut rule simplified.
	
	\begin{figure}[h]
		\centering
		\fbox{
		\begin{prooftree}
			\hypo{P: \Gamma \uStile A}
			\hypo{P: (\unit \oplus \unit), A \uStile C}
			\infer1[\rRotTh]{P:A \mid A \uStile C}
			\infer2[\rCutUnG]{P:\Gamma \mid A \uStile C}
		\end{prooftree}}
		\caption{An example of a challenging \rCutUnG\ commutation}
		\label{fig:cutGex}
	\end{figure}
	
	Through these two solutions, we can prove cut elimination (the full proof is found in \cref{sec:celim}).
	
	\begin{restatable}[Cut Elimination]{theorem}{cutelim}
		If $\pi$ is a proof of $\Phi \lStile H$, there exists a cut-free proof $\rho$ of the same sequent such that $\sem \pi = \sem {\rho}$. 
		If $\pi$ is a proof of $P: \Omega \uStile C$, there exists a cut-free promise set $Q$, and a cut-free proof $\rho$ of $Q: \Omega \uStile C$ such that $\sem \pi = \sem {\rho}$. \qed
	\end{restatable}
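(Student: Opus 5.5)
We prove both statements simultaneously by induction, following the standard Gentzen-style scheme. To each cut we assign a \emph{degree}: the size of its cut formula, which is the higher term $F$ in \rCutLin\ and \rCutUnH\ and the ground term $A$ in \rCutUnG. A proof gets a \emph{rank}: the pair (maximal cut degree, number of cuts of that degree), ordered lexicographically. The induction is on this rank, with an inner induction on the sum of the heights of the two premises of a topmost cut of maximal degree. Promises carry their own $\lStile$ proofs $\pi_L$, so we also count the cuts occurring inside the promise set. The cut-free set $Q$ is then obtained by applying the $\lStile$ part of the induction hypothesis to each $\pi_L$. This changes no valuation, because $\sem{\pi_L}$ is preserved. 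Throughout, each rewriting step must preserve the semantics. We check this locally, using the clauses of Figures~\ref{fig:LinSem}, \ref{fig:semuStile} and \ref{fig:uStileSem2}. These checks are equalities of linear maps, obtained from functoriality of $\otimes$, naturality of $\sigma$, $\app$ and $\proj{}{}$, and the bilinearity of composition used in the \rOpL\ clause.

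\textbf{Step 1: the $\lStile$ fragment.} Eliminating \rCutLin\ follows \IMALL\ without $\oplus$. Commutative cases push the cut past any rule whose principal formula is not the cut formula. Key cases treat $\otimes$/$\otimes$, $\multimap$/$\multimap$ and $\with$/$\with_x$, and each replaces the cut by cuts on strictly smaller subformulas. The new case is \rUnitRight\ against a left introduction of $A\uarrow B$. This can only occur inside a $\uStile$ premise, through \rCutUnH\ followed by \rUnitUn. There is no left rule for $\uarrow$ in $\lStile$, so a \rCutLin\ on $A\uarrow B$ must be commuted until its right premise is a \rUnitRight. It then becomes a \rCutUnH\ inside the $\uStile$ proof, which is handled in Step 2.

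\textbf{Step 2: \rCutUnH.} We first prove a \emph{world-commutation lemma}: two rules acting on disjoint sets of worlds can be permuted without changing $\sem\pi$. This holds because each semantic clause acts componentwise on the worlds it touches and reindexes the others, so up to world permutations (\rExchUnW) the order is irrelevant. With the lemma, in the right premise we follow the labelled occurrences $F^\alpha$ (and their descendants $F^{p\cdot\alpha}$) upward in every world of $I$. When a principal use is reached, we permute it down past the unrelated rules until the cut meets it in all worlds simultaneously. Figure~\ref{fig:cutHex} is the typical case: \rBtL\ is moved below. The key reductions are these:
\begin{itemize}
\item \rWiLinR\ against \rWiUn{x} keeps $\pi_x$ and its cut.
\item \rOtLinR\ against \rOtUn\ becomes two cuts.
\item \rArLinR\ against \rArUnNew/\rArUnPro\ becomes a cut on $F$ and a cut on $G$. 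The promise is discharged because its output value is forced by condition~3 of consistent valuations.
\item \rUnitRight\ against \rUnitUn\ becomes a \rCutUnG\ with the $\uStile$ premise of \rUnitRight, applied once per world in $I$. This is sound since the same labelled unitary is reused in all these worlds.
\end{itemize}
If different worlds use $F^\alpha$ with different principal rules, for instance \rWiUn{1} in one world and \rWiUn{2} in another, the cut splits into cuts on the subterms. These cuts have labels $\oplus_x\alpha$, which stay consistent. Every reduction strictly lowers the degree or the inner measure.

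\textbf{Step 3: \rCutUnG\ (expected main obstacle).} Commuting \rCutUnG\ upward in its right premise fails against world-fusing rules such as \rRotTh\ and \rUnitUn\ (Figure~\ref{fig:cutGex}). We avoid this by first proving an \emph{invertibility of positive connectives} lemma (Section~\ref{sec:pos}). \rBtL, \rOpL\ and \rOneL\ are semantics-preserving invertible: from a proof of $(\widehat\Phi;A\oplus B,\Gamma)\mid\Omega\uStile C$ we get one of $(\widehat\Phi;A,\Gamma)\mid(\widehat\Phi;B,\Gamma)\mid\Omega\uStile C$, and similarly for the others. The lemma is proved by induction on the proof. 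The delicate cases are \rRotTh\ and \rUnitUn\ acting on worlds that contain the decomposed formula. In those cases we apply the hypothesis to the fused world and then rebuild each resulting world separately, since the rotation and unitary clauses distribute over $\proj{}{}$. Using the lemma, the cut formula $A$ of the right premise is decomposed immediately. We then reduce by induction on $A$ against the last rule of the left premise:
\begin{itemize}
\item \rAx\ and \rOneR\ are trivial.
\item \rOpR\ against the split worlds gives two cuts on smaller formulas.
\item \rBtR\ gives two cuts.
\item Every other rule of the left premise (for example \rUnitUn, \rRotTh, \rPhTh, or \rOpL\ on context) commutes, because it acts only on the left worlds $I$. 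These become worlds of the conclusion.
\end{itemize}
Checking semantics for the \rOpR\ case uses $\proj{}{i}\circ\inj{}{j}=\delta_{ij}\Id{}$. The hardest part will be stating the invertibility lemma in a form strong enough to survive \rUnitUn, whose left premise merges many worlds. We will first try formulating the lemma for an arbitrary world position, up to \rExchUnW.
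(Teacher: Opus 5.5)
Your overall plan matches the paper's. Your world-commutation lemma is Lemma~\ref{lem:mono}, and your invertibility lemma is the content of Section~\ref{sec:pos}. The \rCutUnG\ strategy (decompose the cut formula in the right premise, then case on the left premise) is the same, and so are the \rCutUnH\ key reductions and the construction of $Q$. Your lexicographic (degree, count) measure is a standard variant of the paper's induction on maximal cut rank plus a reduction lemma for cut-free premises.

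The gap is in Step~2. Your commutation lemma only swaps rules acting on \emph{disjoint} sets of worlds. In a general proof, however, the principal use of $K^\kappa$ can sit inside the left premise of a \rUnitUn\ or \rCutUnG, or inside a premise of \rBtR\ or \rOpR, and these rules act on the very world carrying $K^\kappa$. Moreover, \rBtR\ copies one occurrence into as many worlds as the other premise has. Extracting a rule from it therefore needs duplication: an \rArUnNew\ turns into \rArUnPro\ in the copies, and a \rUnitUn\ copies its left premise. So ``permute the principal uses down until the cut meets them in all worlds'' is not justified by what you state. The paper first normalises to simple proofs (Lemma~\ref{lem:simple}): every non-right rule commutes out of the relevant premise of these four binary rules, up to duplication and relabelling. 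As a result, higher-order hypotheses never enter \rBtR\ or \rOpR\ and only enter the right premise of \rUnitUn\ and \rCutUnG. This is what guarantees that when \rUnitUn\ fuses worlds, $K^\kappa$ is present in all of them and the cut can be moved above it. You need this lemma, or an explicit upward commutation of the cut that splits it across premises.

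There are three smaller points.
\begin{enumerate}
\item You skip the two cases where $\lStile$ material crosses into $\uStile$ through promises. When $\pi_1$ ends in \rArLinL, that rule must be commuted below the cut as an \rArUnNew\ creating a fresh promise in one world of $I$ and \rArUnPro\ using it in the others. When $K^\kappa$ lies in the argument context of an \rArUnNew, the cut becomes a \rCutLin\ inside the promise proof, and the promise must be replaced in every world that uses it through \rArUnPro.
\item Your inner measure should be size (the paper's $d$), not height. Moving a principal rule to the root lengthens the other branches, so in the \rUnitUn\ key case the residual cut on the worlds that still contain $K^\kappa$ can have exactly the original height sum. Its size, by contrast, strictly drops, and the permutations of Lemma~\ref{lem:mono} do not change size.
\item Two counts are off, harmlessly since the degrees drop. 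Each \rUnitUn\ consuming the label becomes two \rCutUnG\ (on $A$, then on $B$). The \rBtR\ case needs one cut for one component, followed by one copy of the other component's proof per resulting world, because \rCutUnG\ only acts on a single world of its right premise.
\end{enumerate}
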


\section{Discussion}
	We have described a novel logic, \logicsymb, and equipped it with an interpretation for its terms and proofs suitable for quantum computation. Doing this, we have taken the first step in the creation of a type system for quantum processes. We have shown that \logicsymb\ is indeed coherent for this interpretation, and complete for the subset of terms representing $n$ dimensional unitaries. We also have shown that \logicsymb\ features cut-elimination, and that it is sound with regards to the provided interpretation.
	
	By building on this cut elimination result, it could be possible to build a lambda calculus, or another language, for \logicsymb, and provide a more robust language to describe higher order quantum processes. The many-worlded nature of the $\uStile$ fragment echoes behavior found in \cite{MW, Tape}, and makes these languages promising candidates for extension through \logicsymb.
	
	However, the interpretation given for $F \multimap G$ is coarse, and makes {\logicsymb} incomplete. In particular, there is no {\logicsymb} proof that can be intepreted as the Lugano process (also known as AF/BW process)~\cite{Vanrietvelde2025consistentcircuits}, a well known ICO. The Lugano process is a voting procedure in which three voters vote among themselves for who is to come last in their causal order, but unintuitively know the outcome of the vote before casting their own vote. Despite its seemingly paradoxical nature, the Lugano process follows the theory of quantum superchannels. We believe this is not a weakness of \logicsymb, but rather points to a fundamental difference between the Lugano process and other ICOs like $\SWITCH$ or the Grenoble process at a computational level. Further study and refinement of {\logicsymb} and its semantics could provide a syntax-based classification of ICOs, in parallel to the usual information theoretic framework.
	
	Finally, we have not proven that the semantics {\logicsymb} fully conform to the conditions of superchannels laid out in~\cite{grenoble}.  The conservation-based intepretation guarantee some compliance, but superchannels must also \emph{completely} conserve channels. This asks for processes to be extensible to accept input that interacts with a larger environment. For instance, any process $s$ in $\sem{(A \uarrow B) \multimap (C \uarrow D)}$ should be extensible to a process of $\sem{(X \boxtimes A \uarrow Y \boxtimes B) \multimap (X \boxtimes C \uarrow Y \boxtimes D)}$ for any ground types $X$ and $Y$ simply by tensoring $s$ with the identity of $\text{Lin}(\cV_X, \cV_Y)$. This definition is easy to lift for simple terms of {\logicsymb}, for example processes that map a number of unitaries to an unitary. In this case, this property, known as superunitarity~\cite{Vanrietvelde2025consistentcircuits}, seems easy to verify for intepretations of proofs of {\logicsymb} that represent transformations of multiple unitaries into a single unitary. However, these concepts are difficult to lift in the general case, when considering order $3$ or above processes, or the use of $\otimes$ and $\with$. This notion of extending interfaces is already present in~\cite{caus}, which also uses Linear Logic as a base, through the non-intuitionistic $\parr$ connective. Finding a way to leverage similar constructions in \logicsymb's intuitionistic setting could provide a guarantee that $\sem -$ corresponds to the superchannels of~\cite{grenoble}.

\bibliography{refs}

\newpage
\appendix

\begin{figure}
			{\small
				\[\sem{\begin{prooftree}
						\hypo{\pi'}
						\infer[no rule]1{\Phi, G, F, \Psi \lStile H}
						\infer1[\rExchLin]{\Phi, F, G, \Psi \lStile H}
				\end{prooftree}} = \sem {\pi'} \circ (\Id{\Phi} \otimes \sigma_{F, G} \otimes \Id{\Psi})\]
			\[\sem{\begin{prooftree}
				\hypo{\pi'}
				\infer[no rule]1{\emptyset:(\widehat \Phi; A) \uStile B}
				\infer1[\rUnitRight]{\Phi \lStile A \uarrow B}
			\end{prooftree}} = \sem{\pi'}^{(1)}\]
			\[\sem{\begin{prooftree}
				\hypo{\pi'}
				\infer[no rule]1{F, G, \Phi \lStile H}
				\infer1[\rOtLinL]{F \otimes G, \Phi \lStile H}
			\end{prooftree}}= \sem{\pi'}\]
			\[\sem{\begin{prooftree}
				\hypo{\pi_1}
				\infer[no rule]1{\Phi \lStile F}
				
				\hypo{\pi_2}
				\infer[no rule]1{\Psi \lStile G}
				\infer2[\rOtLinR]{\Phi, \Psi \lStile F \otimes G}
			\end{prooftree}} = \sem{\pi_1} \otimes \sem{\pi_2} \]
			\[\sem{\begin{prooftree}
				\hypo{\pi_1}
				\infer[no rule]1{\Phi \lStile F}
				
				\hypo{\pi_2}
				\infer[no rule]1{G, \Psi \lStile H}
				\infer2[\rArLinL]{\Phi, F \multimap G, \Psi \lStile H}
			\end{prooftree}} = \sem{\pi_2} \circ ((\app_{F, G} \circ (\sem {\pi_1} \otimes \Id{F \multimap G}))\otimes \Id{\Psi}) \]
			\[\sem{\begin{prooftree}
					\hypo{\pi'}
					\infer[no rule]1{F, \Phi \lStile G}
					\infer1[\rArLinR]{\Phi \lStile F \multimap G}
			\end{prooftree}}= \varphi \mapsto (f \mapsto \sem{\pi'}(f \otimes \varphi)) \]
			\[\sem{\begin{prooftree}
				\hypo{\pi'}
				\infer[no rule]1{F_x, \Phi \lStile H}
				\infer1[\rWiLinL x]{F_1 \with F_2, \Phi \lStile H}
			\end{prooftree}} = \sem{\pi'}\circ(\proj{F_1, F_2}{x} \otimes \Id{\Phi}) \]
			\[\sem{\begin{prooftree}
				\hypo{\pi_1}
				\infer[no rule]1{\Phi \lStile F}
				
				\hypo{\pi_2}
				\infer[no rule]1{\Phi \lStile G}
				\infer2[\rWiLinR]{\Phi \lStile F \with G}
			\end{prooftree}} = \varphi \mapsto (\sem{\pi_1}(\varphi), \sem{\pi_2}(\varphi)) \]
			\[\sem{\begin{prooftree}
				\hypo{\pi_1}
				\infer[no rule]1{\Phi \lStile F}
				
				\hypo{\pi_2}
				\infer[no rule]1{F, \Psi \lStile H}
				\infer2[\rCutLin]{\Phi, \Psi \lStile H}
			\end{prooftree}} = \sem{\pi_2}\circ(\sem {\pi_1} \otimes \Id{\Psi})\]
		}
    \caption{Complete semantics for the $\lStile$ fragment}
    \label{fig:CompleteLinSem}
  \end{figure}

\section{Representations}
	\label{reps}
	\subsection{The Controlled Unitary}
		Consider a process that takes some unitary $U$, and outputs a new unitary $C(U)$, acting on one more wire, such that $C(U)\ket 0 \ket x = \ket 0 \ket x$ and $C(U)\ket1 \ket x = \ket 1 U\ket x$. This is an important operation to perform, though it is not strictly linear in $U$. This is resolved by extending our view of $U$ to also include its behavior on the one dimensional space, as explained in~\cite{Chiribella_2019}. In particular, to control $U$ properly, it needs to always output a vacuum state when its input is the vacuum state.
		
		This naturally leads to the term $((A \uarrow A) \with (\unit \uarrow \unit)) \multimap (((\unit \oplus \unit) \boxtimes A) \uarrow ((\unit \oplus \unit) \boxtimes A))$ to represent the entire process of controlling $U$, and its proof in \cref{fig:Cproof}. In that proof, $\unit$ terms have been given subscript labels for readability.
		
		\begin{figure}[h]
			\centering\small
			\fbox{
				\begin{prooftree}
					\infer0[\rAx]{A \uStile A}
					\infer0[\rOneR]{\uStile \unit_v}
					
					\infer0[\rAx]{\unit_{\ket0} \uStile \unit_{\ket 0}}
					\infer0[\rAx]{\unit_{\ket1} \uStile \unit_{\ket 1}}
					
					\infer2[\rOpR]{\unit_{\ket 0} \mid \unit_{\ket 1} \uStile \unit_{\ket 0} \oplus \unit_{\ket 1}}
					
					\infer0[\rAx]{A \uStile A}
					
					\infer2[\rBtR]{\unit_{\ket 0}, A \mid; \unit_{v}, \unit_{\ket 1}, A \uStile (\unit_{\ket 0} \oplus \unit_{\ket 1}) \boxtimes A}
					\infer1[\rOneL]{\unit_{\ket 0}, A \mid; \unit_{v}, \unit_{\ket 1}, A \uStile (\unit_{\ket 0} \oplus \unit_{\ket 1}) \boxtimes A}
					\infer2[\rUnitUn]{\unit_{\ket 0}, A \mid \unit_{v} \stackrel{\oplus_2 u}{\uarrow} \unit_{v}; \unit_{\ket 1}, A \uStile (\unit_{\ket 0} \oplus \unit_{\ket 1}) \boxtimes A}
					\infer2[\rUnitUn]{A \stackrel{\oplus_1 u}{\uarrow} A; \unit_{\ket 0}, A \mid \unit_{v} \stackrel{\oplus_2 u}{\uarrow} \unit_{v}; \unit_{\ket 1}, A \uStile (\unit_{\ket 0} \oplus \unit_{\ket 1}) \boxtimes A}
					\infer1[\rWiUn 2]{A \stackrel{\oplus_1 u}{\uarrow} A; \unit_{\ket 0}, A \mid ((A \uarrow A) \with (\unit_{v} \uarrow \unit_{v}))^u; \unit_{\ket 1}, A \uStile (\unit_{\ket 0} \oplus \unit_{\ket 1}) \boxtimes A}
					\infer1[\rWiUn 1]{((A \uarrow A) \with (\unit_{v} \uarrow \unit_{v}))^u; \unit_{\ket 0}, A \mid ((A \uarrow A) \with (\unit_{v} \uarrow \unit_{v}))^u; \unit_{\ket 1}, A \uStile (\unit_{\ket 0} \oplus \unit_{\ket 1}) \boxtimes A}
					\infer1[\rOpL]{((A \uarrow A) \with (\unit_{v} \uarrow \unit_{v}))^u; \unit_{\ket 0} \oplus \unit_{\ket 1}, A \uStile (\unit_{\ket 0} \oplus \unit_{\ket 1}) \boxtimes A}
					\infer1[\rBtL]{((A \uarrow A) \with (\unit_{v} \uarrow \unit_{v}))^u; (\unit_{\ket 0} \oplus \unit_{\ket 1}) \boxtimes A \uStile (\unit_{\ket 0} \oplus \unit_{\ket 1}) \boxtimes A}	
					\infer1[\rUnitRight]{(A \uarrow A) \with (\unit \uarrow \unit) \lStile  ((\unit \oplus \unit) \boxtimes A) \uarrow ((\unit \oplus \unit) \boxtimes A)}
					\infer1[\rArLinR]{\lStile ((A \uarrow A) \with (\unit \uarrow \unit)) \multimap (((\unit \oplus \unit) \boxtimes A) \uarrow ((\unit \oplus \unit) \boxtimes A))}
				\end{prooftree}
			}
			\caption{Proof of the controlled unitary process}
			\label{fig:Cproof}
		\end{figure}
		
		The proof relies on treating the two worlds differently depending on whether they're on the left or on the right. One projects the extended unitary to its vacuum side, and the other on its main side. They both apply the resulting unitary, and since the vacuum state is of dimension 1 and represented by $\unit$, it can be freely created or discarded. The two worlds are then joined back together.
	
	\subsection{The Quantum Switch}
		The proof of $\SWITCH$ in \cref{fig:switchProof} relies on the same structure: worlds are split according to the control qubit, and the two input unitaries are applied in one order or the other depending on the world.
		\begin{figure}[h]
      \def\myrel#1{\stackrel{#1}{\uarrow}}
    \scalebox{.8}{\begin{prooftree}
			\infer0[\rAx]{A \uStile A}
			\infer0[\rAx]{A \uStile A}
			\infer0[\rAx]{A \uStile A}
			\infer0[\rAx]{A \uStile A}
			
			\infer0[\rAx]{\unit \uStile \unit}
			\infer0[\rAx]{\unit \uStile \unit}
			\infer2[\rOpR]{\unit \mid \unit \uStile \underline 2}
			
			\infer0[\rAx]{A \uStile A}
			
			\infer2[\rBtR]{\unit, A \mid \unit, A \uStile \underline 2 \boxtimes A}
			\infer2[\rUnitUn]{\unit, A \mid A \myrel{f} A; \unit, A \uStile \underline 2 \boxtimes A}
			\infer2[\rUnitUn]{\unit, A \mid A \myrel{f} A, A \myrel{g} A; \unit, A \uStile \underline 2 \boxtimes A}
			\infer2[\rUnitUn]{A \myrel{g} A; \unit, A \mid A \myrel{f} A, A \myrel{g} A; \unit, A \uStile \underline 2 \boxtimes A}
			\infer2[\rUnitUn]{A \myrel{f} A, A \myrel{g} A; \unit, A \mid A \myrel{f}, A \myrel{g} A; \unit, A \uStile \underline 2 \boxtimes A}
			\infer1[\rOpL]{A \myrel{f} A, A \myrel{g} A; \underline 2, A \uStile \underline 2 \boxtimes A}
			\infer1[\rBtL]{A \myrel{f} A, A \myrel{g} A; \underline 2 \boxtimes A \uStile \underline 2 \boxtimes A}
			\infer1[\rUnitRight]{A \uarrow A, A \uarrow A \lStile (\underline 2 \boxtimes A) \uarrow (\underline 2 \boxtimes A)}
			\infer1[\rOtLinL]{(A \uarrow A) \otimes (A \uarrow A) \lStile (\underline 2 \boxtimes A) \uarrow (\underline 2 \boxtimes A)}
			\infer1[\rArLinR]{\lStile ((A \uarrow A) \otimes (A \uarrow A)) \multimap ((\underline 2 \boxtimes A) \uarrow (\underline 2 \boxtimes A))}
		\end{prooftree}}
		\caption{Proof of $\SWITCH$}
		\label{fig:switchProofappendix}
		\end{figure}
		
	\subsection{The Quantum n-Switch}
		The Quantum $n$-Switch is a variant of $\SWITCH$, where instead $n$ unitaries of type $A \uarrow A$ are given, and composed in any of the $n!$ possible ways, controlled by quantum data.
		
		The term representing the $n$-Switch would therefore be $(A \uarrow A)^{\otimes n} \multimap ((\underline{n!} \boxtimes A) \uarrow (\underline{n!} \boxtimes A))$. A proof tree can be constructed the same way as for $\SWITCH$, by separating all worlds, and then applying the unitaries in the relevant order for each world.
	
	\subsection{The Grenoble Process}
		Described in~\cite{grenoble}, the Grenoble Process is a more complex ICO, featuring dynamic quantum control. In its pure version, it takes three unitaries $C_0, C_1, C_2$ each acting on a single qubit, and creates an isometry that takes a control qutrit, a source qubit, and outputs a qutrit, one ancillary qubit, and a target qubit. If the control qutrit is in state $\ket k$, $C_k$ is applied first to the target qubit. Then, if the output of $C_k$ is $\ket 0$, the output ancillary qubit is set to $\ket 0$, and $C_{k+1}$, then $C_{k+2}$ (with indices working mod $3$) are applied to the source qubit to obtain the target qubit. Otherwise, the ancilary qubit is set to $\ket 1$ and then $C_{k+2}$, then $C_{k+1}$ are applied to the source qubit. This is formalized in \cref{fig:GrenobleSpec}.
		
		\begin{figure}[h]
			\begin{center}
				\[
				\begin{cases}
					\ket 0 \ket x \mapsto (\ket 2 \ket 0 C_2 C_1 \circ \proj {}{\ket 0} + \ket 1 \ket 1 C_1 C_2 \circ \proj {} {\ket 1}) \circ C_0 \ket x\\
					\ket 1 \ket x \mapsto (\ket 0 \ket 0 C_0 C_2 \circ \proj {}{\ket 0} + \ket 2 \ket 1 C_2 C_0 \circ \proj {} {\ket 1}) \circ C_1 \ket x\\
					\ket 2 \ket x \mapsto (\ket 1 \ket 0 C_1 C_0 \circ \proj {}{\ket 0} + \ket 0 \ket 1 C_0 C_1 \circ \proj {}{\ket 1}) \circ C_2 \ket x
				\end{cases}
				\]
			\end{center}
			\caption{Specification of the Grenoble Process}
			\label{fig:GrenobleSpec}
		\end{figure}
		
		This statement of the Grenoble Process is not exactly unitary-preserving, as it produces an isometry. However, it can easily be extended in an unitary form, by making the unitary qubit part of the input as well, and building the unitarized Grenoble Process to be equal to the original when this ancillary input is set to $\ket 0$. We implement this by reversing the preference between the two remaining maps depending on the state of the ancillary qubit, and then continuing the Grenoble process as normal.
		
		Due to the more complex structure of the Grenoble Process, the full derivation of the term representing it contains more than seventy rules and manipulates up to twelve worlds at a time. Therefore, instead of presenting the full derivation, we give a graphical representation (formally defined in \cref{def:graphical}) of the core of the derivation in \cref{fig:GrenobleGraph} instead. $\pi_{\text{fin}}$ uses only \rExchUnW, \rBtR, \rOpR, and \rAx\ rules, to match each world with the correct value for the output qutrit and qubit. From left to right, the assigned values for the qutrit-ancilla pair are $\ket2 \ket 0, \ket 1 \ket 1, \ket 0 \ket 0, \ket 2 \ket 1, \ket 1 \ket 0$ and $\ket 0 \ket 1$. This leads to the semantics given in~\cref{fig:UnitGrenobleSpec}, which coincide with the typical Grenoble process when the middle qubit is set to $\ket 0$.
		
		\begin{figure}[h]
			\begin{center}
		\input{grenoble-proof.tikz}
		\end{center}
		\caption{Graphical representation of the proof of the unitarized Grenoble process}
		\label{fig:GrenobleGraph}
		\end{figure}
		
		\begin{figure}[h]
			\begin{center}
				\[
				\begin{cases}
					\ket 0 \ket 0 \ket x \mapsto (\ket 2 \ket 0 C_2 C_1 \circ \proj {}{\ket 0} + \ket 1 \ket 1 C_1 C_2 \circ \proj {} {\ket 1}) \circ C_0 \ket x\\
					\ket 0 \ket 1 \ket x \mapsto (\ket 2 \ket 0 C_2 C_1 \circ \proj {}{\ket 1} + \ket 1 \ket 1 C_1 C_2 \circ \proj {} {\ket 0}) \circ C_0 \ket x\\
					\ket 1 \ket 0 \ket x \mapsto (\ket 0 \ket 0 C_0 C_2 \circ \proj {}{\ket 0} + \ket 2 \ket 1 C_2 C_0 \circ \proj {} {\ket 1}) \circ C_1 \ket x\\
					\ket 1 \ket 1 \ket x \mapsto (\ket 0 \ket 0 C_0 C_2 \circ \proj {}{\ket 1} + \ket 2 \ket 1 C_2 C_0 \circ \proj {} {\ket 0}) \circ C_1 \ket x\\
					\ket 2 \ket 0 \ket x \mapsto (\ket 1 \ket 0 C_1 C_0 \circ \proj {}{\ket 0} + \ket 0 \ket 1 C_0 C_1 \circ \proj {}{\ket 1}) \circ C_2 \ket x\\
					\ket 2 \ket 1 \ket x \mapsto (\ket 1 \ket 0 C_1 C_0 \circ \proj {}{\ket 1} + \ket 0 \ket 1 C_0 C_1 \circ \proj {}{\ket 0}) \circ C_2 \ket x
				\end{cases}
				\]
			\end{center}
			\caption{Semantics of the~\cref{fig:GrenobleGraph} proof}
			\label{fig:UnitGrenobleSpec}
		\end{figure}
		
\section{Useful Tools}
	\begin{definition}[Proof Equivalence]
		Let $\pi_A$ and $\pi_B$ be two proofs of the same sequent, up to the content of the promise set if $\pi_A$ and $\pi_B$ are $\uStile$ proofs.
		
		We say that $\pi_A$ and $\pi_B$ are \emph{equivalent}, noted $\pi_A \equiv \pi_B$ if and only if $\sem {\pi_A} = \sem {\pi_B}$.
	\end{definition}
	
	\subsection{Simple Proofs}
		Proofs that pass higher order input to the left subproof of rule \rUnitUn\ and right side rules of $\uStile$ can prove troublesome from a bureaucratic point of view.\\
		We prove that they can be ignored without loss of generality by showing that the original \rUnitUn, \rBtR, \rOpR\ and \rCutUnG\ rules are derivable in a system where stricter versions are introduced.
		
		\begin{definition}
			A proof $\pi$ is said \emph{simple} if every instance of the \rUnitUn, \rBtR, \rOpR\ and \rCutUnG\ rules is of the form:
			\[\begin{prooftree}
				\hypo{P:(\emptyset, \Gamma_i)_{i \in I} \uStile A}
				\hypo{P:(\widehat \Phi; B, \Delta) \mid \Omega \uStile C}
				\infer2[\rUnitUn]{P:(A \stackrel{\alpha}{\uarrow} B, \widehat \Phi; \Gamma_i, \Delta)_{i \in I} \mid \Omega \uStile C}
			\end{prooftree} \]
			
			\[
			\begin{prooftree}
				\hypo{P:(\emptyset; \Gamma_i)_{i \in I} \uStile C_1}
				\hypo{P:(\emptyset; \Delta_j)_{j \in J} \uStile C_2}
				\infer2[\rBtR]{P:(\emptyset; \Gamma_i, \Delta_j)_{ij} \uStile C_1 \boxtimes C_2}
			\end{prooftree}
			\]
			
			\[
			\begin{prooftree}
				\hypo{P:(\emptyset; \Gamma_i)_{i \in I} \uStile C_1}
				\hypo{P:(\emptyset; \Delta_j)_{j \in J} \uStile C_2}
				\infer2[\rOpR]{P:(\emptyset; \Gamma_i)_i \mid (\emptyset; \Delta_j)_{j} \uStile C_1 \oplus C_2}
			\end{prooftree}
			\]
			
			\[\begin{prooftree}
				\hypo{P:(\emptyset, \Gamma_i)_{i \in I} \uStile A}
				\hypo{P:(\widehat \Phi; B, \Delta) \mid \Omega \uStile C}
				\infer2[\rCutUnG]{P:(\widehat \Phi; \Gamma_i, \Delta)_{i \in I} \mid \Omega \uStile C}
			\end{prooftree} \]
		\end{definition}
		
		\begin{lemma}
			\label{lem:simple}
			For each proof $\pi$ of \logicsymb, there exists a simple proof $\pi_S$ that is equivalent to $\pi$.
		\end{lemma}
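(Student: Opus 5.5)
We transform $\pi$ by induction on its structure, replacing each non-simple instance of \rUnitUn, \rBtR, \rOpR\ or \rCutUnG\ with a derivation in which these rules occur only in their simple form. The key observation is that, in the left premise of such a rule (and in both premises of \rBtR\ and \rOpR), the higher terms in the worlds $\widehat{\Phi_i}$ can only have been consumed by left rules on higher terms: \rExchUnH, \rOtUn, \rWiUn{x}, \rArUnNew, \rArUnPro, \rUnitUn\ and \rCutUnH. We therefore argue that every left rule acting on higher terms can be \emph{postponed}, i.e.\ commuted below the binary right rule. After all such commutations the premise worlds carry no higher terms, and the removed left rules are reapplied, on the corresponding worlds, below the now simple instance.

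The work proceeds in three steps. First, we prove a permutation lemma. Let $\rho$ be a unary left rule on higher terms (\rExchUnH, \rOtUn, \rWiUn{x}, \rArUnPro) acting on world $i$ of the left premise of \rOpR, \rBtR, \rUnitUn\ or \rCutUnG. Then $\rho$ can be moved below the binary rule, where it now acts on the image world(s) of world $i$. For \rBtR, world $i$ becomes the family of worlds $(i,j)_j$, so $\rho$ must be repeated once for each $j$. This repetition is sound because the labels are identical across these worlds and the promise set is shared. Semantic equality follows by unfolding Figures~\ref{fig:semuStile} and~\ref{fig:uStileSem2}. Both sides precompose the $i$-th component with the same map on the higher-order input, namely $\Id{}\otimes\sigma$, $\proj{}{x}$, or $\app\circ(\sem{\pi_L}\otimes\Id{})$. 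This map commutes with $\inj{}{k}\circ-$, with $-\otimes\sem{\pi_2}^{(j)}_\psi$, and with $\sem{\pi_2}^{(1)}\circ((f\circ-)\otimes\Id\Delta)$, since each of these acts only on the ground part. Second, we treat the binary higher rules that can occur in a premise: \rArUnNew, \rUnitUn\ and \rCutUnH. Each of these has one subproof that does not mention the worlds being merged: the $\lStile$ premise, or the left ground premise of \rUnitUn. We commute them down in the same way, noting that a fresh label remains fresh after being moved. \rArUnNew\ adds a promise to the set; once it is moved below, the promise is still available wherever \rArUnPro\ uses it, because promises are only inherited upwards. Third, we run the induction. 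We use a measure such as the multiset of heights of the left-rule instances above each non-simple right rule. Each commutation strictly decreases this measure, so the process terminates, and at that point every right rule has premises of the form $(\emptyset;\Gamma_i)_i$.

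The main obstacle is the multi-world rules \rUnitUn\ and \rCutUnH\ sitting inside a premise. For example, a \rUnitUn\ in the left premise of \rBtR\ spans a set $I$ of worlds. After commutation it must span $I\times J$, and its ground left premise must be combined into a single application over all these worlds. We would first try to show that \rUnitUn\ over $I$ followed by \rBtR\ with $\Omega_2$ equals \rBtR\ applied first, followed by \rUnitUn\ over $I\times J$, with the $\Delta_j$ carried along. This corresponds to the semantic identity $(f\circ g)\otimes h = (f\otimes \Id{})\circ(g\otimes h)$. If the world bookkeeping breaks this, we fall back on processing \rUnitUn\ and \rCutUnG\ from the root upwards. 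We use an auxiliary \rCutUnG\ with a simple identity premise to relocate the ground term, which keeps consistent labeling intact.
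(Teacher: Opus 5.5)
\textbf{Gap: rules on higher terms cannot always reach the binary rule.} You commute only left rules on higher terms below the binary rule, and that restriction fails. Your permutation lemma only covers a rule that sits directly on a premise of \rOpR, \rBtR, \rUnitUn\ or \rCutUnG. To get there, a higher-term rule must first pass every ground rule in between, and \rOpL\ blocks this, because its two premise worlds must carry the same $\widehat\Phi$.

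Take the pattern of Figure~\ref{fig:Cproof}, with $K=(A\uarrow A)\with(\unit\uarrow\unit)$. There \rOpL\ splits $(K^\kappa;\underline 2, A)$ into two worlds, after which \rWiUn{1} is applied in one world and \rWiUn{2} in the other. Neither can be moved below that \rOpL, since the result would not be an instance of \rOpL. Suppose this derivation is the left premise of a \rUnitUn, for instance a controlled unitary followed by a further unitary. Then $K^\kappa$ can only leave the premise if the \rOpL\ leaves with it, so your procedure gets stuck.

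The missing idea is to commute the ground-level rules and world permutations below the binary rule as well: \rOpL, \rBtL, \rOneL, \rExchUnG, \rExchUnW, \rPhTh, \rRotTh. This is exactly the paper's formulation (``all non-right rules commute'', illustrated with \rExchUnW\ and \rBtL). These rules act on one world or permute worlds. They commute directly below \rOpR, \rUnitUn\ and \rCutUnG, and below \rBtR\ after duplication over the worlds $j$ of the other premise; semantic equality follows by linearity. Once only \rAx, \rOneR, \rBtR\ and \rOpR\ remain in the premise, its worlds automatically have empty higher contexts.

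\textbf{Secondary issue: \rUnitUn\ under \rBtR\ must be duplicated.} No single \rUnitUn\ can span $I\times J$. After \rBtR, the world $(\widehat\Psi;B,\Delta)$ has become the $|J|$ worlds $(\widehat\Psi,\widehat{\Theta_j};B,\Delta,\Lambda_j)$. But \rUnitUn\ consumes $B$ from one distinguished world, and its conclusion worlds share a single $\Delta$, so the $j$-dependent $\Lambda_j$ has nowhere to go.

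You must instead apply one \rUnitUn\ per $j$, with the same label $\alpha$ and the same left premise (using \rExchUnW\ to bring each world to the front). Each copy is justified by your identity $(f\circ g)\otimes h=(f\otimes \mathrm{Id})\circ(g\otimes h)$. Likewise, a commuted \rArUnNew\ becomes one \rArUnNew\ plus \rArUnPro\ on the other copies, with $p_L$ added to the promise set of the other \rBtR\ premise. This is the paper's ``duplication and relabeling''; the auxiliary \rCutUnG\ fallback is not needed.
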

		
		\begin{proof}
			The proof is immediate when considering all non-right rules commute with \rUnitUn, \rBtR, \rOpR\ and \rCutUnG\ on the side we are interested in, up to duplication and relabeling.
			
			For example,
			
			\[\scalebox{0.8}{$\begin{prooftree}
				\hypo{P:\Omega_1 \mid \omega' \mid \omega \mid \Omega_2 \uStile A}
				\infer1[\rExchUnW]{P:\Omega_1 \mid \omega \mid \omega' \mid \Omega_2 \uStile A}
				
				\hypo{P:\Omega_3 \uStile B}
				\infer2[\rOpR]{P:\Omega_1 \mid \omega \mid \omega' \mid \Omega_2 \mid \Omega_3 \uStile A \oplus B}
			\end{prooftree}
			\rightarrow
			\begin{prooftree}
				\hypo{P:\Omega_1 \mid \omega' \mid \omega \mid \Omega_2 \uStile A}
				\hypo{P:\Omega_3 \uStile B}
				
				\infer2[\rOpR]{P:\Omega_1 \mid \omega' \mid \omega \mid \Omega_2 \mid \Omega_3 \uStile A \oplus B}
				
				\infer1[\rExchUnW]{P:\Omega_1 \mid \omega \mid \omega' \mid \Omega_2 \mid \Omega_3 \uStile A \oplus B}
			\end{prooftree}$}
			\]
			
			and,
			
			\scalebox{0.8}{
			$\begin{prooftree}
				\hypo{P:(\widehat \Phi; A, B, \Gamma) \mid \Omega_1 \uStile A}
				\infer1[\rBtL]{P:(\widehat \Phi; A \boxtimes B, \Gamma) \mid \Omega_1 \uStile A}
				
				\hypo{P:\Omega_2 \uStile B}
				\infer2[\rBtR]{P:(\widehat \Phi; A \boxtimes B, \Gamma)\times \Omega_2 \mid \Omega_1 \times \Omega_2 \uStile A \boxtimes B}
			\end{prooftree}
			\rightarrow
			\begin{prooftree}
				\hypo{P:(\widehat \Phi; A, B, \Gamma) \mid \Omega_1 \uStile A}
				\hypo{P:\Omega_2 \uStile B}
				\infer2[\rBtR]{P:(\widehat \Phi; A, B, \Gamma)\times \Omega_2 \mid \Omega_1 \times \Omega_2 \uStile A \boxtimes B}
				\infer[double]1[\rBtL]{P:(\widehat \Phi; A \boxtimes B, \Gamma)\times \Omega_2 \mid \Omega_1 \times \Omega_2 \uStile A \boxtimes B}
			\end{prooftree}$}
		
		\end{proof}

	\subsection{Graphical Representation}
		Some properties are more agreeable to state when using diagrams. Using simple proofs, we can define a rigorous diagrammatic representation of the $\uStile$ fragment of \logicsymb. Edges represent worlds, and vertices rules or subproofs, along with some decorations for promises.
		
		Note this is only a different presentation of proof trees, and not a meaningfully different form of proofs, like proof nets are for linear logic.
		
		\begin{definition}
			\label{def:graphical}
			Sections of a proof of a $\uStile$ sequent can be represented graphically with this correspondance:
			
			\scalebox{0.75}{
			\input{rep-graph.tikz}}
		\end{definition}
		
		This leads us to this basic informal property: two rules over ``independant'' sets of worlds commute painlessly.
		
		\begin{lemma}
			\label{lem:mono}
			Let $B_1$ and $B_2$ blocks of rules. We have this equivalence
			
			\ctikzfig{commut-lemma}
			
			Where, if $P_{12}$ is the set of promises created in $B_1$ that is employed in instances of \rArUnPro\ in $B_2$, $B_1^*$ is a block of rules equal to $B_1$, where all instances of \rArUnNew\ generating a promise of $P_{12}$ are replaced with \rArUnPro\ using the same promise, and $B_2^*$ a block equal to $B_2$ where the first instance of each \rArUnPro\ using a promise of $P_{12}$ is replaced by a \rArUnNew\ rule.
			
		\end{lemma}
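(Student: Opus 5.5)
The plan is to prove the equivalence by comparing the two interpretations world by world, using the definition of proof equivalence, namely $\sem{\pi_A}=\sem{\pi_B}$ as tuples of maps. The observation I rely on is that every $\uStile$ rule of \Cref{fig:semuStile} and \Cref{fig:uStileSem2} has a \emph{local} semantics. Its $i$-th component depends only on the components of the premises attached to the worlds the rule acts on; the other components are passed through up to reindexing. Since $B_1$ and $B_2$ act on disjoint sets of worlds, the tuple for either composite splits as a concatenation of a part computed only by $B_1$ and a part computed only by $B_2$. The two orderings should then give the same tuple up to the world permutation drawn in the diagram.

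I will argue by induction on the number of rules in $B_1$ and $B_2$. This reduces the lemma to swapping two adjacent single rules $r_1$ (on worlds $W_1$) and $r_2$ (on worlds $W_2$) with $W_1\cap W_2=\emptyset$, and I check this case rule by rule.

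For the unary rules (\rExchUnG, \rOneL, \rBtL, \rOpL, \rPhTh, \rRotTh, \rExchUnH, \rOtUn, \rWiUn{x}, \rArUnNew, \rArUnPro), the semantics precomposes the components on $W_i$ with a fixed map. On the ground side this map is a projection, a rotation, a swap or a unitor; on the higher-order side it is a swap, a projection, or $\app\circ(\sem{\pi_L}\otimes\Id{})$. Precompositions on distinct components commute. The remaining work is bookkeeping of the index shifts $i\mapsto i\pm1$ introduced by \rOpL, \rRotTh\ and \rExchUnW; both orders yield the same final reindexing.

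For the multi-world rules (\rUnitUn, \rCutUnG, \rCutUnH, \rBtR, \rOpR), I work with simple proofs, which \Cref{lem:simple} allows without loss of generality. The left premise is then a closed block whose worlds are exactly those being fused. Commuting such a rule with another block on disjoint worlds therefore changes only which premise the untouched worlds are routed through. This is exactly what the graphical representation of \Cref{def:graphical} encodes, and the explicit formulas show the components agree.

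The main obstacle is the promise bookkeeping, which breaks the purely local picture. If $B_1$ introduces a promise via \rArUnNew\ that $B_2$ uses via \rArUnPro, then swapping them would make the \rArUnPro\ appear before its promise exists. This is why the statement uses $B_1^*$ and $B_2^*$. I would show three things:
\begin{itemize}
\item[(i)] the modified blocks are well formed: each promise in $P_{12}$ is now created by the first use in $B_2^*$, and the former \rArUnNew\ in $B_1^*$ becomes a legal \rArUnPro\ with the same $\widehat\Phi,\pi_L,\alpha,\beta$;
\item[(ii)] labels stay consistent, since the same label $\beta$ is attached to the same output term;
\item[(iii)] the semantics is unchanged, which holds because the formulas for \rArUnNew\ and \rArUnPro\ in \Cref{fig:uStileSem2} are literally identical.
\end{itemize}
The only subtle point is checking that the left premise $\unlab(\widehat\Phi)\lStile F$ demanded by \rArUnNew\ in $B_2^*$ is supplied by the stored $\pi_L$ of the promise. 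This is immediate from the definition of promises. With it, the final promise sets coincide, and the equivalence follows.
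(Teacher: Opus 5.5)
Your proposal is correct and takes the same route as the paper, which calls the swap of two single rules on disjoint worlds immediate and then concludes by induction on the combined size of $B_1$ and $B_2$. Your locality argument for the semantics and your promise bookkeeping, which relies on \rArUnNew\ and \rArUnPro\ having literally the same interpretation, spell out what the paper leaves implicit.
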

		
		\begin{proof}
			This is immediate when both $B_1$ and $B_2$ each consist of a single rule.\\
			We then proceed by trivial induction over their combined size.
		\end{proof}
	
		\subsection{Positivity}
	\label{sec:pos}
	
	\begin{lemma}
		\label{lem:posBt}
		Let $\pi$ a proof of $P:(\widehat \Phi; A \boxtimes B, \Gamma) \mid \Omega \uStile C$.\\
		There exists $\pi'$ a proof of $P:(\widehat \Phi; A, B, \Gamma) \mid \Omega \uStile C$ such that \[\pi \equiv \begin{prooftree}
			\hypo{\pi'}
			\infer[no rule]1{P:(\widehat \Phi; A, B, \Gamma) \mid \Omega \uStile C}
			\infer1[\rBtL]{P:(\widehat \Phi; A \boxtimes B, \Gamma) \mid \Omega \uStile C}
		\end{prooftree}\]
	\end{lemma}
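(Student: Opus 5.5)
We argue by induction on the height of $\pi$, in the usual style of invertibility proofs for positive connectives, and look at the last rule $r$ of $\pi$. The interpretation of \rBtL\ is $\sem{\pi'}$ itself, and $\cV_{A\boxtimes B,\Gamma}=\cV_A\otimes\cV_B\otimes\cV_\Gamma=\cV_{A,B,\Gamma}$. So the equivalence we want reduces to building a proof $\pi'$ of the unfolded sequent with $\sem{\pi'}=\sem\pi$. By \cref{lem:simple} we may assume $\pi$ is simple, which removes the bureaucratic cases where higher-order input flows into the left premise of right rules.

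\textbf{Cases where $r$ does not touch the occurrence $A\boxtimes B$.} These are rules acting on other worlds, on other ground hypotheses of the first world, or on higher terms, plus \rPhTh, \rExchUnW, \rExchUnG, etc. Here $A\boxtimes B$ (possibly moved by an exchange, or in another world position) occurs in a premise. We apply the induction hypothesis to that premise and reapply $r$. Equality of the semantics follows because the semantic clause of $r$ is natural in the untouched factor $\cV_A\otimes\cV_B$.

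\textbf{Principal case.} If $r$ is \rBtL\ on this occurrence, its premise is the desired $\pi'$.

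\textbf{Cases that split or merge worlds.} If $r$ is \rOpL\ on another formula of the same world, the occurrence $A\boxtimes B$ is duplicated into two worlds. We apply the induction hypothesis twice, bringing each copy to the front with \rExchUnW\ and \rExchUnG. This is legitimate because the first IH application preserves height up to exchanges, which do not count, so we measure height modulo exchanges. We then reapply \rOpL; both summands in its semantic clause match.

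If $r$ is \rRotTh, \rUnitUn\ or \rCutUnG\ whose conclusion world containing $A\boxtimes B$ arises from a world $(\widehat{\Phi_i};\Gamma_i)$ of the left premise, then $A\boxtimes B\in\Gamma_i$. We apply the induction hypothesis to the left premise at that world, after exchanging it to the front, and reapply $r$; the clause $\sem{\pi_2}^{(1)}\circ((f\circ\sem{\pi_1}^{(j)})\otimes\Id\Delta)$ is unchanged. If instead the occurrence lies in $\Delta$ of the right premise, we use the right premise.

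For \rRotTh, the occurrence sits in $\Gamma$, shared by both merged worlds. The premise has a single world $(\widehat\Phi;\unit\oplus\unit,\Gamma)$, so we apply the induction hypothesis there and reapply \rRotTh. Rules such as \rCutUnH\ and \rArUnNew\ are handled the same way.

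\textbf{Main obstacle.} The delicate point is \rRotTh\ and \rUnitUn/\rCutUnG\ where several worlds of the conclusion carry the same occurrence. The induction hypothesis must then be invoked on a premise in which the formula appears once but feeds several conclusion worlds, and we must check that the rebuilt rule is still well-formed: the contexts $\Gamma$ must be syntactically identical across the fused worlds. This holds because we decompose inside the single premise world before fusing. We also need labels and promises to be untouched, which is clear since \rBtL\ does not affect higher terms.
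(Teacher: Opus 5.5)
Your strategy (induction on $\pi$, pushing the \rBtL\ below the last rule) is the paper's. However, the step you single out as the main obstacle is resolved incorrectly.

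The problem arises whenever one premise occurrence of $A\boxtimes B$ is shared by several conclusion worlds. This happens for \rRotTh, for \rUnitUn/\rCutUnG\ with the occurrence in $\Delta$ of the right premise, and also for \rBtR. Your case split treats \rBtR\ as non-touching, but $\Omega_1\times\Omega_2$ copies each $\Gamma_i$ into every world $(i,j)$. In all these cases, decomposing in the premise and reapplying $r$ splits \emph{every} copy.

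For \rRotTh\ with $\Gamma=A\boxtimes B,\Gamma_0$, you obtain a proof of $(\widehat\Phi;A,B,\Gamma_0)\mid(\widehat\Phi;A,B,\Gamma_0)\mid\Omega\uStile C$. But $\pi'$ must prove $(\widehat\Phi;A,B,\Gamma_0)\mid(\widehat\Phi;A\boxtimes B,\Gamma_0)\mid\Omega\uStile C$, because the statement leaves every world except the first unchanged. Your remark that ``we decompose inside the single premise world before fusing'' ensures the rebuilt rule is well-formed, but it is exactly what produces the wrong end-sequent.

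The missing step is to re-fold the other copies with \rBtL, moving each one to the front with \rExchUnW\ and \rExchUnG. This costs nothing semantically, since \rBtL\ leaves the interpretation unchanged and $\cV_{A\boxtimes B}=\cV_A\otimes\cV_B$. It is precisely the double-line \rBtL\ that the paper places below \rUnitUn\ in its proof.

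Once these re-folding rules are inserted, your justification of the \rOpL\ case also breaks. The induction hypothesis says nothing about the height of the proof it returns, and re-folding can make that proof taller than its input. So the second call to the hypothesis, on a proof that is not a subproof of $\pi$, is not licensed by induction on height modulo exchanges.

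A clean repair is to strengthen the lemma so that it decomposes an arbitrary selected set of $\boxtimes$-occurrences in the ground contexts of several worlds at once. Then:
\begin{itemize}
\item \rOpL\ needs a single call on its premise, with both copies selected;
\item the world-sharing rules re-fold only the unselected copies;
\item the induction becomes purely structural.
\end{itemize}

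You also omit the base case where $\pi$ is \rAx\ on $A\boxtimes B$ itself, which is neither principal nor non-touching. There $\pi'$ must be the identity expansion obtained by \rBtR\ from \rAx\ on $A$ and \rAx\ on $B$, which has the same interpretation as the original axiom. The paper's sketch leaves both the \rOpL\ and \rAx\ points implicit as well, but your argument needs them.
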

	
	\begin{proof}
		We proceed by induction over $\pi$.
		
		If $\pi = \begin{prooftree}
			\hypo{\pi_1}
			\infer[no rule]1{P:(\widehat{\Phi_i}; \Gamma_i)_i \uStile C}
			
			\hypo{\pi_2}
			\infer[no rule]1{P:(\widehat \Psi; D, A \boxtimes B, \Delta) \mid \Omega \uStile E}
			
			\infer2[\rUnitUn]{P:(\widehat{\Phi_i}, C \stackrel{\alpha}{\uarrow} D, \widehat \Psi; \Gamma_i, A \boxtimes B, \Delta)_i \mid \Omega \uStile E}
		\end{prooftree}$, then we apply the induction hypothesis to $\pi_2$.
		
		We obtain $\pi_2'$ such that $\pi_2 \equiv \begin{prooftree}
			\hypo{\pi_2'}
			\infer[no rule]1{P:(\widehat \Psi; D, A, B, \Delta) \mid \Omega \uStile C}
			\infer1[\rBtL]{P:(\widehat \Psi; D, A \otimes B, \Delta) \mid \Omega \uStile C}
		\end{prooftree}$
		
		With a trivial rules commutation, we have
		\[\pi \equiv \begin{prooftree}
			\hypo{\pi_1}
			\infer[no rule]1{P:(\widehat{\Phi_i}; \Gamma_i)_i \uStile C}
			
			\hypo{\pi_2'}
			\infer[no rule]1{P:(\widehat \Psi; D, A, B, \Delta) \mid \Omega \uStile E}
			
			\infer2[\rUnitUn]{P:(\widehat{\Phi_i}, C \stackrel{\alpha}{\uarrow} D, \widehat \Psi; \Gamma_i, A, B, \Delta)_i \mid \Omega \uStile E}
			\infer[double]1[\rBtL]{P:(\widehat{\Phi_i}, C \stackrel{\alpha}{\uarrow} D, \widehat \Psi; \Gamma_i, A \boxtimes B, \Delta)_i \mid \Omega \uStile E}
		\end{prooftree}  \]
		
		All other cases are either similar to the above, or trivial.
	\end{proof}
	
	Using similar proofs, we can easily prove the next two lemmas:
	\begin{lemma}
		\label{lem:posOp}
		Let $\pi$ a proof of $P:(\widehat \Phi; A \oplus B, \Gamma) \mid \Omega \uStile C$.\\
		There exists $\pi'$ a proof of $P:(\widehat \Phi; A, \Gamma) \mid (\widehat \Phi; B, \Gamma) \mid \Omega \uStile C$ such that \[\pi \equiv \begin{prooftree}
			\hypo{\pi'}
			\infer[no rule]1{P:(\widehat \Phi; A, \Gamma) \mid (\widehat \Phi; B, \Gamma) \mid \Omega \uStile C}
			\infer1[\rOpL]{P:(\widehat \Phi; A \oplus B, \Gamma)\mid \Omega \uStile C}
		\end{prooftree}\]
	\end{lemma}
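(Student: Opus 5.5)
We proceed by induction on the structure of $\pi$, mirroring the proof of \cref{lem:posBt}, with the induction hypothesis stated for every proof whose conclusion has some world containing a ground occurrence $A \oplus B$. By \cref{lem:simple} we may assume $\pi$ is simple. We may also use \rExchUnW\ freely to put the world containing the distinguished occurrence in first position: \rExchUnW\ only permutes components of $\sem\pi$ and commutes with \rOpL\ up to the corresponding permutation. We then do a case analysis on the last rule of $\pi$.

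\textbf{Base cases.} If the last rule is \rOpL\ applied to the distinguished occurrence $A \oplus B$, we take $\pi'$ to be its premise, and there is nothing to prove. The rule \rAx\ is impossible unless the world is exactly $(\emptyset; A\oplus B)$ with conclusion $A \oplus B$. In that case we take $\pi'$ to be
\[\begin{prooftree}\infer0[\rAx]{(\emptyset;A)\uStile A}\infer0[\rAx]{(\emptyset;B)\uStile B}\infer2[\rOpR]{(\emptyset;A)\mid(\emptyset;B)\uStile A\oplus B}\end{prooftree}.\]
Its \rOpL-closure has interpretation $\inj{}{1}\proj{}{1}+\inj{}{2}\proj{}{2}=\Id{A\oplus B}$, which equals $\sem{\text{\rAx}}$.

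\textbf{Rules not touching the occurrence.} When the last rule does not touch the distinguished occurrence, we apply the induction hypothesis to the premise that contains it and commute \rOpL\ downward. This covers the left rules on other formulas (\rOneL, \rBtL, \rOpL\ on another formula, \rExchUnG, \rExchUnH, \rOtUn, \rWiUn{x}, \rArUnNew, \rArUnPro), as well as \rPhTh\ and \rRotTh\ on other worlds. For the rules \rUnitUn\ and \rCutUnG, simplicity puts the occurrence in $\Delta$, so it lies in the right premise and the argument is exactly the one given for \rUnitUn\ in \cref{lem:posBt}. For \rBtR\ and \rOpR, the occurrence lives in one premise. For \rBtR, the world $(\widehat\Phi;A\oplus B,\Gamma)$ of the premise yields several worlds in the conclusion, one per world of the other premise. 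Applying \rOpL\ in the premise and then \rBtR\ gives the split worlds; we must then reorder them with \rExchUnW\ and apply one \rOpL\ per product world, which is the double-line step already used in \cref{lem:posBt}. In each commutation, equality of semantics is a direct computation: both sides equal $\sem{\cdot}^{(1)}\circ(\proj{}1\otimes\Id{})+\sem{\cdot}^{(2)}\circ(\proj{}2\otimes\Id{})$ composed or tensored with the same maps, using bilinearity of composition and $\otimes$.

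\textbf{Rules fusing the occurrence's world.} \rCutUnH\ and \rUnitUn/\rCutUnG\ over several worlds $(\widehat{\Phi_i};\Gamma_i)_{i\in I}$ can also carry the occurrence. For the left premise of \rUnitUn/\rCutUnG, simplicity guarantees that the worlds have empty higher context, so the occurrence can sit in some $\Gamma_i$. The induction hypothesis on the left premise splits that world into two. The same rule then still applies, now with index set $I$ enlarged by one, and the resulting worlds $(\ldots;A,\Gamma_i,\Delta)\mid(\ldots;B,\Gamma_i,\Delta)$ are recombined by \rOpL. Since the semantics of these rules acts world-wise (each $j\in I$ is treated independently), the equality follows.

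\textbf{Main obstacle: \rRotTh.} The real difficulty is \rRotTh, whose conclusion duplicates a world $(\widehat\Phi;\Gamma)\mid(\widehat\Phi;\Gamma)$ with $A\oplus B\in\Gamma$. We cannot split only one of the two copies, since both come from a single premise world. Our approach is to first apply the induction hypothesis to the premise $(\widehat\Phi;\unit\oplus\unit,\Gamma)$, which splits it into two worlds each still containing $\unit\oplus\unit$. We then apply \rRotTh\ twice, once in each. After that we apply \rExchUnW\ and \rOpL\ twice to recover the two original worlds. A duplicated split is harmless here: \rOpL\ is then applied once per copy, yielding the required form after a further world permutation, with the first \rOpL\ as the last rule. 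The semantic check is that \rRotTh\ acts by precomposition with a vector on the $\unit\oplus\unit$ factor, which commutes with $\proj{}x\otimes\Id{}$ on the $A\oplus B$ factor. \rPhTh\ is handled similarly, and trivially.
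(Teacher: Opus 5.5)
Your strategy matches the paper's, which only states that the structural induction of \cref{lem:posBt} carries over to $\oplus$. Your explicit treatment of \rAx\ (an \rOpR\ of two axioms) and of \rRotTh\ (split the premise, rotate both halves, re-merge the untouched copy with \rOpL) is correct and goes beyond the paper.

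There is, however, a genuine gap in a case you list as routine: the last rule is \rOpL\ on a \emph{different} formula $C'\oplus D'$ of the same world as the designated $A\oplus B$. The premise then contains the designated occurrence in two worlds, $(\widehat\Phi; C', A\oplus B, \Gamma')$ and $(\widehat\Phi; D', A\oplus B, \Gamma')$. Both must be split before you can reapply \rOpL\ to $C'\oplus D'$ twice, pairing the $A$-halves and the $B$-halves. Your induction hypothesis splits one world of one subproof. The second split would have to be applied to the proof returned by the first call, which is not a subproof of $\pi$. That proof can also be larger than $\pi$, since your own \rRotTh\ and \rUnitUn\ constructions duplicate rules and subproofs. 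So neither structural induction nor induction on size or height licenses the second call; the paper's sketch does not address this point either.

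The fix is to strengthen the statement. For any finite family of designated $\oplus$-occurrences in the conclusion, possibly several in one world, there should be a proof in which each such world is split along all its designated occurrences, with $\pi$ equivalent to the corresponding stack of \rOpL's. Several occurrences per world are needed: an occurrence duplicated by \rOpL\ can be moved to different positions by \rExchUnG\ in the two copies, which may later be re-merged by \rRotTh. With this statement each case needs exactly one call per premise, your constructions go through, and the lemma is the one-occurrence instance.

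Two smaller points.

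First, the \rUnitUn/\rCutUnG\ case with the occurrence in $\Delta$ is not literally the argument of \cref{lem:posBt}. After the induction hypothesis, the right premise has two worlds both starting with the consumed formula (the unitary's output, resp.\ the cut formula), and the rule consumes it only in the first. You must apply the rule a second time to the $B$-half, with a second copy of the left premise (and, for \rUnitUn, the same label $\alpha$). Then use one \rOpL\ per $i\in I$, the last being the required final one; the semantics agree by bilinearity.

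Second, when the occurrence lies in the principal world of \rArUnNew, reapplying the rule to both halves requires the second copy to be \rArUnPro, since $p_L$ cannot be added to the promise set twice. Also, simplicity does not force the occurrence into $\Delta$; it only empties the higher-order contexts of the left premise. You do handle the $\Gamma_i$ case later, so that sentence should simply go.
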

	
	\begin{lemma}
		\label{lem:posOne}
		Let $\pi$ a proof of $P:(\widehat \Phi; \unit, \Gamma) \mid \Omega \uStile C$.\\
		There exists $\pi'$ a proof of $P:(\widehat \Phi; \Gamma) \mid \Omega \uStile C$ such that
		\[\pi \equiv \begin{prooftree}
			\hypo{\pi'}
			\infer[no rule]1{P:(\widehat \Phi; \Gamma) \mid \Omega \uStile C}
			\infer1[\rOneL]{P:(\unit, \widehat \Phi; \Gamma) \mid \Omega \uStile C}
		\end{prooftree}\]
	\end{lemma}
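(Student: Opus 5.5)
The plan is to proceed by induction on the structure (height) of $\pi$, mirroring the proof of \cref{lem:posBt}, with a case analysis on the last rule of $\pi$. Throughout, the distinguished world is the first world $(\widehat \Phi; \unit, \Gamma)$. The target equivalence is semantic: \rOneL\ only precomposes the first component with $\lambda_\Gamma$. So it suffices to build $\pi'$ with $\sem{\pi'}^{(1)}_\varphi \circ \lambda_\Gamma = \sem{\pi}^{(1)}_\varphi$ and $\sem{\pi'}^{(i)} = \sem{\pi}^{(i)}$ for $i>1$.

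\textbf{Base and principal cases.} If the last rule is \rOneL\ acting on this occurrence of $\unit$, take $\pi'$ to be its premise. If it is \rAx\ with $A = \unit$, so that the sequent is $(\emptyset;\unit)\uStile\unit$, take $\pi'$ to be \rOneR. Then $z\Id{\bC}\circ\lambda$ agrees with $z\Id{\unit}$ under $\cV_\unit = \bC$. If the $\unit$ occurrence is an exchanged formula in \rExchUnG, or the moved world in \rExchUnW, I will first commute the exchange above, then use the induction hypothesis on the premise.

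\textbf{Commutation cases.} These come in three kinds.

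\emph{Rules not touching the distinguished occurrence.} This covers left rules on other ground formulas of the first world, higher-order rules (\rOtUn, \rWiUn x, \rArUnNew, \rArUnPro, \rExchUnH), \rPhTh, and rules acting only on other worlds. For these, the occurrence of $\unit$ survives in the premise, possibly inside a split world. Two examples:
\begin{itemize}
\item \rOpL\ on another formula of the first world yields two worlds, each containing $\unit$. I apply the induction hypothesis twice, once per world, after using \rExchUnW\ to bring the second world to the front.
\item \rRotTh\ consumes $\unit\oplus\unit$ and produces two copies of the world. I apply the induction hypothesis to the premise, where $\unit$ now sits in a single world, and then re-apply \rRotTh. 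Since the rotation acts on the $\unit\oplus\unit$ factor only, it commutes with $\lambda_\Gamma$.
\end{itemize}
Each of these commutations preserves $\sem{-}$ because the semantic clauses act on disjoint tensor factors or on distinct world indices. \cref{lem:mono} handles the world-independence bookkeeping, including renaming promises.

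\emph{Multi-world rules.} These are \rUnitUn, \rCutUnG\ and \rCutUnH. When $\unit$ lies in the $\Delta$ (or $\Psi_i$) part, it is inherited by the right premise, and I apply the induction hypothesis there, as in \cref{lem:posBt}. When it lies in some $\Gamma_j$ coming from the left premise $\pi_1$, I apply the induction hypothesis to $\pi_1$ at world $j$, using \rExchUnW\ to make it first. I then re-apply the rule, with a doubled line of exchanges and \rOneL.

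\emph{Right rules.} These are \rBtR\ and \rOpR. The occurrence belongs to one premise's world, so I apply the induction hypothesis there; by \cref{lem:simple} I may also assume simple form. \rOneR\ cannot occur, since its context is empty.

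\textbf{Main obstacle.} The delicate point is the multi-world rules. In \rUnitUn\ and \rCutUnG, one world $j$ of $\pi_1$ ends up as world $j$ of the conclusion. Because \rOneL\ only acts on the first world, I must surround the reconstruction with \rExchUnW\ permutations and check that their semantics cancel. A second concern is the induction measure. The \rOpL\ case invokes the hypothesis twice, on the same premise, and the second call is on a proof produced by the first. For this reason I will induct on the height of $\pi$ and verify that each transformation does not increase height.
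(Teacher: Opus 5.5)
Your strategy (induction on $\pi$, pushing the deletion of $\unit$ into the premises and re-applying the last rule) is the one the paper intends; it proves this lemma ``similarly'' to \cref{lem:posBt}. However, the induction as you set it up does not close. Your hypothesis only covers a $\unit$ at the head of the first world. Since \rOneL, \rBtL\ and \rOpL\ all act on the head of the first world, \rBtL\ and \rOpL\ can never be the last rule here, so your ``\rOpL\ on another formula of the first world'' case is vacuous. The cases that do occur move the $\unit$ away from the head:
in an \rExchUnG\ that swaps it, it is at position~2 of the premise;
in an \rExchUnW\ that swaps world~1, it is in world~2;
in \rRotTh\ it sits behind $\unit\oplus\unit$;
in \rUnitUn\ and \rCutUnG\ with $\Gamma_1=\emptyset$ (e.g.\ a left premise ending in \rOneR) it sits behind $B$, resp.\ $A$, in the right premise.
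Your hypothesis applies to none of these premises. ``First commute the exchange above'' is also circular: the only way to bring the $\unit$ back to the head of that premise is to re-apply the same exchange, which returns $\pi$.

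Your measure also fails. Sometimes one occurrence in a premise yields several in the conclusion: the shared $\Delta$ in \rUnitUn\ and \rCutUnG, the product in \rBtR, the duplicated world in \rRotTh. The recursive call then deletes every copy, and you must re-insert the others with \rExchUnW, \rOneL\ and \rExchUnG, so $\pi'$ is in general taller than $\pi$. Likewise, adding \rExchUnW\ rules below $\pi_1$ to make world $j$ first gives a proof that need not be shorter than $\pi$. So the second call in the \rOpL\ case is not justified by induction on height.

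The missing idea is to strengthen the statement. For any set $S$ of occurrences of $\unit$ in the ground contexts, in any worlds and positions, there should be a proof $\pi'$ of the sequent with $S$ deleted such that each $\sem{\pi}^{(i)}_\varphi$ equals $\sem{\pi'}^{(i)}_\varphi$ precomposed with the canonical isomorphism removing the $\bC$ factors of $S$ in world $i$. Plain structural induction then works, with a single call per premise. Transport $S$ to the premises: both copies in \rOpL, the union of the images in \rRotTh, \rBtR, \rUnitUn\ and \rCutUnG. Then re-apply the rule, and re-insert by exchanges and \rOneL\ the copies that were deleted but are not in $S$; this last step needs no induction hypothesis. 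The lemma is the instance where $S$ is the head $\unit$ of world~1. The paper's proof of \cref{lem:posBt} tacitly relies on the same generalisation when it applies the hypothesis to $\pi_2$ with $A\boxtimes B$ in second position.
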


\section{Proofs of Theorems}

	\subsection{Proof of Coherence}
	
	\consistency*
	
	\begin{proof}
		\label{proofcoh}
		We proceed by induction over the structure of $\pi$.
		
		\proofsubparagraph{Initial Case}
		There are two axiom rules, \rAx\ and \rOneR. They are both trivial.
		
		\proofsubparagraph{Induction, $\lStile$ Fragment}
		Let $\pi$ a proof.
			\begin{itemize}
				\item If $\pi = \begin{prooftree}
						\hypo{\pi'}
						\infer[no rule]1{\Phi, G, F, \Psi \lStile H}
						\infer1[\rExchLin]{\Phi, F, G, \Psi \lStile H}
				\end{prooftree}$, then  $\sem \pi = \sem {\pi'} \circ (\Id{\Phi} \otimes \sigma_{F, G} \otimes \Id{\Psi})$.
				
				Let $\varphi \in \sem \Phi, f \in \sem F, g \in \sem G, \psi \in \sem \Psi$. We want to prove that $\sem \pi(\varphi \otimes f \otimes g \otimes \psi) \in \sem H$.
				
				\[\begin{aligned}\sem \pi(\varphi \otimes f \otimes g \otimes \psi)&=  \sem {\pi'} \circ (\Id{\Phi} \otimes \sigma_{F, G} \otimes \Id{\Psi})(\varphi \otimes f \otimes g \otimes \psi)\\
				&= \sem{\pi'}(\varphi \otimes g \otimes f \otimes \psi) \in \sem H \ \text{by induction hypothesis} \end{aligned}\]
				
				Therefore the \rExchLin\ case is verified.
				
				\item If $\pi = \begin{prooftree}
					\hypo{\pi'}
					\infer[no rule]1{\emptyset: (\widehat \Phi; A) \uStile B}
					\infer1[\rUnitRight]{\Phi \lStile A \uarrow B}
				\end{prooftree}$ with $\Phi \labrel \widehat \Phi$, then $\sem \pi = \sem{\pi'}^{(1)}$.\\
				Let $\varphi \in \sem \Phi$. Let us prove that $\sem {\pi'}^{(1)}_{\varphi} \in \sem {A \uarrow B}$.\\
				Let $\Phi = F_1, ..., F_n$\\
				Therefore, $\widehat \Phi = F_1^{\alpha_1},...,F_n^{\alpha_n}$, with the $\alpha_i$ distinct, and $\varphi = \varphi_1 \otimes ... \otimes \varphi_n$ with $\varphi_i \in \sem{F_i}$.\\
				Trivially, $v: F_i^{\alpha_i} \mapsto \varphi_i$ is a consistent valuation for $\pi'$.

				By induction hypothesis, we have $\sem{\pi'}^{(1)\dag}_\varphi \circ \sem {\pi'}^{(1)}_\varphi = \Id A$ and $\sem{\pi'}^{(1)}_\varphi \circ \sem {\pi'}^{(1)\dag}_{\varphi} = \Id B$\\
				Therefore $\sem{\pi'}^{(1)}_\varphi \in \sem{A \uarrow B}$.

				\item If $\pi =\begin{prooftree}
					\hypo{\pi'}
					\infer[no rule]1{F, G, \Phi \lStile H}
					\infer1[\rOtLinL]{F \otimes G, \Phi \lStile H}
				\end{prooftree}$ the case is trivial.
				
				\item If $\pi = \begin{prooftree}
						\hypo{\pi_1}
						\infer[no rule]1{\Phi \lStile F}
						
						\hypo{\pi_2}
						\infer[no rule]1{\Psi \lStile G}
						\infer2[\rOtLinR]{\Phi, \Psi \lStile F \otimes G}
				\end{prooftree}$ then $\sem \pi = \sem{\pi_1} \otimes \sem{\pi_2}$.
				
				Let $\varphi \in \sem \Phi$, $\psi \in \sem \Psi$.\\
				By induction hypothesis, $\sem \pi(\varphi \otimes \psi) = (\sem {\pi_1}(\varphi)) \otimes (\sem{\pi_2}(\psi)) \in \sem{F \otimes G}$.
				
				\item If $\pi = \begin{prooftree}
					\hypo{\pi_1}
					\infer[no rule]1{\Phi \lStile F}
					
					\hypo{\pi_2}
					\infer[no rule]1{G, \Psi \lStile H}
					\infer2[\rArLinL]{\Phi, F \multimap G, \Psi \lStile H}
				\end{prooftree}$ then $\sem \pi = \sem{\pi_2} \circ ((\app_{F, G} \circ (\sem {\pi_1} \otimes \Id{F \multimap G}))\otimes \Id{\Psi})$.
				
				Let $\varphi \in \sem \Phi$, $s \in \sem{F \multimap G}$, $\psi \in \sem \Psi$. Let us prove that $\sem{\pi}(\varphi \otimes s \otimes \psi) \in \sem H$.
				
				\[
					\sem{\pi}(\varphi \otimes s \otimes \psi)= (\sem{\pi_2} \circ ((\app_{F, G} \circ (\sem {\pi_1} \otimes \Id{F \multimap G}))\otimes \Id{\Psi}))(\varphi \otimes s \otimes \psi)\]
				\[
					= \sem{\pi_2}((\app_{F,G}((\sem{\pi_1}(\varphi))\otimes s))\otimes \psi)\]
				\[
					=\sem{\pi_2}(s(\sem{\pi_1}(\varphi)) \otimes \psi)\]
					By induction hypothesis over $\pi_2$, we know that $\sem{\pi_1}(\varphi) \in \sem F$, therefore, by definition $s(\sem{\pi_1}(\varphi)) \in \sem G$.\\
					Therefore, by induction hypothesis over $\pi_2$, $\sem{\pi_2}(s(\sem{\pi_1}(\varphi)) \otimes \psi) \in \sem H$.
				
				\item If $\pi = \begin{prooftree}
					\hypo{\pi'}
					\infer[no rule]1{F, \Phi \lStile G}
					\infer1[\rArLinR]{\Phi \lStile F \multimap G}
				\end{prooftree}$ then $\sem \pi = \varphi \mapsto (f \mapsto \sem{\pi'}(f \otimes \varphi))$.\\
				Let $\varphi \in \sem \Phi$. We want to prove that $f \mapsto \sem{\pi'}(f \otimes \varphi) \in \sem {F \multimap G}$\\
				Thus, let $f \in \sem F$. We simply have to prove that $\sem {\pi'}(f \otimes \varphi) \in \sem G$.\\
				This is simply the induction hypothesis.
				
				\item If $\pi = \begin{prooftree}
					\hypo{\pi'}
					\infer[no rule]1{F_x, \Phi \lStile H}
					\infer1[\rWiLinL x]{F_1 \with F_2, \Phi \lStile H}
				\end{prooftree}$ then $\sem \pi = \sem{\pi'}\circ(\proj{F_1, F_2}{x} \otimes \Id{\Phi})$\\
				Let $f \in \sem{F_1 \with F_2}$, $\varphi \in \sem \Phi$.\\
				By definition, we have $f = (f_1, f_2)$, with $f_1 \in \sem{F_1}$ and $f_2 \in \sem{F_2}$.
				
				By induction hypothesis over $\pi'$, we have $\sem \pi(f\otimes\varphi) = \sem{\pi'}(f_x \otimes \varphi) \in \sem H$.
				
				\item If $\pi = \begin{prooftree}
					\hypo{\pi_1}
					\infer[no rule]1{\Phi \lStile F}
					
					\hypo{\pi_2}
					\infer[no rule]1{\Phi \lStile G}
					\infer2[\rWiLinR]{\Phi \lStile F \with G}
				\end{prooftree}$ then $\sem \pi = \varphi \mapsto (\sem{\pi_1}(\varphi), \sem{\pi_2}(\varphi))$\\
					Let $\varphi \in \sem \Phi$.\\
					By induction hypothesis over $\pi_1$ and $\pi_2$, $\sem{\pi_1}(\varphi) \in \sem F$, and $\sem{\pi_2}(\varphi) \in \sem G$.\\
					Therefore, $\sem \pi (\varphi) \in \sem{F \with G}$.
					
				\item The \rCutLin\ case proceeds like the \rArLinL\ case.
				
			\end{itemize}
			
			\proofsubparagraph{Induction, $\uStile$ Fragment}
			\begin{itemize}
				\item If $\pi = \begin{prooftree}
					\hypo{\pi'}
					\infer[no rule]1{P:\Omega \mid \omega' \mid \omega \mid \Omega' \uStile C}
					\infer1[\rExchUnW]{P:\Omega \mid \underbrace{\omega}_{j} \mid \underbrace{\omega'}_{j+1} \mid \Omega' \uStile C}
				\end{prooftree}$ then $\sem{\pi}^{(i)} = \left\lbrace\begin{aligned}
				\sem {\pi'}^{(j+1)} && \text{if $i = j$}\\
				\sem {\pi'}^{(j)} && \text{if $i = j+1$}\\
				\sem {\pi'}^{(i)} && \text{otherwise}
				\end{aligned}\right.$
				
				Let us write $\Omega \mid \omega \mid \omega' \mid \Omega' = (\widehat{\Phi_i}; \Gamma_i)_{i \in I}$.
				
				Let $v$ a consistent valuation for $\pi$. It is trivially a consistent valuation for $\pi'$.
				
				Let $\varphi_i = v(\widehat{\Phi_i})$, and $i, i' \in I\setminus\{j, j+1\}$.
				
				By induction hypothesis:
				\begin{itemize}
					\item  $\sem{\pi}^{(i)\dag}_{\varphi_i} \circ \sem \pi^{(i)}_{\varphi_i} = \sem{\pi'}^{(i)\dag}_{\varphi_i} \circ \sem {\pi'}^{(i)}_{\varphi_i}=\Id {\Gamma_i}$
					\item $\sem{\pi}^{(i)\dag}_{\varphi_i} \circ \sem \pi^{(i')}_{\varphi_{i'}} = \sem{\pi'}^{(i)\dag}_{\varphi_i} \circ \sem {\pi'}^{(i')}_{\varphi_{i'}}=\zeromap$
					\item $\sum_{i \in I}\sem{\pi}^{(i)}_{\varphi_i} \circ \sem{\pi}^{(i)\dag}_{\varphi_i} = \sum_{i \in I}\sem{\pi'}^{(i)}_{\varphi_i} \circ \sem{\pi'}^{(i)\dag}_{\varphi_i} = \Id C$
				\end{itemize}
				
				Furthermore:
				\begin{itemize}
					\item $\sem{\pi}^{(j)\dag}_{\varphi_j} \circ \sem \pi^{(j)}_{\varphi_j} = \sem{\pi'}^{(j+1)\dag}_{\varphi_{j}} \circ \sem {\pi'}^{(j+1)}_{\varphi_{j}}=\Id {\Gamma_i}$
					\item $\sem{\pi}^{(j)\dag}_{\varphi_j} \circ \sem \pi^{(j+1)}_{\varphi_{j+1}} = \sem{\pi'}^{(j+1)\dag}_{\varphi_{j}} \circ \sem {\pi'}^{(j)}_{\varphi_{j+1}} = \zeromap $
					\item $\sem{\pi}^{(j)\dag}_{\varphi_j} \circ \sem \pi^{(i)}_{\varphi_i} = \sem{\pi'}^{(j+1)\dag}_{\varphi_{j}} \circ \sem {\pi'}^{(i)}_{\varphi_{i}} = \zeromap$
				\end{itemize}
				(And symetrically for the $j+1$ cases).
				
				Therefore, $\sem \pi$ is a pre-unitary.
				
				\item If $\pi = \begin{prooftree}
					\hypo{\pi'}
					\infer[no rule]1{P:(\widehat {\Phi_1}; \Gamma, B, A, \Delta) \mid \Omega \uStile C}
					\infer1[\rExchUnG]{P:(\widehat {\Phi_1}; \Gamma, A, B, \Delta) \mid \Omega \uStile C}
				\end{prooftree}$ then,
				\[\sem \pi^{(i)} = \left \lbrace \begin{aligned}
					\varphi \mapsto \sem{\pi'}^{(1)}_\varphi \circ (\Id \Gamma \otimes \sigma_{A, B} \otimes \Id \Delta) && \text{if $i = 1$}\\
					\sem{\pi'}^{(i)} && \text{otherwise}
				\end{aligned} \right.\]
				We also note $\Omega = (\widehat{\Phi_i}; \Lambda_i)_{2 \leq i \leq n}$.\\
				Trivially, if $v$ is a consistent valuation for $\pi$, it is a consistent assignment for $\pi'$.\\
				Again, we write $\varphi_i$ for $v(\widehat{\Phi_i})$.\\
				Let $i \in I \setminus \{1\}$. From the induction hypothesis, we have:
				\begin{itemize}
					\item First,
					\[\sem{\pi}^{(1)\dag}_{\varphi_1} \circ \sem{\pi}^{(1)}_{\varphi_1} = (\Id \Gamma \otimes \sigma_{B,A} \otimes \Id \Delta) \circ \sem{\pi'}^{(1)\dag}_{\varphi_1} \circ \sem{\pi'}^{(1)}_{\varphi_1} \circ (\Id \Gamma \otimes \sigma_{A, B} \otimes \Id \Delta)\]
					\[= (\Id \Gamma \otimes \sigma_{B,A} \otimes \Id \Delta) \circ \Id{\Gamma \otimes B \otimes A \otimes \Delta} \circ (\Id \Gamma \otimes \sigma_{A, B} \otimes \Id \Delta)\]
					\[= \Id{\Gamma \otimes A \otimes B \otimes \Delta}\]
					
					\item Let $i \in I \setminus \{1\}$:
					\[
					\sem{\pi}^{(1)\dag}_{\varphi_1} \circ \sem{\pi}^{(i)}_{\varphi_1} = (\Id \Gamma \otimes \sigma_{B,A} \otimes \Id \Delta) \circ \sem{\pi'}^{(1)\dag}_{\varphi_1} \circ \sem{\pi'}^{(i)}_{\varphi_i}\]
					\[= (\Id \Gamma \otimes \sigma_{B,A} \otimes \Id \Delta) \circ \zeromap\]
					\[=\zeromap\]
					
					\item Finally,
					\[
					\sum_{i \in I} \sem{\pi}^{(i)}_{\varphi_i} \circ \sem{\pi}^{(i)\dag}_{\varphi_i}\]
					\[= \sem{\pi'}^{(1)}_{\varphi_1} \circ (\Id \Gamma \otimes \sigma_{A, B} \otimes \Id \Delta) \circ (\Id \Gamma \otimes \sigma_{B,A} \otimes \Id \Delta) \circ \sem{\pi}^{(1)\dag}_{\varphi_1}+ \sum_{i\geq 2}^n \sem{\pi'}^{(i)}_{\varphi_i} \circ \sem{\pi'}^{(i)\dag}_{\varphi_i}\]
					\[= \sem{\pi'}^{(1)}_{\varphi_1} \circ \sem{\pi'}^{(1)\dag}_{\varphi_1} + \sum_{i\geq 2}^n \sem{\pi'}^{(i)}_{\varphi_i} \circ \sem{\pi'}^{(i)\dag}_{\varphi_i}\]
					\[= \sum_{i\in I} \sem{\pi'}^{(i)}_{\varphi_i} \circ \sem{\pi'}^{(i)\dag}_{\varphi_i}\]
					\[= \Id C\]

				\end{itemize}
				And this is enough to prove $\sem \pi$ is a pre-unitary.
				
				\item The cases of \rOneL, \rBtL\ and \rPhTh\ are all handled the same way and are trivial.
				
				\item If $\pi = \begin{prooftree}
					\hypo{\pi_1}
					\infer[no rule]1{P:(\widehat{\Phi_i}; \Gamma_i)_{1 \leq i \leq n} \uStile A}
					
					\hypo{\pi_2}
					\infer[no rule]1{P:(\widehat \Psi; B, \Delta) \mid \Omega \uStile C}
					
					\infer2[\rUnitUn]{P:(\widehat{\Phi_i}, A \stackrel{\alpha}{\uarrow} B, \widehat \Psi; \Gamma_i, \Delta)_{1 \leq i \leq n} \mid \Omega \uStile C}
				\end{prooftree}$ then:
				\begin{itemize}
					\item For all $1 \leq i \leq n$, $\sem{\pi}^{(i)}_{\varphi\otimes f \otimes \psi} = \sem{\pi_2}^{(1)}_\psi \circ ((f\circ \sem {\pi_1}^{(i)}_\varphi) \otimes \Id \Delta)$
					\item For all $i > n$, $\sem \pi^{(i)} = \sem {\pi_2}^{(i-n+1)}$.
				\end{itemize}
				We write $\Omega = (\widehat{\Theta_i}; \Lambda_i)_{n+1 \leq i \leq m}$.
				
				Let $v$ a consistent valuation for $\pi$.\\
				Let $f = v(A \stackrel{\alpha}{\uarrow} B)$, $\varphi_i = v(\widehat{\Phi_i})$, $\psi = v(\widehat \Psi)$ and $t_i = v(\widehat{\Theta_i})$.
				
				Let us prove the pre-unitarity of $\sem \pi$.
				\begin{itemize}
					\item Let $1 \leq i \leq n$.
					\[\sem{\pi}^{(i)\dag}_{\varphi_i \otimes f \otimes \psi} \circ \sem{\pi}^{(i)}_{\varphi_i \otimes f \otimes \psi}\]
					\[=((f\circ \sem {\pi_1}^{(i)}_{\varphi_i}) \otimes \Id \Delta)^\dag \circ \sem{\pi_2}^{(1)\dag}_\psi \circ \sem{\pi_2}^{(1)}_\psi \circ ((f\circ \sem {\pi_1}^{(i)}_{\varphi_i}) \otimes \Id \Delta)\]
					\[= ((f\circ \sem {\pi_1}^{(i)}_{\varphi_i}) \otimes \Id \Delta)^\dag \circ ((f\circ \sem {\pi_1}^{(i)}_{\varphi_i}) \otimes \Id \Delta)\]
					\[= ({\pi_1}^{(i)\dag}_{\varphi_i} \circ f^\dag \circ f \circ {\pi_1}^{(i)}_{\varphi_i}) \otimes \Id \Delta\]
					\[= ({\pi_1}^{(i)\dag}_{\varphi_i} \circ {\pi_1}^{(i)}_{\varphi_i})\]
					\[= \Id {\Gamma_i} \otimes \Id \Delta\]
					\[= \Id {\Gamma_i, \Delta}\]
					
					\item Let $n+1 \leq i \leq m$
						\[
							\sem{\pi}^{(i)\dag}_{t_i} \circ \sem{\pi}^{(i)}_{t_i} = \sem{\pi_2}^{(i-n+1)\dag}_{t_i} \circ \sem{\pi_2}^{(i-n+1)}_{t_i}
							= \Id{\Lambda_i} \]
						
					\item Let $1 \leq i \leq n$, $1 \leq j \leq n$ such that $i \neq j$
					\[
						\sem{\pi}^{(i)\dag}_{\varphi_i \otimes f \otimes \psi} \circ \sem{\pi}^{(j)}_{\varphi_j \otimes f \otimes \psi}\]
					\[=((f\circ \sem {\pi_1}^{(i)}_{\varphi_i}) \otimes \Id \Delta)^\dag
						\circ \sem{\pi_2}^{(1)\dag}_\psi \circ \sem{\pi_2}^{(1)}_\psi \circ ((f\circ \sem {\pi_1}^{(j)}_{\varphi_j}) \otimes \Id \Delta)\]
					\[= ((f\circ \sem {\pi_1}^{(i)}_{\varphi_i}) \otimes \Id \Delta)^\dag \circ ((f\circ \sem {\pi_1}^{(j)}_{\varphi_j}) \otimes \Id \Delta)\]
					\[= ({\pi_1}^{(i)\dag}_{\varphi_i} \circ f^\dag \circ f \circ {\pi_1}^{(j)}_{\varphi_j}) \otimes \Id \Delta\]
					\[= ({\pi_1}^{(i)\dag}_{\varphi_i} \circ {\pi_1}^{(j)}_{\varphi_i})\]
					\[= \zeromap \otimes \Id \Delta\]
					\[= \zeromap\]
					
					\item Let $1 \leq i \leq n < j \leq m$.
						\[
							\sem{\pi}^{(i)\dag}_{\varphi_i \otimes f \otimes \psi} \circ \sem{\pi}^{(j)}_{t_j} = ((f\circ \sem {\pi_1}^{(i)}_{\varphi_i}) \otimes \Id \Delta)^\dag \circ \sem{\pi_2}^{(1)\dag}_\psi \circ \sem{\pi_2}^{(j-n+1)}_{t_j}\]
						\[
							= ((f\circ \sem {\pi_1}^{(i)}_{\varphi_i}) \otimes \Id \Delta)^\dag \circ \zeromap\]
						\[= \zeromap
						\]
						(and similarly for $\sem{\pi}^{(j)\dag}_{t_j} \circ \sem \pi^{(i)}_{\varphi_i \otimes f \otimes \psi}$).
						
					\item For all $1 \leq i \leq n$,
						\[
						\sem{\pi}^{(i)}_{\varphi_i \otimes f \otimes \psi} \circ \sem{\pi}^{(i)\dag}_{\varphi_i \otimes f \otimes \psi}\]
						\[= \sem{\pi_2}^{(1)}_\psi \circ ((f\circ \sem {\pi_1}^{(i)}_{\varphi_i}) \otimes \Id \Delta) \circ ((f\circ \sem {\pi_1}^{(i)}_{\varphi_i}) \otimes \Id \Delta)^\dag \circ  \sem{\pi_2}^{(1)\dag}_\psi\]
						\[
						= \sem{\pi_2}^{(1)}_\psi \circ ((f \circ \sem{\pi_1}^{(i)}_{\varphi_i} \circ \sem{\pi_1}^{(i)\dag}_{\varphi_i} \circ f^\dag) \otimes \Id \Delta) \circ \sem{\pi_2}^{(1)\dag}_\psi
						\]
						
						Therefore:
						\[
							\sum_{i=1}^n \sem{\pi}^{(i)}_{\varphi_i \otimes f \otimes \psi} \circ \sem{\pi}^{(i)\dag}_{\varphi_i \otimes f \otimes \psi}\]
							\[= \sem{\pi_2}^{(1)}_\psi \circ \left(\left(f \circ \left( \sum_{i=1}^n\sem{\pi_1}^{(i)}_{\varphi_i} \circ \sem{\pi_1}^{(i)\dag}_{\varphi_i}\right) \circ f^\dag \right) \otimes \Id \Delta\right) \circ \sem{\pi_2}^{(1)\dag}_\psi\]
							\[= \sem{\pi_2}^{(1)}_\psi \circ \left(\left(f \circ f^\dag \right) \otimes \Id \Delta\right) \circ \sem{\pi_2}^{(1)\dag}_\psi\]
							\[
							= \sem{\pi_2}^{(1)}_\psi \circ \sem{\pi_2}^{(1)\dag}_\psi
						\]
						
						And we now have
						\[
							\sum_{i=1}^n \sem{\pi}^{(i)}_{\varphi_i \otimes f \otimes \psi} \circ \sem{\pi}^{(i)\dag}_{\varphi_i \otimes f \otimes \psi} + \sum_{i=n+1}^{m} \sem{\pi}^{(i)}_{t_i} \circ \sem{\pi}^{(i)\dag}_{t_i}\]
						\[=\sem{\pi_2}^{(1)}_\psi \circ \sem{\pi_2}^{(1)\dag}_\psi + \sum_{i=n+1}^{m} \sem{\pi_2}^{(i-n+1)}_{t_i} \circ \sem{\pi_2}^{(i-n+1)\dag}_{t_i}\]
						\[=\Id C\]
						
				\end{itemize}
				And we have proven that $\sem \pi$ is pre-unitary.
				
				\item If $\Omega_1 = (\widehat{\Phi_i}; \Gamma_i)_{1 \leq i \leq n}$, $\Omega_2 = (\widehat {\Psi_j}; \Delta_j)_{1 \leq j \leq m}$ and $\pi = \begin{prooftree}
					\hypo{\pi_1}
					\infer[no rule]1{P:\Omega_1 \uStile A}
					
					\hypo{\pi_2}
					\infer[no rule]1{P:\Omega_2 \uStile B}
					\infer2[\rBtR]{P:\Omega_1 \times \Omega_2 \uStile A \boxtimes B}
				\end{prooftree}$, then, for all $i, j$, $\sem{\pi}^{(ij)} = \sem {\pi_1}^{(i)} \otimes \sem {\pi_2}^{(j)}$
				
				Let $v$ a consistent valuation for $\pi$. It is trivially a consistent valuation for $\pi_1$ and $\pi_2$.\\
				We write $\varphi_i = v(\widehat{\Phi_i})$ and $\psi_j = v(\widehat{\Psi_j})$.
				
				Let us now check the conditions for pre-unitarity. 
				\begin{itemize}
					\item Let $1 \leq i \leq n$ and $1 \leq j \leq m$.
					\[
						\sem{\pi}^{(ij)\dag}_{\varphi_i \otimes \psi_j} \circ \sem{\pi}^{(ij)}_{\varphi_i \otimes \psi_j} =\left(\sem{\pi_1}^{(i)}_{\varphi_i} \otimes \sem{\pi_2}^{(j)}_{\psi_j}\right)^\dag \circ \left(\sem{\pi_1}^{(i)}_{\varphi_i} \otimes \sem{\pi_2}^{(j)}_{\psi_j}\right)\]
						\[= \left(\sem{\pi_1}^{(i)\dag}_{\varphi_i} \circ \sem{\pi_1}^{(i)}_{\varphi_i} \right) \otimes \left(\sem{\pi_2}^{(j)\dag}_{\psi_j} \circ \sem{\pi_2}^{(j)}_{\psi_j} \right)\]
						\[= \Id {\Gamma_i} \otimes \Id {\Delta_j}\]
						\[= \Id{\Gamma_i, \Delta_j}\]
					
					\item Let $1 \leq i' \leq n$ and $1 \leq j' \leq n$\\
						If $i \neq i'$ then
						\[\sem{\pi}^{(ij)\dag}_{\varphi_i \otimes \psi_j} \circ \sem{\pi}^{(i'j')}_{\varphi_{i'} \otimes \psi_{j'}}=\left(\sem{\pi_1}^{(i)}_{\varphi_i} \otimes \sem{\pi_2}^{(j)}_{\psi_j}\right)^\dag \circ \left(\sem{\pi_1}^{(i')}_{\varphi_{i'}} \otimes \sem{\pi_2}^{(j')}_{\psi_{j'}}\right)\]
						\[= \left(\sem{\pi_1}^{(i)\dag}_{\varphi_i} \circ \sem{\pi_1}^{(i')}_{\varphi_{i'}} \right) \otimes \left(\sem{\pi_2}^{(j)\dag}_{\psi_j} \circ \sem{\pi_2}^{(j')}_{\psi_{j'}} \right)\]
						\[= \zeromap \otimes \left(\sem{\pi_2}^{(j)\dag}_{\psi_j} \circ \sem{\pi_2}^{(j')}_{\psi_{j'}} \right)\]
						\[= \zeromap\]
						And similarly for the case where $j \neq j'$.
					
					\item And finally:
						\[
						\sum_{i,j} \sem{\pi}^{(ij)}_{\varphi_i \otimes \psi_j} \circ \sem{\pi}^{(ij)\dag}_{\varphi_i \otimes \psi_j}
						= \sum_{ij}
						\left(\sem{\pi_1}^{(i)}_{\varphi_i} \otimes \sem{\pi_2}^{(j)}_{\psi_j}\right) \circ \left(\sem{\pi_1}^{(i)}_{\varphi_{i}} \otimes \sem{\pi_2}^{(j)}_{\psi_{j}}\right)^\dag\]
						\[= \sum_{i,j} \left(\sem{\pi_1}^{(i)}_{\varphi_i} \circ \sem {\pi_1}^{(i)\dag}_{\varphi_i}\right) \otimes \left(\sem{\pi_2}^{(j)}_{\psi_j} \circ \sem {\pi_2}^{(j)\dag}_{\psi_j}\right)\]
						\[= \left(\sum_{i} \sem{\pi_1}^{(i)}_{\varphi_i} \circ \sem {\pi_1}^{(i)\dag}_{\varphi_i} \right) \otimes \left(\sum_j \sem{\pi_2}^{(j)}_{\psi_j} \circ \sem {\pi_2}^{(j)\dag}_{\psi_j} \right)\]
						\[= \Id{A} \otimes \Id B\]
						\[= \Id{A \boxtimes B}\]
				\end{itemize}
				
				And therefore $\sem \pi$ is also pre-unitary.
				
				\item If $\pi = \begin{prooftree}
					\hypo{\pi'}
					\infer[no rule]1{(\widehat \Phi; A, \Gamma) \mid (\widehat \Phi; B, \Gamma) \mid \Omega \uStile C}
					\infer1[\rOpL]{(\widehat \Phi; A \oplus B, \Gamma) \mid \Omega \uStile C}
				\end{prooftree}$ then, for all $i > 1$, and for all $\varphi \in \cV_{\Phi}$:
				\[\left \lbrace \begin{aligned}
					\sem \pi^{(1)}_\varphi &= \sem{\pi'}^{(1)}_\varphi \circ (\proj{A,B}1 \otimes \Id \Gamma) + \sem{\pi'}^{(2)}_\varphi \circ (\proj{A,B}2 \otimes \Id\Gamma)\\
					\sem \pi^{(i)} &= \sem {\pi'}^{(i+1)}
				\end{aligned}\right.\]
				
				Let $v$ a consistent valuation for $\pi$. It is trivially a consistent valuation for $\pi'$.
				
				Let $\varphi_1 = v(\widehat \Phi)$, and for $i > 1$, $\varphi_i$ the valuation of the higher terms of the $i$-th world of $\pi$.
				
				Let us now prove pre-unitarity.
				\begin{itemize}
					\item First,
					 \[\begin{aligned}
						\sem{\pi}^{(1)\dag}_{\varphi_1} \circ \sem{\pi}^{(1)}_{\varphi_1} &= \left( \sem{\pi'}^{(1)}_{\varphi_1} \circ (\proj{A,B}1 \otimes \Id \Gamma) + \sem{\pi'}^{(2)}_{\varphi_1} \circ (\proj{A,B}2 \otimes \Id\Gamma) \right)^\dag\\&\phantom{=} \circ \left(\sem{\pi'}^{(1)}_{\varphi_1} \circ (\proj{A,B}1 \otimes \Id \Gamma) + \sem{\pi'}^{(2)}_{\varphi_1} \circ (\proj{A,B}2 \otimes \Id\Gamma) \right)\end{aligned}\]
						
						\[\begin{aligned}
						=&\ (\proj{A,B}1 \otimes \Id \Gamma)^\dag \circ \sem{\pi'}^{(1)\dag}_{\varphi_1} \circ \sem{\pi'}^{(1)}_{\varphi_1} \circ (\proj{A,B}1 \otimes \Id \Gamma)\\
						& + (\proj{A,B}1 \otimes \Id \Gamma)^\dag \circ \sem{\pi'}^{(1)\dag}_{\varphi_1} \circ \sem{\pi'}^{(2)}_{\varphi_1} \circ (\proj{A,B}2 \otimes \Id\Gamma)\\
						& + (\proj{A,B}2 \otimes \Id\Gamma)^\dag \circ \sem{\pi'}^{(2)\dag}_{\varphi_1} \circ \sem{\pi'}^{(1)}_{\varphi_1} \circ (\proj{A,B}1 \otimes \Id \Gamma)\\
						& +(\proj{A,B}2 \otimes \Id\Gamma)^\dag \circ \sem{\pi'}^{(2)\dag}_{\varphi_1} \circ \sem{\pi'}^{(2)}_{\varphi_1} \circ (\proj{A,B}2 \otimes \Id\Gamma)\end{aligned}\]
						\[\begin{aligned}
						=&\ (\proj{A,B}1 \otimes \Id \Gamma)^\dag \circ \Id{A, \Gamma} \circ (\proj{A,B}1 \otimes \Id \Gamma)\\
						&+ (\proj{A,B}1 \otimes \Id \Gamma)^\dag \circ \zeromap \circ (\proj{A,B}2 \otimes \Id\Gamma)\\
						&+ (\proj{A,B}2 \otimes \Id\Gamma)^\dag \circ \zeromap \circ (\proj{A,B}1 \otimes \Id \Gamma)\\
						& +(\proj{A,B}2 \otimes \Id\Gamma)^\dag \circ \Id{B, \Gamma}\circ (\proj{A,B}2 \otimes \Id\Gamma)\end{aligned}\]
						\[=(\inj{A,B}1 \circ \Id A \circ \proj{A,B}1)\otimes \Id \Gamma + (\inj{A,B}2 \circ \Id B \circ \proj{A,B}2) \otimes \Id \Gamma\]
						\[=\Id{A \oplus B} \otimes \Id{\Gamma}\]
						\[=\Id{A\oplus B, \Gamma}\]
					
					\item Let $i > 1$,
						\[
						\sem{\pi}^{(1)\dag}_{\varphi_1} \circ \sem{\pi}^{(i)}_{\varphi_i} =\left( \sem{\pi'}^{(1)}_\varphi \circ (\proj{A,B}1 \otimes \Id \Gamma) + \sem{\pi'}^{(2)}_\varphi \circ (\proj{A,B}2 \otimes \Id\Gamma) \right)^\dag \circ {\pi'}^{(i+1)}_{\varphi_i}\]
						\[=(\proj{A,B}1 \otimes \Id \Gamma)^\dag \circ \sem{\pi'}^{(1)\dag}_{\varphi_1} \circ \sem{\pi'}^{(i-1)}_{\varphi_i} + (\proj{A,B}2 \otimes \Id\Gamma)^\dag \circ \sem{\pi'}^{(2)\dag}_{\varphi_1} \circ \sem{\pi'}^{(i+1)}_{\varphi_i}\]
						\[= \zeromap + \zeromap\]
						\[= \zeromap\]
					
					\item Finally,
					\[\begin{aligned}
						\sem{\pi}^{(1)}_{\varphi_1} \circ \sem{\pi}^{(1)\dag}_{\varphi_1} &= \left( \sem{\pi'}^{(1)}_\varphi \circ (\proj{A,B}1 \otimes \Id \Gamma) + \sem{\pi'}^{(2)}_\varphi \circ (\proj{A,B}2 \otimes \Id\Gamma) \right)\\&\phantom{=} \circ \left( \sem{\pi'}^{(1)}_\varphi \circ (\proj{A,B}1 \otimes \Id \Gamma) + \sem{\pi'}^{(2)}_\varphi \circ (\proj{A,B}2 \otimes \Id\Gamma) \right)^\dag\end{aligned}\]
						\[= \sem{\pi'}^{(1)}_{\varphi_1} \circ \sem{\pi'}^{(1)\dag}_{\varphi_1} + \sem{\pi'}^{(1)}_{\varphi_1} \circ \sem{\pi'}^{(2)\dag}_{\varphi_1}\]
						due to properties of projections\\
						Therefore,
						\[\sum_{i=1}^n \sem{\pi}^{(i)}_{\varphi_i} \circ \sem{\pi}^{(i)\dag}_{\varphi_i} = \sem{\pi'}^{(1)}_{\varphi_1} \circ \sem{\pi'}^{(1)\dag}_{\varphi_1} + \sem{\pi'}^{(1)}_{\varphi_1} \circ \sem{\pi'}^{(2)\dag}_{\varphi_1} + \sum_{i=2}^{n} \sem{\pi}^{(i+1)}_{\varphi_i} \circ \sem{\pi}^{(i+1)\dag}_{\varphi_i} = \Id C\]
				\end{itemize}
				Thus, we have proved that $\sem \pi$ is pre-unitary in the \rOpL\ case.
				
				\item If $\pi = \begin{prooftree}
					\hypo{\pi_1}
					\infer[no rule]1{P: \Omega_1 \uStile A}
					\hypo{\pi_2}
					\infer[no rule]1{P:\Omega_2 \uStile B}
					
					\infer2[\rOpR]{P:\Omega_1 \mid \Omega_2 \uStile A \oplus B}
				\end{prooftree}$,
				
				where $\Omega_1 = (\widehat{\Phi_1}; \Gamma_i)_{1 \leq i \leq n}$ and $\Omega_2 = (\widehat{\Psi_j}; \Delta_j)_{1 \leq j \leq m}$ then, a consistent assignment $v$ for $\pi$ is a consistent assignment for $\pi_1$ and for $\pi_2$.\\
				Let $\varphi_i = v(\widehat{\Phi_i})$ and $\psi_j =v(\widehat{\Psi_j})$.
				
				Let us now prove that $\sem \pi$ is pre-unitary:
				\begin{itemize}
					\item Let $1 \leq i \leq n$. 
						\[\sem{\pi}^{(i)\dag}_{\varphi_i} \circ \sem{\pi}^{(i)}_{\varphi_i} = \sem{\pi_1}^{(i)\dag}_{\varphi_i} \circ \proj{A,B}{1} \circ \inj{A,B}{1} \circ \sem{\pi_1}^{(i)}_{\varphi_i}\]
						\[= \sem{\pi_1}^{(i)\dag}_{\varphi_i} \circ \sem{\pi_1}^{(i)}_{\varphi_i}\]
						\[= \Id {\Gamma_i}\]
						
						And similarly for $1 \leq j \leq m$ and $\sem \pi^{(n+j)\dag}_{\psi_j} \circ \sem \pi^{(n+j)}_{\psi_j}$.
							
					\item Let $1 \leq i \leq n$ and $1 \leq i' \leq n$ with $i \neq i'$.\\
						It is trivial to see that $\sem{\pi}^{(i)\dag}_{\varphi_i} \circ \sem{\pi}^{(i')}_{\varphi_{i'}} = \sem{\pi_1}^{(i)\dag}_{\varphi_i} \circ \sem{\pi_1}^{(i')}_{\varphi_{i'}} = \zeromap$\\
						This is also true for the same property regarding $\pi_2$.
						
						Furthermore, for $1 \leq j \leq m$:
						\[\sem{\pi}^{(i)\dag}_{\varphi_i} \circ \sem{\pi}^{(n+j)}_{\varphi_j}=\sem{\pi_1}^{(i)\dag}_{\varphi_i} \circ \proj{A,B}{1} \circ \inj{A,B}{1} \circ \sem{\pi_2}^{(j)}_{\psi_j}\]
						\[=\sem{\pi_1}^{(i)\dag}_{\varphi_i} \circ \zeromap \circ \sem{\pi_2}^{(j)}_{\psi_j}\]
						\[= \zeromap\]
						And vice versa.
					
					\item We now verify the last condition for pre-unitarity:
						\[\sum_{i=1}^{n} \sem\pi^{(i)}_{\varphi_i} \circ \sem\pi^{(i)\dag}_{\varphi_i} + \sum_{j=1}^{m}\sem\pi^{(n+j)}_{\psi_j} \circ \sem{\pi}^{(n+j)\dag}_{\psi_j}\]
						\[= \sum_{i=1}^n \inj{A, B}1 \circ \sem{\pi_1}^{(i)}_{\varphi_i} \circ \sem {\pi_1}^{(i)\dag}_{\varphi_i} \circ \proj{A,B} 1 +\sum_{j=1}^m \inj{A, B}2 \circ \sem{\pi_2}^{(j)}_{\psi_j} \circ \sem {\pi_2}^{(j)\dag}_{\psi_j} \circ \proj{A,B} 2\]
						\[=\inj{A, B}1 \circ \left( \sem{\pi_1}^{(i)}_{\varphi_i} \circ \sem {\pi_1}^{(i)\dag}_{\varphi_i} \right) \circ \proj{A,B} 1 + \inj{A, B}2 \circ \left(\sem{\pi_2}^{(j)}_{\psi_j} \circ \sem {\pi_2}^{(j)\dag}_{\psi_j} \right)\circ \proj{A,B} 2\]
						\[= \inj{A,B}1 \circ \proj{A,B}1 + \inj{A,B}1 \circ \proj{A, B}2\]
						\[= \Id{A \oplus B}\]
				\end{itemize}
					And therefore, $\sem \pi$ is a pre-unitary.
				
				\item If $\pi = \begin{prooftree}
					\hypo{\pi'}
					\infer[no rule]1{(\widehat \Phi; (\unit \oplus \unit), \Gamma) \mid \Omega \uStile C}
					\infer1[\rRotTh]{(\widehat \Phi; \Gamma)\mid (\widehat \Phi; \Gamma) \mid \Omega \uStile C}
				\end{prooftree}$ then
				\[\sem{\pi}^{(i)} = \left\lbrace\begin{aligned}
					\varphi \mapsto \sem{\pi'}^{(1)}_\varphi\left(\begin{pmatrix}
						\cos \theta \\ \sin \theta
					\end{pmatrix} \otimes \Id \Gamma\right)&&\text{if $i = 1$}\\
					\varphi \mapsto \sem{\pi'}^{(1)}_\varphi \left(\begin{pmatrix}
						-\sin \theta \\ \cos \theta
					\end{pmatrix} \otimes \Id \Gamma\right)&&\text{if $i = 2$}\\
					\sem{\pi'}^{(i-1)}&&\text{otherwise}
				\end{aligned}\right. \]
				
				Let $v$ a consistent assignment for $\pi$. Let $\varphi = v(\widehat \Phi)$. $v$ is trivially a consistent assignment for $\pi'$.
				
				Let us prove preunitarity:
				\begin{itemize}
					\item First,
					\[
						\sem{\pi}^{(1)\dag}_\varphi \circ \sem{\pi}^{(1)}_\varphi= \left(\begin{pmatrix}
							\cos \theta & \sin \theta
						\end{pmatrix} \otimes \Id \Gamma \right) \circ \sem{\pi'}^{(1)\dag}_\varphi \circ \sem{\pi'}^{(1)}_\varphi \circ \left( \begin{pmatrix} \cos \theta \\ \sin \theta \end{pmatrix} \otimes \Id \Gamma \right)\]
						\[= \left(\begin{pmatrix} \cos \theta & \sin \theta \end{pmatrix} \begin{pmatrix} \cos \theta \\ \sin \theta \end{pmatrix} \right) \otimes \Id \Gamma\]
						\[=(\cos^2 \theta + \sin^2 \theta) \otimes \Id \Gamma \]
						\[= (1) \otimes \Id \Gamma\] 
						\[=\Id \Gamma\]
					
					And similarly for $\sem{\pi}^{(2)\dag}_\varphi \circ \sem{\pi}^{(2)}_\varphi = \Id \Gamma$
					
					\item For orthogonlaity,
						\[\sem{\pi}^{(1)\dag}_\varphi \circ \sem{\pi}^{(2)}_\varphi = \left(\begin{pmatrix}
							\cos \theta & \sin \theta
						\end{pmatrix} \otimes \Id \Gamma \right) \circ \sem{\pi'}^{(1)\dag}_\varphi \circ \sem{\pi'}^{(1)}_\varphi \circ \left( \begin{pmatrix} -\sin \theta \\ \cos \theta \end{pmatrix} \otimes \Id \Gamma \right)\]
						\[= \left(\begin{pmatrix} \cos \theta & \sin \theta \end{pmatrix} \begin{pmatrix} -\sin \theta \\ \cos \theta \end{pmatrix} \right) \otimes \Id \Gamma\]
						\[= (\cos \theta \sin \theta - \cos \theta \sin \theta) \otimes \Id \Gamma\]
						\[= (0) \otimes \Id \Gamma\]
						\[= \zeromap\]

					\item Finally, \[\sem{\pi}^{(1)}_\varphi \circ \sem{\pi}^{(1)\dag}_\varphi =\sem{\pi'}^{(1)} \circ \left(\begin{pmatrix}\cos \theta \\ \sin \theta\end{pmatrix} \otimes \Id \Gamma \right) \circ \left(\begin{pmatrix} \cos \theta & \sin \theta \end{pmatrix} \otimes \Id \Gamma \right) \circ \sem{\pi'}^{(1)\dag}_\varphi \]
						\[= \sem{\pi'}^{(1)}_\varphi \circ \left( \begin{pmatrix}
							\cos^2 \theta & \cos \theta \sin \theta \\ \cos \theta \sin \theta & \sin^2 \theta
						\end{pmatrix} \otimes \Id \Gamma \right) \circ \sem{\pi'}^{(1)\dag}_\varphi\]
						
						And similarly, \[\sem{\pi}^{(2)}_\varphi \circ \sem{\pi}^{(2)\dag}_\varphi = \sem{\pi'}^{(1)}_\varphi \circ \left( \begin{pmatrix}
							\sin^2 \theta & -\cos \theta \sin \theta \\ -\cos \theta \sin \theta & \cos^2 \theta
						\end{pmatrix} \otimes \Id \Gamma \right) \circ \sem{\pi'}^{(1)\dag}_\varphi\]
						
						Therefore,
						\[\sem{\pi}^{(1)}_\varphi \circ \sem{\pi}^{(1)\dag}_\varphi + \sem{\pi}^{(2)}_\varphi \circ \sem{\pi}^{(2)\dag}_\varphi\]\[ = \sem{\pi'}^{(1)}_\varphi \circ \left( \begin{pmatrix}
							\cos^2 \theta + \sin^2 \theta & \cos \theta \sin \theta -\cos \theta \sin \theta \\ \cos \theta \sin \theta -\cos \theta \sin \theta & \cos^2 \theta + \sin^2 \theta
						\end{pmatrix} \otimes \Id \Gamma \right)\circ \sem{\pi'}^{(1)\dag}_\varphi\]
						\[ = \sem{\pi'}^{(1)}_\varphi \circ \left( \begin{pmatrix}
							1 & 0 \\ 0 & 1
						\end{pmatrix} \otimes \Id \Gamma \right)\circ \sem{\pi'}^{(1)\dag}_\varphi\]
						\[= \sem{\pi'}^{(1)}_\varphi \circ \sem{\pi'}^{(1)\dag}_\varphi\]

				\end{itemize}
				These three equalities are enough to prove the pre-unitarity of $\sem \pi$, through the induction hypothesis on $\sem{\pi'}$.
				
				\item The case for \rCutUnG\ proceeds the same as the case of \rUnitUn.
			\end{itemize}
			
			We finish with the fragment of $\uStile$ that manipulates higher terms:
			\begin{itemize}
				\item If $\pi =\begin{prooftree}
					\hypo{\pi'}
					\infer[no rule]1{P: (\widehat \Phi, G^\beta, F^\alpha, \widehat \Psi; \Gamma) \mid \Omega \uStile C}
					\infer1[\rExchUnH]{P: (\widehat \Phi, F^\alpha, G^\beta, \widehat \Psi; \Gamma) \mid \Omega \uStile C}
				\end{prooftree}$ then
				\[\sem{\pi}^{(i)} = \left \lbrace \begin{aligned}
					\sem{\pi'}^{(1)}\circ(\Id \Phi \otimes \sigma_{F, G} \otimes \Id \Psi) & \text{if $i = 1$}\\
					\sem{\pi'}^{(i)}&\text{otherwise}
				\end{aligned} \right.\]
				
				Let $v$ a consistent valuation for $\pi$. It is trivially a consistent evaluation for $\pi'$.\\
				From this, the pre-unitarity is trivial.
				
				\item If $\pi = \begin{prooftree}
						\hypo{\pi'}
						\infer[no rule]1{P:(F^{\otimes_1 \alpha}, G^{\otimes_2 \alpha}, \widehat \Phi; \Gamma) \mid \Omega \uStile C}
						\infer1[\rOtUn]{P:((F\otimes G)^\alpha, \widehat \Phi; \Gamma)}
				\end{prooftree}$, let $v$ a consistent valuation for $\pi$, then $v$ is a consistent valuation for $\pi'$.\\
					From this, the preunitarity of $\sem \pi$ is trivial since $\sem \pi = \sem {\pi'}$.
					
				\item If $\pi = \begin{prooftree}
					\hypo{\pi_L}
					\infer[no rule]1{\Phi \lStile F}
					
					\hypo{\pi'}
					\infer[no rule]1{P \sqcup (\widehat \Phi, \pi_L, F \stackrel{\alpha}{\multimap} G, G^\beta): (G^\beta, \widehat \Psi; \Gamma) \mid \Omega \uStile C}
					\infer2[\rArUnNew]{P:(\widehat \Phi, F \stackrel{\alpha}{\multimap} G, \widehat \Psi; \Gamma) \mid \Omega \uStile C}
				\end{prooftree}$, let $v$ a consistent valuation for $\pi$.\\
					Let $\varphi = v(\widehat \Phi)$, $f = v(F \stackrel{\alpha}{\multimap}g)$ and $\psi = v(\widehat \Psi)$.\\
					It is trivial that $v'$ a valuation equal to $v$, except that we also have $v(G^\beta) = f(\sem{\pi_L}(\varphi))$.
					
					Since $\sem{\pi}^{1}_{\varphi \otimes f \otimes \psi} = \sem{\pi'}^{(1)}_{v'(G^\beta) \otimes \psi}$ and, for all $i >1$, $\sem{\pi}^{(i)} = \sem{\pi'}^{(i)}$, we trivially obtain pre-unitarity from the induction hypothesis.
				
				\item If $\pi = \begin{prooftree}
					\hypo{\pi'}
					\infer[no rule]1{P \sqcup (\widehat \Phi, \pi_L, F \stackrel{\alpha}{\multimap} G, G^\beta): (G^\beta, \widehat \Psi; \Gamma) \mid \Omega \uStile C}
					\infer1[\rArUnPro]{P\sqcup (\widehat \Phi, \pi_L, F \stackrel{\alpha}{\multimap} G, G^\beta):(\widehat \Phi, F \stackrel{\alpha}{\multimap} G, \widehat \Psi; \Gamma) \mid \Omega \uStile C}
				\end{prooftree}$, let $v$ a consistent valuation of $\pi$.\\
				Let $\varphi = v(\widehat{\Phi}), f = v(F \stackrel{\alpha}{\multimap} G)$, $\psi = v(\widehat{\Psi})$.\\
				By definition, $v(G^\beta) = f(\sem \pi_L (\varphi))$.
				
				We are now in the same situation as the above case, and pre-unitarity becomes trivial.
				
			\item If $\pi = \begin{prooftree}
				\hypo{\pi'}
				\infer[no rule]1{P:(F_x^{\oplus_x \alpha}, \widehat \Phi; \Gamma) \mid \Omega \uStile C}
				\infer1[\rWiUn{x}]{P:((F_1\with F_2)^\alpha, \widehat \Phi; \Gamma) \mid \Omega \uStile C}
			\end{prooftree}$, let $v$ a consistent valuation of $\pi$.
			
			By definition, $v$ is a consistent valuation for $\pi'$, and $\proj{F_1, F_2} x(v((F_1 \with F_2)^\alpha)) = v(F^{\oplus_x \alpha}_x)$.
			
			Since $\sem{\pi}^{(i)} = \begin{cases*}
				\sem{\pi'}^{(1)} \circ(\proj{F_1, F_2}x \circ \Id \Phi)&if $i = 1$\\
				\sem{\pi'}^{(i)}& otherwise
			\end{cases*}$, we trivially have pre-unitarity.
			
			\item If $\pi = \begin{prooftree}
				\hypo{\pi_1}
				\infer[no rule]1{\unlab(\widehat \Phi) \lStile F}
				
				\hypo{\pi_2}
				\infer[no rule]1{P:(F^\alpha, \widehat{\Psi_i}; \Gamma_i)_{i \in I} \mid \Omega \uStile C}
				\infer2[\rCutUnH]{P:(\widehat \Phi, \widehat{\Psi_i}; \Gamma_i)_{i \in I} \mid \Omega \uStile C}
			\end{prooftree}$ then let $v$ a valuation of $\pi$.\\
			Let $\varphi = v(\widehat \Phi)$, $\psi_i = v(\widehat{\Psi_i})$.
			Let $v'$ a valuation equal to $v$ except that $v'(F^\alpha) = \sem{\pi_L}(\varphi)$.
			
			Since $\sem{\pi}^{(i)} = \begin{cases*}
				\sem{\pi_2}^{(i)} \circ (\sem \pi_1 \otimes \Id {\Psi_i})& if $i \in I$\\
				\sem{\pi_2}^{(i)} & otherwise
			\end{cases*}$, we obtain pre-unitarity directly.
			\end{itemize}
	
	We have covered all rules of \logicsymb, and shown that they comform to our theorem.
	\end{proof}
	
	\subsection{Proof of Universality/Completeness for Unitaries}
	\universality*
	\begin{proof}
		\label{proofComp}
		We rely on the result from \cite{brennen2005}: Givens rotations are universal for one qudit unitaries.
		
		In the following, $\pi$ may denote the constant and not a proof. This should be obvious in context.
		
		Let $n \in \mathbb N$ with $n \geq 1$.\\
		Let $1 \leq j < k \leq n$, let $\theta, \phi$ angles.\\
		We denote $G_{j,k,n}(\theta, \phi) = \begin{pmatrix}
			I_{j-1} & 0 & 0 & 0 & 0\\
			0 & \cos \theta & 0 & -ie^{i\phi}\sin \theta & 0\\
			0 & 0 & I_{k-j-1} & 0 & 0\\
			0 & -ie^{-i\phi}\sin(\theta) &0 & \cos \theta & 0 \\
			0 & 0 & 0 & 0 & I_{n-k}
		\end{pmatrix}$ the Givens rotation over $j$ and $k$ with angles $\theta$ and $\phi$.
		
		We aim to construct $\pi$ a proof of $\underline n \uarrow \underline n$ such that \[\sem \pi = G_{1,2,n}(\theta, \phi) = \begin{pmatrix}
			\cos \theta & -ie^{i\phi} \sin \theta &\\
			-i e^{-i\phi} \sin \theta & \cos \theta &\\
			&& I_{n-2}
		\end{pmatrix}\]
		
		Let $\nu = \phi/2 - \pi/4$.
		
		We define $\mathfrak{g}_{1,2,n}(\theta, \phi)$ as:
		\[\begin{prooftree}
			\infer0[\rOneR]{(\emptyset) \uStile \unit}
			
			\infer0[\rOneR]{(\emptyset) \uStile \unit}
			
			\infer0[\rAx]{(\underline{n-2}) \uStile \underline{n-2} }
			\infer2[\rOpR]{(\emptyset) \mid (\underline{n-2}) \uStile \underline{n-1}}
			\infer2[\rOpR]{(\emptyset) \mid (\emptyset) \mid (\underline{n-2}) \uStile \underline n}
			\infer1[\rPh{\nu}(2)]{(\emptyset) \mid (\emptyset) \mid (\underline{n-2}) \uStile \underline n}
			\infer1[\rPh{-\nu}(1)]{(\emptyset) \mid (\emptyset) \mid (\underline{n-2}) \uStile \underline n}
			\infer1[\rOneL(1,2)]{(\emptyset) \mid (\emptyset) \mid (\underline{n-2}) \uStile \underline n}
			\infer1[\rOpL]{(\unit) \mid (\unit) \mid (\underline{n-2}) \uStile \underline n}
			\infer1[\rRotTh]{(\unit \oplus \unit) \mid (\underline{n-2})\uStile \underline n}
			\infer1[\rPh{-\nu}(2)]{(\emptyset) \mid (\emptyset) \mid (\underline{n-2}) \uStile \underline n }
			\infer1[\rPh{\nu}(1)]{(\emptyset) \mid (\emptyset) \mid (\underline{n-2}) \uStile \underline n}
			\infer1[\rOneL(1,2)]{(\unit) \mid \unit \mid (\underline{n-2}) \uStile \underline n}
			\infer1[\rOpL]{(\unit) \mid (\underline{n-1}) \uStile \underline n}
			\infer1[\rOpL]{(\underline{n}) \uStile \underline n}
			\infer1[\rUnitRight]{\lStile \underline n \uarrow \underline n}
		\end{prooftree} \]
		Where any number next to a rule name makes explicit which worlds it is applied to. For instance, \rPh{-\nu}(2) indicates that \rPh{-\nu} is applied to the second world.
		
		Rather trivially:
		\[\sem{\mathfrak{g}_{1,2,n}(\theta, \phi)} = \begin{pmatrix}
			e^{-i\nu} & 0 &\\
			0 & e^{i\nu} & \\
			& & I_{n-2}
		\end{pmatrix}
		\begin{pmatrix}
		\cos \theta & -\sin \theta &\\
		\sin \theta & \cos \theta &\\
		& & I_{n-2}
		\end{pmatrix}
		\begin{pmatrix}
			e^{i\nu} & 0 &\\
			0 & e^{-i\nu} &\\
			& & I_{n-2}
		\end{pmatrix}\]
		\[=\begin{pmatrix}
			e^{-i\nu} & 0 &\\
			0 & e^{i\nu} & \\
			& & I_{n-2}
		\end{pmatrix}
		\begin{pmatrix}
			e^{i\nu} \cos \theta & -e^{-i\nu} \sin \theta &\\
			e^{i\nu} \sin \theta & e^{-i\nu} \cos \theta &\\
			& & I_{n-2}
		\end{pmatrix}\]
		\[= \begin{pmatrix}
			e^0 \cos \theta & -e^{-2i\nu} \sin \theta &\\
			e^{2i\nu}\sin \theta & e^{0} \cos \theta&\\
			&&I_{n-2}
		\end{pmatrix}\]
		\[= \begin{pmatrix}
			\cos \theta & -e^{-i(\phi - \pi/2)} \sin \theta &\\
			e^{i(\phi -\pi/2)} \sin \theta & \cos \theta &\\
			&&I_{n-2}
		\end{pmatrix}\]
		\[=\begin{pmatrix}
			\cos \theta & -e^{i\pi/2}e^{-i\phi} \sin \theta &\\
			e^{-i\pi/2}e^{i\phi} \sin \theta & \cos \theta &\\
			&&I_{n-2}
		\end{pmatrix}\]
		\[=\begin{pmatrix}
			\cos \theta & -ie^{-i\phi} \sin \theta &\\
			-ie^{i\phi} \sin \theta & \cos \theta &\\
			&&I_{n-2}
		\end{pmatrix}\]
		\[= G_{1,2,n}(\theta, \phi)\]
		
		We then use the \rExchUnW\ rule to create a proof $\mathfrak{s}_{j,k,n}$ of $\underline n \uarrow \underline n$ such that $\sem {\mathfrak{s}_{j,k,n}}$ is the matrix that exchanges the $i$th and $j$th values of a vector of $\mathbb C^n$. It consists of unfolding $(\emptyset, \underline n)$ into $(\emptyset, \emptyset)^n$, and exchanging the $j$th and $k$th world.
		
		Finally, we can construct $\mathfrak{g}_{j,k,n}(\theta, \phi)$ as:
		\[\begin{prooftree}
			\hypo{\mathfrak{s}_{2,k,n}}
			\infer[no rule]1{(\emptyset; \underline n) \uStile \underline n}
			
			\hypo{\mathfrak{s}_{1,j,n}}
			\infer[no rule]1{(\emptyset; \underline n) \uStile \underline n}
			
			\hypo{\mathfrak{g}_{1,2,n}(\theta, \phi)}
			\infer[no rule]1{(\emptyset; \underline n) \uStile \underline n}
			
			\hypo{\mathfrak{s}_{2,k,n}}
			\infer[no rule]1{(\emptyset; \underline n) \uStile \underline n}
			
			\hypo{\mathfrak{s}_{1,j,n}}
			\infer[no rule]1{(\emptyset; \underline n) \uStile \underline n}
			
			\infer0[\rAx]{(\emptyset; \underline n) \uStile \underline n}
			\infer2[\rCutUnG]{(\emptyset; \underline n) \uStile \underline n}
			\infer2[\rCutUnG]{(\emptyset; \underline n) \uStile \underline n}
			\infer2[\rCutUnG]{(\emptyset; \underline n) \uStile \underline n}
			\infer2[\rCutUnG]{(\emptyset; \underline n) \uStile \underline n}
			\infer2[\rCutUnG]{(\emptyset; \underline n) \uStile \underline n}
			\infer1[\rUnitRight]{\lStile \underline n \uarrow \underline n}
		\end{prooftree}\]
		
		It is immediate to verify that $\sem{\mathfrak{g}_{j,k,n}(\theta, \phi)} = G_{j,k,n}(\theta, \phi)$.
		
		Since any unitary operation on a $n$ dimentional qdit can be expressed as the finite product of Givens matrices, we can use the \rCutUnG\ rule to compose $\mathfrak{g}_{j,k,n}(\theta, \phi)$ proofs, and obtain a proof with semantics equal to the original unitary.
		
		Therefore, for each unitary $f: \bC^n \rightarrow \bC^n$, there exists a proof $\pi$ of $\underline n \uarrow \underline n$ such that $\sem\pi = f$.
	\end{proof}

	\subsection{Cut Elimination}
		\label{sec:celim}
	\begin{definition}[Term Depth]
		We define recursively the depth of a term $T$, $|T|$ as follows:
		\begin{itemize}
			\item $|a| = |\unit| = 0$
			\item $|T_1 \odot T_2| = \max(|T_1|, |T_2|) +1$ with $\odot$ any connector
		\end{itemize}
	\end{definition}
	
	\begin{definition}[Notation]
		Let $\pi$ a proof.\\
		We will write $l(\pi)$ to denote the root rule of $\pi$.\\
		Furthermore, if $l(\pi)$ is a unary rule, we will by convention write $\pi'$ its one direct subproof. If $l(\pi)$ is binary, we will write $\pi_1$ its left subproof, and $\pi_2$ its right subproof.
	\end{definition}
	
	\begin{definition}[Cut Rank]
		Let $\pi$ be a proof. We define $c(\pi)$ the \emph{cut rank} of $\pi$ inductively:
		\begin{itemize}
			\item If $l(\pi)$ is an axiom rule, $c(\pi) = 0$
			\item If $l(\pi)$ is a unary rule, $c(\pi) = c(\pi')$
			\item If $l(\pi)$ is a non-cut, binary rule, $c(\pi) = \max(c(\pi_1), c(\pi_2))$
			\item If $l(\pi)$ is a cut rule with output $T$, $c(\pi) = \max(|T|+1, c(\pi_1), c(\pi_2))$
		\end{itemize}
		Note that promise proofs are not included in the cut rank calculation if they are not employed as subproofs.
	\end{definition}
	
	\begin{definition}[Proof Depth]
		We define $d(\pi)$ the depth of a proof $\pi$ inductively as:
		\begin{itemize}
			\item If $\pi$ is an axiom, then $d(\pi) = 0$
			\item If $l(\pi)$ is an unary rule, then $d(\pi) = d(\pi') + 1$
			\item If $l(\pi)$ is a binary rule, then $d(\pi) = d(\pi_1) + d(\pi_2) +1$
		\end{itemize}
	\end{definition}
	
	We are now ready to state a small series of lemmas, one for each cut rule, that will allow us to reduce the cut rank of a proof.
	\begin{lemma}
		\label{lem:cutred}
		If $\pi_1$ is a cut-free proof of $\Phi \lStile F$, and $\pi_2$ is a cut-free proof of $F, \Psi \lStile H$, then, there exists $\pi$ a proof of $\Phi, \Psi \lStile H$ such that $c(\pi) \leq |F|$ and:
		\[ \pi \equiv \begin{prooftree}
			\hypo {\pi_1}
			\infer[no rule]1{\Phi \lStile F}
			
			\hypo {\pi_2}
			\infer[no rule]1{F, \Psi \lStile H}
			
			\infer2[\rCutLin]{\Phi, \Psi \lStile H}
		\end{prooftree}\]

		If $\widehat \Phi$ is an ordered multiset of labeled higher terms, $\pi_1$ is a cut-free proof of $\unlab(\widehat \Phi) \lStile F$, and $\pi_2$ is a proof of $P: (F^\alpha, \widehat{\Psi_i}; \Gamma_i)_{i \in I} \mid \Omega \uStile C$ with $c(\pi_2) = 0$, such that $\alpha$ is fresh in $P:(\widehat \Phi, \widehat{\Psi_i}; \Gamma_i)_{i \in I} \mid \Omega$ then, there exists $\pi$ a proof of $P:(\widehat \Phi, \widehat{\Psi_i}; \Gamma_i)_{i \in I} \mid \Omega \uStile C$ such that $c(\pi) \leq |F|$ and:
		\[\pi \equiv \begin{prooftree}
			\hypo{\pi_1}
			\infer[no rule]1{\unlab(\widehat \Phi) \lStile F}
			
			\hypo{\pi_2}
			\infer[no rule]1{P:(F^\alpha, \widehat{\Psi_i}; \Gamma_i)_{i \in I} \mid \Omega \uStile C}
			
			\infer2[\rCutUnH]{P:(\widehat \Phi, \widehat{\Psi_i}; \Gamma_i)_{i \in I} \mid \Omega \uStile C}
		\end{prooftree}\]
		
		Let $\pi_1$ be a proof of $P: (\widehat{\Phi_i}; \Gamma_i)_{i \in I} \uStile A$, and $\pi_2$ be a proof of $P: (\widehat \Psi; A, \Delta) \mid \Omega \uStile C$ such that $c(\pi_1) = c(\pi_2) = 0$.\\
		There exists $\pi$ a proof of $(\widehat{\Phi_i}, \widehat \Psi; \Gamma_i, \Delta)_{i \in I} \mid \Omega \uStile C$ such that $c(\pi) \leq |A|$ and:
		\[\pi \equiv \begin{prooftree}
			\hypo{\pi_1}
			\infer[no rule]1{P:(\widehat{\Phi_i}; \Gamma_i)_{i \in I} \uStile A}
			
			\hypo{\pi_2}
			\infer[no rule]1{P:(\widehat \Psi; A, \Delta) \mid \Omega \uStile C}
			
			\infer2[\rCutUnG]{P:(\widehat{\Phi_i}, \widehat \Psi; \Gamma_i, \Delta)_{i \in I} \mid \Omega \uStile C}
		\end{prooftree}\]
	\end{lemma}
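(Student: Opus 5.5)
The plan is to treat the three cut rules separately, each by an inner induction. For \rCutLin\ and \rCutUnH\ the measure is $(|F|, d(\pi_1)+d(\pi_2))$, ordered lexicographically. For \rCutUnG\ it is $(|A|, d(\pi_1)+d(\pi_2))$. Every rewriting step either pushes the cut up into strictly smaller subproofs, or replaces it by cuts on strict subterms of the cut formula. New cuts on subterms $F_i$ satisfy $|F_i|+1\le |F|$, so the resulting proof has cut rank at most $|F|$ (resp.\ $|A|$). Semantic equivalence is checked at each step directly against the clauses of Figures~\ref{fig:LinSem}, \ref{fig:semuStile} and \ref{fig:uStileSem2}; these are all functorial compositions, so it reduces to associativity of composition and naturality of $\sigma$, $\app$ and the projections.

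\textbf{The \rCutLin\ case.} This is the standard \IMALL\ argument.
\begin{itemize}
\item Commutative case: if the cut formula $F$ is not principal in $l(\pi_2)$, or if $l(\pi_1)$ is a left rule, we permute the cut above that rule.
\item \rWiLinR\ duplicates $\Phi$. This is harmless, since we only need existence and the measure still decreases.
\item Principal $\otimes/\otimes$, $\multimap/\multimap$ and $\with/\with$ pairs reduce to cuts on immediate subterms, exactly as in \IMALL.
\item The only new principal case is $F=A\uarrow B$. Here $\pi_1$ ends in \rUnitRight\ from some $\emptyset:(\widehat\Phi;A)\uStile B$. Because there is no left rule for $\uarrow$ in $\lStile$, $F$ can then only be consumed inside a \rUnitRight\ premise of $\pi_2$, i.e.\ in the $\uStile$ fragment.
\item In that situation we first rewrite the cut as a \rCutUnH\ at the point where $F$ enters a $\uStile$ sequent. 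This uses the fact that $F$ gets a fresh label via $\labrel$, which is exactly the freshness hypothesis of the second clause.
\end{itemize}
So the first clause is proved simultaneously with the second, by mutual induction.

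\textbf{The \rCutUnH\ case.} We induct on $\pi_2$ with $c(\pi_2)=0$.
\begin{itemize}
\item Rules of $\pi_2$ acting on worlds where $F^\alpha$ is not principal commute with the cut by Lemma~\ref{lem:mono}. This includes rules on worlds outside $I$, and rules that split worlds, such as \rOpL\ and \rRotTh, which just copy $F^\alpha$ into both worlds.
\item Per Figure~\ref{fig:cutHex}, we cannot only look at $l(\pi_2)$. Instead, we consider the first rule along each world's branch that decomposes a term labelled $\alpha$ or a prefix-extension $p\cdot\alpha$.
\item By consistent labeling, all these decompositions agree with the shape of $F$.
\item $F=F_1\otimes F_2$: \rOtUn\ splits $F^\alpha$ into $F_1^{\otimes_1\alpha},F_2^{\otimes_2\alpha}$. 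By Lemma~\ref{lem:simple} and the invertibility of \rOtLinL\ we may assume $\pi_1$ ends in \rOtLinR. The cut is then replaced by two \rCutUnH\ on $F_1,F_2$, placed before all these uses.
\item $F=F_1\with F_2$: $\pi_1$ ends in \rWiLinR. Worlds using \rWiUn{x} receive a cut against the $x$-th premise. Since both cuts need fresh labels $\oplus_x\alpha$, consistency is preserved.
\item $F=F'\multimap G$: each use is \rArUnNew\ or \rArUnPro. We substitute $\pi_1$ (ending in \rArLinR) and obtain cuts on $F'$ and $G$. Every \rArUnPro\ instance of that promise receives the same cut, with label $\beta$.
\item $F=A\uarrow B$: each use is \rUnitUn. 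The cut becomes a \rCutUnG\ on $A$ and $B$ with the premise of \rUnitRight\ in $\pi_1$. Its rank is at most $\max(|A|,|B|)+1\le |F|$.
\end{itemize}

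\textbf{The \rCutUnG\ case.} This is the main obstacle, because of world-fusing rules such as \rRotTh\ and \rUnitUn\ (Figure~\ref{fig:cutGex}).
\begin{itemize}
\item First apply Lemmas~\ref{lem:posBt}, \ref{lem:posOp} and \ref{lem:posOne} to $\pi_2$, so that the cut formula $A$ is decomposed immediately, at the root of $\pi_2$.
\item $A=A_1\boxtimes A_2$: with $\pi_2$ beginning with \rBtL, simplify $\pi_1$ by Lemma~\ref{lem:simple} and push the cut into $\pi_1$ until it meets \rBtR. Left rules of $\pi_1$ commute freely, since the cut's left premise is passive. At \rBtR, the product $\Omega_1\times\Omega_2$ yields two cuts on $A_1$, $A_2$.
\item $A=A_1\oplus A_2$: with $\pi_2$ beginning with \rOpL, its two worlds exactly match the two blocks produced by \rOpR\ in $\pi_1$, giving two cuts.
\item $A=\unit$: \rOneR\ and \rOneL\ annihilate.
\item $A$ a literal: the only right rule is \rAx, and the cut vanishes.
\end{itemize}
The hard point is pushing the cut up through $\pi_1$ past \rUnitUn\ and \rRotTh. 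Their right premises are in $\uStile$ with conclusion $A$, and a multi-world cut commutes with them because the cut acts uniformly on all worlds of $\pi_1$. I expect the careful bookkeeping of world indices and promises here, handled via Lemma~\ref{lem:mono}, to be the bulk of the work.
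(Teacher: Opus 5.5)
Your overall plan matches the paper's. You commute non-principal rules, reduce principal pairs to cuts on subterms, turn \rCutLin\ into \rCutUnH\ at \rUnitRight, turn the $\uarrow$ case into two \rCutUnG, and use the positivity lemmas on $\pi_2$ for \rCutUnG. However, your case analysis for \rCutUnH\ is not exhaustive, and the step you would apply to the missing case fails.

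\textbf{The missing case.} You consider only two fates for $F^\alpha$ in a world of $\pi_2$. Either a rule leaves it inert, and you commute that rule past the cut by Lemma~\ref{lem:mono}. Or a rule decomposes a term labelled $p\cdot\alpha$. There is a third possibility: $F^\alpha$ can belong to the argument context of a \rArUnNew\ that applies some other $H\stackrel{\gamma}{\multimap}K$, i.e.\ whose left premise $\pi_L$ proves $\unlab(F^\alpha,\widehat{\Phi_L})\lStile H$. Here $F^\alpha$ is not principal, so your first bullet would commute the cut above the rule. That is impossible: in the $\uStile$ premise the world no longer contains $F^\alpha$, which has moved into $\pi_L$ and into the new promise $p_L$. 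The repair, which is the first principal subcase of the paper's proof, goes as follows.
\begin{enumerate}
\item Push the cut into $\pi_L$ as a \rCutLin. This is where the \rCutUnH\ clause needs the \rCutLin\ clause; your mutual induction only uses the converse direction.
\item Replace $p_L$ by a promise with context $\widehat\Phi,\widehat{\Phi_L}$ and the reduced proof $\pi_L'$.
\item Rewrite every \rArUnPro\ on $p_L$ accordingly; each of these lives in a world that also contains $F^\alpha$.
\item Handle the other worlds still containing $F^\alpha$ by a new \rCutUnH\ on the premise, via the induction hypothesis.
\end{enumerate}

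\textbf{Two smaller corrections.} First, in the shape cases of \rCutUnH\ you take $\pi_1$ to end with the right rule for $F$, citing Lemma~\ref{lem:simple} and invertibility of \rOtLinL. Neither gives this. Lemma~\ref{lem:simple} only concerns $\uStile$ rules; the paper applies it to $\pi_2$, so that the cut formula never sits in the left premise of \rUnitUn, \rBtR\ or \rOpR\ when commuting past them. What you actually need is that a left rule at the root of $\pi_1$ commutes below the cut by being replayed in every world of $I$. Concretely, \rExchLin, \rOtLinL\ and \rWiLinL{x} become \rExchUnH, \rOtUn\ and \rWiUn{x}. \rArLinL\ becomes one \rArUnNew\ followed by \rArUnPro\ in the remaining worlds of $I$, on a promise with a fresh label.

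Second, at \rBtR\ in the \rCutUnG\ case you do not get just two cuts. After cutting $A_2$ against the right premise of \rBtR, which has say $m$ worlds, $A_1$ occurs in each of the $m$ resulting worlds. Since \rCutUnG\ only consumes the first world of its right premise, you need $m$ cuts on $A_1$, interleaved with \rExchUnW. The rank bound $|A|$ is unaffected.
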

	
	\begin{proof}
		We proceed by induction over $d(\pi_1) + d(\pi_2)$.
		
		\proofsubparagraph{Initial Case} When $d(\pi_1) + d(\pi_2) = 0$, we only have one case to consider:
		\[\pi_1 = \pi_2 = \begin{prooftree}
			\infer0[\rAx]{P: (\emptyset; A) \lStile A}
		\end{prooftree}\]
		Here, we trivially have $\pi = \pi_1 = \pi_2 = \begin{prooftree}
			\infer0[\rAx]{P: (\emptyset; A) \lStile A}
		\end{prooftree}$
		
		We then split the induction over the three possible cases:
		
		\proofsubparagraph{Induction, \rCutLin\  Case}
		
		We exhibit the following commutations and reductions. The preservation of semantics is trivial.
		\[\begin{prooftree}
			\hypo{\Phi, G, F, \Psi \lStile K}
			\infer1[\rExchLin]{\Phi, F, G, \Psi \lStile K}
			
			\hypo{K, \Theta \lStile H}
			\infer2[\rCutLin]{\Phi, F, G, \Psi, \Theta \lStile H}
		\end{prooftree} \rightarrow
		\begin{prooftree}
			\hypo{\Phi, G, F, \Psi \lStile K}
			
			\hypo{K, \Theta, \lStile H}
			
			\infer2[\rCutLin]{\Phi, G, F, \Psi, \Theta \lStile H}
			\infer1[\rExchLin]{\Phi, F, G, \Psi, \Theta \lStile H}
		\end{prooftree} \]
		
		\[\begin{prooftree}
			\hypo{\Phi \lStile K}
			
			\hypo{K, \Psi \lStile F}
			\hypo{\Theta \lStile G}
			
			\infer2[\rOtLinR]{K, \Psi, \Theta \lStile F \otimes G}
			\infer2[\rCutLin]{\Phi, \Psi, \Theta \lStile F \otimes G}
		\end{prooftree} \rightarrow
		\begin{prooftree}
			\hypo{\Phi \lStile K}
			\hypo{K, \Psi \lStile F}
			\infer2[\rCutLin]{\Phi, \Psi \lStile K}
			
			\hypo{\Theta \lStile G}
			
			\infer2[\rOtLinR]{\Phi, \Psi, \Theta \lStile F \otimes G}
		\end{prooftree}\]
		
		\[\begin{prooftree}
			\hypo{\Phi \lStile K}
			\hypo{F, K, \Psi \lStile G}
			\infer1[\rArLinR]{K, \Psi \lStile F \multimap G}
			\infer2[\rCutLin]{\Phi, \Psi \lStile F \multimap G}
		\end{prooftree} \rightarrow
		\begin{prooftree}
			\hypo{\Phi \lStile K}
			\hypo{F, K, \Psi \lStile G}
			\infer2[\rCutLin]{F, \Phi, \Psi \lStile G}
			\infer1[\rArLinR]{\Phi, \Psi \lStile F \multimap G}
		\end{prooftree}\]
		
		\begin{small}
		\[\begin{array}{l}
		\begin{prooftree}
			\hypo{\Phi \lStile K}
			
			\hypo{K, \Psi \lStile F}
			\hypo{K, \Psi \lStile G}
			\infer2[\rWiLinR]{K, \Psi \lStile F \with G}
			
			\infer2[\rCutLin]{\Phi, \Psi \lStile F \with G}
		\end{prooftree}\\\hspace{4cm}\longrightarrow
		\begin{prooftree}
			\hypo{\Phi \lStile K}
			
			\hypo{K, \Psi \lStile F}
			\infer2[\rCutLin]{\Phi, \Psi \lStile F}
			
			\hypo{\Phi \lStile K}
			
			\hypo{K, \Psi \lStile F}
			\infer2[\rCutLin]{\Phi, \Psi \lStile G}
			
			\infer2[\rWiLinR]{\Phi, \Psi \lStile F \with G}
		\end{prooftree}\end{array}\]
		\end{small}
		
		\begin{small}
		\[\begin{array}{l}
			\begin{prooftree}
			\hypo{\Phi \lStile K}
			
			\hypo{K, \Psi \labrel K^\alpha, \widehat \Psi}
			\hypo{\emptyset:(K^\alpha, \widehat \Psi; A) \uStile B}
			\infer2[\rUnitRight]{K, \Psi \lStile A \uarrow B}
			
			\infer2[\rCutLin]{\Phi, \Psi \lStile A \uarrow B}
		\end{prooftree} \\ \hspace{5.3cm} \longrightarrow
		\begin{prooftree}
			\hypo{\Phi, \Psi \labrel \widehat \Phi, \widehat \Psi}
			
			\hypo{\Phi \lStile K}
			\hypo{\emptyset:(K^\alpha, \widehat \Phi; A) \uStile b}
			\infer2[\rCutUnH]{\emptyset:(\widehat \Phi, \widehat \Psi; A) \uStile B}
			
			\infer2[\rUnitRight]{\Phi, \Psi \lStile A \uarrow B}	
		\end{prooftree}\end{array}
		 \]
		 \end{small}
		 With $\alpha$ not in the labels of $\widehat \Phi$.
		 
		 \begin{small}
		 \[
		 \begin{array}{l}
		 \begin{prooftree}
		 	\hypo{\Phi_1 \lStile K_1}
		 	\hypo{\Phi_2 \lStile K_2}
		 	\infer2[\rOtLinR]{\Phi_1, \Phi_2 \lStile K_1 \otimes K_2}
		 	
		 	\hypo{K_1, K_2, \Psi \lStile H}
		 	\infer1[\rOtLinL]{K_1 \otimes K_2, \Psi \lStile H}
		 	
		 	\infer2[\rCutLin]{\Phi_1, \Phi_2, \Psi \lStile H}
		 \end{prooftree}\\ \\ \hspace{5.8cm}\longrightarrow 
		 \begin{prooftree}
		 	\hypo{\Phi_1 \lStile K_1}
		 	
		 	\hypo{\Phi_2 \lStile K_2}
		 	
		 	\hypo{K_1, K_2, \Psi \lStile H}
		 	\infer2[\rCutLin]{K_1, \Phi_2, \Psi \lStile H}
		 	
		 	\infer2[\rCutLin]{\Phi_1, \Phi_2, \Psi \lStile H}
		 \end{prooftree}\end{array} \]\end{small}

		 \begin{small}
		 \[\begin{prooftree}
		 	\hypo{F, \Psi \lStile G}
		 	\infer1[\rArLinR]{\Psi \lStile F \multimap G}
		 	
		 	\hypo{\Phi \lStile F}
		 	\hypo{G, \Theta \lStile H}
		 	
		 	\infer2[\rArLinL]{\Phi, F \multimap G, \Theta \lStile H}
		 	
		 	\infer2[\rCutLin]{\Phi, \Psi, \Theta \lStile H}
		 \end{prooftree} \rightarrow
		 \begin{prooftree}
		 	\hypo{\Phi \lStile F}
		 	
		 	\hypo{F, \Psi \lStile G}
		 	\hypo{G, \Theta \lStile H}
		 	\infer2[\rCutLin]{F, \Psi, \Theta \lStile H}
		 	
		 	\infer2[\rCutLin]{\Phi, \Psi, \Theta \lStile H}
		 \end{prooftree} \]
		 \end{small}
		 
		 \[\begin{prooftree}
		 	\hypo{\Phi \lStile K_1}
		 	\hypo{\Phi \lStile K_2}
		 	\infer2[\rWiLinR]{\Phi \lStile K_1 \with K_2}
		 	
		 	\hypo{K_i, \Psi \lStile H}
		 	\infer1[\rWiLinL i]{K_1 \with K_2, \psi \lStile H}
		 	\infer2[\rCutLin]{\Phi, \Psi \lStile H}
		 \end{prooftree} \rightarrow
		 \begin{prooftree}
		 	\hypo{\Phi \lStile K_i}
		 	\hypo{K_i, \Psi \lStile H}
		 	\infer2[\rCutLin]{\Phi, \Psi \lStile H}
		 \end{prooftree} \]
		 
		 \proofsubparagraph{Induction, \rCutUnH\ Case}
		 Due to the multi-world nature of \rCutUnH, the output of a cut rule may be destroyed in multiple worlds at different times, thus requiring the graphical representation defined in~\cref{def:graphical}, and the notion of simple proofs, to keep the bureaucracy manageable.
		 
		 In this section, we will assume that proofs are simple.
		 
		 Let $\pi_1$ a proof of $\unlab(\widehat \Phi) \lStile K$.

		 \begin{itemize}
		 	\item If $\widehat \Phi = \widehat {\Phi_1}, F^\alpha, G^\beta, \widehat{\Phi_2}$ and
		 	$\pi_1 = \begin{prooftree}
		 		\hypo{\pi_1'}
		 		\infer[no rule]1{\Phi_1, G, F, \Phi_2 \lStile K}
		 		\infer1[\rExchLin]{\Phi_1, F, G, \Phi_2 \lStile K}
		 	\end{prooftree}$, let us notice this proof equivalence:
		 	
		 	\[\begin{prooftree}
		 		\hypo{\pi_1'}
		 		\infer[no rule]1{\Phi_1, G, F, \Phi_2 \lStile K}
		 		\infer1[\rExchLin]{\Phi_1, F, G, \Phi_2 \lStile K}
		 		
		 		\hypo{\pi_2}
		 		\infer[no rule]1{P:(K^\kappa, \widehat{\Psi_i};\Gamma_i)_{1 \leq i \leq n} \mid \Omega \uStile C}
		 		
		 		\infer2[\rCutUnH]{P:(\widehat {\Phi_1}, F^\alpha, G^\beta, \widehat{\Phi_2}, \widehat{\Psi_i}; \Gamma_i)_{1\leq i \leq n} \mid \Omega \uStile C}
		 	\end{prooftree}\]
		 	\[\equiv \begin{prooftree}
		 		\hypo{\pi_1'}
		 		\infer[no rule]1{\Phi_1, G, F, \Phi_2 \lStile K}
		 		
		 		\hypo{\pi_2}
		 		\infer[no rule]1{P:(K^\kappa, \widehat{\Psi_i};\Gamma_i)_{1 \leq i \leq n} \mid \Omega \uStile C}
		 		
		 		\infer2[\rCutUnH]{P:(\widehat {\Phi_1}, G^\beta, F^\alpha, \widehat{\Phi_2}, \widehat{\Psi_i}; \Gamma_i)_{1\leq i \leq n} \mid \Omega \uStile C}
		 		
		 		\infer[double]1[\rExchUnH]{P:(\widehat {\Phi_1}, F^\alpha, G^\beta, \widehat{\Phi_2}, \widehat{\Psi_i}; \Gamma_i)_{1\leq i \leq n} \mid \Omega \uStile C}
		 	\end{prooftree}\]
		 	
		 	From this, we can trivially apply the induction hypothesis and obtain $\pi$.
		 	
		 	\item The cases where $l(\pi_1)$ is \rOtLinL\ or \rWiLinL{i}\ proceed identically.\\
		 		When $l(\pi_1)$ is \rArLinL, we create a new label fresh in $\pi_2$ for the promise we will have to generate, and proceed identically.
		 		
		 	\item We now work graphically. Let us start with this case:
		 		\[\input{cutred-prf-sigma-1.tikz}\]
		 		
		 		Note we can't assume this \rExchUnH\ rule is necessarily $l(\pi_2)$, since $l(\pi_2)$ might be a rule that does not commute with \rCutUnH.
		 		
		 		By~\cref{lem:mono}, we have:
		 		\begin{align*}
		 			\input{cutred-prf-sigma-1.tikz}
		 			~~\equiv~~\input{cutred-prf-sigma-2.tikz}
		 			~~\equiv~~\input{cutred-prf-sigma-3.tikz}
		 		\end{align*}
		 		
		 		And we can apply the induction hypothesis.
		 	
		 	\item The cases where this rule is replaced by \rOtUn, \rWiUn, \rArUnNew, \rBtL\ or \rOpL\, assuming $K^\kappa$ is not the principal formula, behave identically.\\
		 		The \rArUnPro\ case is similarly trivial, we simply may have to replace this rule with an \rArUnNew\ rule if it requires commuting with the rule that generated its promise.
		 		
		 	\item The \rUnitUn\ case is more difficult, since it combines multiple worlds into one. If we are in this case
		 	\[\input{cutred-prf-rUnitUn-1.tikz}\]
		 	
		 	Since $\pi_2$ is simple, $K^\kappa$ has to be present among all the input worlds:
		 	\[\input{cutred-prf-rUnitUn-2.tikz}\]
		 	
		 	Therefore, only two cases are possible:
		 	\[\input{cutred-prf-rUnitUn-3.tikz}\quad\text{ or }\quad\input{cutred-prf-rUnitUn-4.tikz}\]
		 	With $r$ a rule whose principal formula is not $K^\kappa$.
		 	
		 	In the former case, consider the following equivalence:

		 	\scalebox{0.8}{\input{cutred-prf-rUnitUn-3.tikz}}
		 		~~$\equiv$~~\scalebox{0.8}{\input{cutred-prf-rUnitUn-3f.tikz}}
		 		~~$\equiv~$~\scalebox{0.8}{\input{cutred-prf-rUnitUn-3ff.tikz}}
		 	
		 	In the latter, we can trivially commute the rule $r$ with the cut as well.
		 	
		 	In both cases we can apply the induction hypothesis to find $\pi$.
		 	
		 	\item The \rRotTh\ case is identical.
		 	
		 	\item The \rExchUnW\ case is trivial if we work modulo permutation.

		 	\item With the prior cases, we can now assume that the first rule applied to each world with $K^\kappa$ is a rule that has it as a principal formula.
		 	
		 		\begin{itemize}
		 			
		 			\item Let us assume:
		 			\[\pi_2 = \begin{prooftree}
		 				\hypo{\pi_L}
		 				\infer[no rule]1{K, \unlab(\widehat{\Phi_L}) \lStile F}
		 				
		 				\hypo{\pi_2'}
		 				\infer[no rule]1{P \sqcup p_L: (G^\beta, \widehat{\Psi_i}; \Gamma_i)_{1 \leq i \leq n} \mid \Omega_K \mid \Omega \uStile C}
		 				
		 				\infer[double]2[$\multimap^*$]{P: (K^\kappa, \widehat{\Phi_L}, F \stackrel{\alpha}{\multimap} G, \widehat{\Psi_i}; \Gamma_i)_{1 \leq i \leq n} \mid \Omega_K \mid \Omega \uStile C}
		 			\end{prooftree} \]
						
						Where $p_L = ((K^\kappa, \widehat{\Phi_L}), \pi_L, F \stackrel{\alpha}{\multimap} G, G^\beta)$, $\Omega_K$ is a (potentially empty) set of worlds that each contain $K^\kappa$, $\Omega$ a set of worlds that does not contain $K^\kappa$, and $\multimap^*$ represents a \rArUnNew\ rule followed by \rArUnPro\ rules on the same promise.\\
						We also assume that $c(\pi_2) = 0$, so $c(\pi_L) = c(\pi_2') = 0$ and that no \rArUnPro\ rule in $\pi_2'$ makes use of $p_L$.
						
						By induction hypothesis, there exist $\pi_L'$ with $c(\pi_L') \leq |K|$ and \[\pi_L' \equiv \begin{prooftree}
							\hypo{\pi_1}
							\infer[no rule]1{\unlab(\widehat \Phi) \lStile K}
							
							\hypo{\pi_L}
							\infer[no rule]1{K, \unlab(\widehat {\Phi_L}) \lStile F}
							
							\infer2[\rCutLin]{\unlab(\widehat \Phi), \unlab(\widehat {\Phi_L}) \lStile F}
						\end{prooftree}\]
						
						Let $p_L' = ((\widehat \Phi, \widehat{\Phi_L}), \pi_L', F \stackrel{\alpha}{\multimap} G, G^\beta)$.
						
						Since $p_L$ is unused in $\pi_2'$, we obtain $\pi_2''$ by replacing $p_L$ with $p_L'$, and it remains a proof, is equivalent, and still has $c(\pi_2'')= 0$.
						
						If $\Omega_K$ is empty, we propose the following $\pi$:
						\[\pi = \begin{prooftree}
							\hypo{\pi_L'}
							\infer[no rule]1{\unlab(\widehat \Phi), \unlab(\widehat{\Phi_L}) \lStile F}
							
							\hypo{\pi_2''}
							\infer[no rule]1{P \sqcup p_L': (G^\beta, \widehat{\Psi_i}; \Gamma_i)_{1 \leq i \leq n} \mid \Omega \uStile C}
							
							\infer[double]2[$\multimap^*$]{P: (\widehat \Phi, \widehat{\Phi_L}, F \stackrel{\alpha}{\multimap} G, \widehat{\Psi_i}; \Gamma_i)_{1 \leq i \leq n} \mid \Omega \uStile C}
						\end{prooftree} \]
						
						If $\Omega_K$ is non-empty, we apply the induction hypothesis to $\pi_1$ and $\pi_2''$ to obtain $\pi'$ a proof of $P\sqcup p_L': (G^\beta, \widehat{\Psi_i}; \Gamma_i)_{1 \leq i \leq n} \mid \Omega_{\widehat \Phi} \mid \Omega \uStile C$, where $\Omega_{\widehat \Phi}$ is equal to $\Omega_K$ where all instances of $K^\kappa$ have been replaced by $\widehat \Phi$.\\
						We have $c(\pi') \leq |K|$, and:
						\[\pi' \equiv \begin{prooftree}
							\hypo{\pi_1}
							\infer[no rule]1{\unlab(\widehat \Phi) \lStile K}
							
							\hypo{\pi_2''}
							\infer[no rule]1{P \sqcup p_L': (G^\beta, \widehat{\Psi_i}; \Gamma_i)_{1 \leq i \leq n} \mid \Omega_K \mid \Omega \uStile C}
							
							\infer2[\rCutUnH]{P \sqcup p_L': (G^\beta, \widehat{\Psi_i}; \Gamma_i)_{1 \leq i \leq n}\mid \Omega_{\widehat \Phi} \mid \Omega \uStile C}
						\end{prooftree}\]
						
						We then propose this definition for $\pi$:
						\[\pi = \begin{prooftree}
							\hypo{\pi_L'}
							\infer[no rule]1{\unlab(\widehat \Phi), \unlab(\widehat{\Phi_L}) \lStile F}
							
							\hypo{\pi'}
							\infer[no rule]1{P\sqcup p_L': (G^\beta, \widehat{\Psi_i}; \Gamma_i)_{1 \leq i \leq n} \mid \Omega_{\widehat \Phi} \mid \Omega \uStile C}
							
							\infer[double]2[$\multimap^*$]{P: (\widehat \Phi, \widehat{\Phi_L}, F \stackrel{\alpha}{\multimap} G, \widehat{\Psi_i}; \Gamma_i)_{1 \leq i \leq n} \mid \Omega_{\widehat \Phi} \mid \Omega \uStile C}
						\end{prooftree} \]
						
					\item Let us assume $K = A \uarrow B$, and $\pi_1 = \begin{prooftree}
						\hypo{\tilde \Phi \labrel \unlab(\widehat \Phi)}
						
						\hypo{\pi_1'}
						\infer[no rule]1{\emptyset: (\tilde \Phi; A) \uStile B}
						
						\infer2[\rUnitRight]{\unlab(\widehat \Phi) \lStile A \uarrow B}
					\end{prooftree}$
					
					We also assume $\pi_2 = \begin{prooftree}
						\hypo{\pi_{21}}
						\infer[no rule]1{P:(\emptyset, \Gamma_i)_{i \leq n} \uStile A}
						
						\hypo{\pi_{22}}
						\infer[no rule]1{P:(\widehat{\Psi_1}; B, \Delta_1) \mid \Omega_K \mid \Omega \uStile C}
						
						\infer2[\rUnitUn]{P:(A \stackrel{\kappa}{\uarrow} B, \widehat{\Psi_1}; \Gamma_i, \Delta_1)_{1 \leq i \leq n} \mid \Omega_K \mid \Omega \uStile C}
					\end{prooftree}$
					
						Trivially, there exists $\pi_1^*$ a proof of $P: (\widehat \Phi; A) \uStile B$ such that it is equivalent to $\pi_1'$.
					
						We also use the induction hypothesis on $\pi_{22}$ to create $\pi'$ a proof of $P:(\widehat{\Psi_1}; B, \Delta_1) \mid \Omega_{\widehat \Phi} \mid \Omega \uStile C$ such that
					\[\pi' \equiv \begin{prooftree}
						\hypo{\pi_1}
						\infer[no rule]1{\unlab(\widehat \Phi) \lStile A \uarrow B}
						
						\hypo{\pi_{22}}
						\infer[no rule]1{P:(\widehat{\Psi_1}; B, \Delta_1) \mid \Omega_{K} \mid \Omega \uStile C}
						
						\infer2[\rCutUnH]{P:(\widehat{\Psi_1}; B, \Delta_1) \mid \Omega_{\widehat \Phi} \mid \Omega \uStile C}
					\end{prooftree}\]
					
						And the following proof suffices:
					\[\begin{prooftree}
						\hypo{\pi_{21}}
						\infer[no rule]1{P:(\emptyset; \Gamma_i)_i \uStile A}
						
						\hypo{\pi_1^*}
						\infer[no rule]1{P:(\widehat \Phi; A) \uStile B}
						
						\infer2[\rCutUnG]{P:(\widehat \Phi; \Gamma_i)_i \uStile B}
						
						\hypo{\pi'}
						\infer[no rule]1{P:(\widehat{\Psi_1}; B, \Delta_1) \mid \Omega_{\widehat \Phi} \mid \Omega \uStile C}
						
						\infer2[\rCutUnG]{P:(\widehat{\Phi}, \widehat{\Psi_1}; \Gamma_i, \Delta_1)_i \mid \Omega_K \mid \Omega \uStile C}
					\end{prooftree} \]
					
						The equivalence with the cut version is trivial with some basic computations.
						
					\item We assume $K = F \multimap G$ and $\pi_1 = \begin{prooftree}
						\hypo{\pi_1'}
						\infer[no rule]1{F, \unlab(\widehat \Phi) \lStile G}
						\infer1[\rArLinR]{\unlab(\widehat \Phi) \lStile G}
					\end{prooftree}$ and
					\[\pi_2 = \begin{prooftree}
						\hypo{\pi_L}
						\infer[no rule]1{\unlab(\widehat{\Phi_L}) \lStile F}
						
						\hypo{\pi_2'}
						\infer[no rule]1{P\sqcup p_L : (G^\alpha, \widehat{\Psi_i}; \Gamma_i)_{1 \leq i \leq n} \mid \Omega_K \mid \Omega \uStile C}
						
						\infer[double]2[$\multimap^*$]{P: (\widehat{\Phi_L}, F \stackrel{\kappa}{\multimap} G, \widehat{\Psi_i}; \Gamma_i)_{1 \leq i \leq n} \mid \Omega_k \mid \Omega \uStile C}
					\end{prooftree} \]
						With $p_L = (\widehat{\Phi_L}, \pi_L, F \stackrel{\kappa}{\multimap} G, G^\alpha)$, $c(\pi_2) = 0$, and $p_L$ unused in $\pi_2'$.
						
						Since $p_L$ is unused in $\pi_2'$, we can remove it from the set of promises to create the equivalent proof $\pi_2''$. We then apply the induction hypothesis to $\pi_1$ and $\pi_2''$ to obtain $\pi'$ a proof of $P: (G^\alpha, \widehat{\Psi_i}; \Gamma_i)_{1 \leq i \leq n} \mid \Omega_{\widehat \Phi} \mid \Omega \uStile C$ such that $c(\pi') \leq |F \multimap G|$ and
						\[\pi' \equiv \begin{prooftree}
							\hypo{\pi_1}
							\infer[no rule]1{\unlab(\widehat \Phi) \lStile F \multimap G}
							\hypo{\pi_2''}
							\infer[no rule]1{P:(G^\alpha, \widehat{\Psi_i}; \Gamma_i)_{1 \leq i \leq n} \mid \Omega_K \mid \Omega \uStile C}
							
							\infer2[\rCutUnH]{P: (G^\alpha, \widehat{\Psi_i}; \Gamma_i)_{1 \leq i \leq n} \mid \Omega_{\widehat \Phi} \mid \Omega \uStile C}
						\end{prooftree} \]
						
						We then propose this definition for $\pi$:
						\[\pi = \begin{prooftree}
							\hypo{\pi_L}
							\infer[no rule]1{\unlab(\widehat{\Phi_L}) \lStile F}
							
							\hypo{\pi_1'}
							\infer[no rule]1{F, \unlab(\widehat \Phi) \lStile G}
							
							\infer2[\rCutLin]{\unlab(\widehat \Phi_L), \unlab(\widehat \Phi) \lStile G}
							
							\hypo{\pi'}
							\infer[no rule]1{{P: (G^\alpha, \widehat{\Psi_i}; \Gamma_i)_{1 \leq i \leq n} \mid \Omega_{\widehat \Phi} \mid \Omega \uStile C}}
							
							\infer2[\rCutUnH]{P:(\widehat {\Phi_L}, \widehat{\Phi}, \widehat{\Psi_i}; \Gamma_i)_{1 \leq i \leq j} \mid \Omega_{\widehat \Phi} \mid \Omega \uStile C}
						\end{prooftree} \]
						
						The equivalence is immediate by computation.
						
					\item Let us assume $K = F \otimes G$, $\widehat{\Phi} = \widehat{\Phi_1}, \widehat{\Phi_2}$, as well as:
					\[\pi_1 = \begin{prooftree}
						\hypo{\pi_{11}}
						\infer[no rule]1{\unlab(\widehat{\Phi_1}) \lStile F}
						
						\hypo{\pi_{12}}
						\infer[no rule]1{\unlab(\widehat{\Phi_2}) \lStile G}
						\infer2[\rOtLinL]{\unlab(\widehat{\Phi_1}, \widehat{\Phi_2}) \lStile F \otimes G}
					\end{prooftree}\]
					
					\[\pi_2 = \begin{prooftree}
						\hypo{\pi_2'}
						\infer[no rule]1{P:(F^{\otimes_1 \kappa}, G^{\otimes_2\kappa}, \widehat{\Psi_i}; \Gamma_i)_i \mid \Omega \uStile C}
						\infer[double]1[\rOtUn]{P:((F\otimes G)^\kappa, \widehat{\Psi_i}; \Gamma_i)_i \mid \Omega \uStile C}
					\end{prooftree} \]
					With $\kappa$ not found in $P$ or $\Omega$.\\
					Let $\alpha, \beta$ labels not found in $\pi_2$.
					
					The following proof suffices:
					\[\pi = \begin{prooftree}
						\hypo{\pi_{11}}
						\infer[no rule]1{\unlab(\widehat{\Phi_1}) \lStile F}
						
						\hypo{\pi_{21}}
						\infer[no rule]1{\unlab(\widehat{\Phi_2}) \lStile G}
						
						\hypo{\pi_2''}
						\infer[no rule]1{P:(F^\alpha, G^\beta, \widehat{\Psi_i}; \Gamma_i)_i \mid \Omega \uStile C}
						
						\infer2[\rCutUnH]{P:(F^\alpha, \widehat{\Phi_2}, \widehat{\Psi_i}; \Gamma_i)_i \mid \Omega \uStile C}
						
						\infer2[\rCutUnH]{P:(\widehat{\Phi_1}, \widehat{\Phi_2}, \widehat{\Psi_i}; \Gamma_i)_i \mid \Omega \uStile C}
					\end{prooftree}\]
					
					Where $\pi_2''$ is $\pi_2'$ where all instances of $\otimes_1 \kappa$ are replaced with $\alpha$, and all instances of $\otimes_2 \kappa$ are replaced with $\beta$.
					
					The semantic equivalence is trivial.
					
					\item
						Let us assume $K = F \with G$, and $\pi_1 = \begin{prooftree}
							\hypo{\pi_{11}}
							\infer[no rule]1{\unlab(\widehat \Phi) \lStile F}
							
							\hypo{\pi_{12}}
							\infer[no rule]1{\unlab(\widehat \Phi) \lStile G}
							
							\infer2[\rWiLinR]{\unlab(\widehat \Phi) \lStile F\with G}
						\end{prooftree}$ as well as
						\[\pi_2 = \begin{prooftree}
							\hypo{\pi_2'}
							\infer[no rule]1{P:(F^{\oplus_1 \kappa}, \widehat {\Psi_{i}}; \Gamma_i)_{1 \leq i \leq n} \mid (G^{\oplus_2 \kappa}, \widehat{\Psi_j}; \Gamma_j)_{n < j \leq m} \mid \Omega \uStile C}
							\infer[double]1[\rWiUn{2}]{P:(F^{\oplus_1 \kappa}, \widehat {\Psi_{i}}; \Gamma_i)_{1 \leq i \leq n} \mid ((F \with G)^{\kappa}, \widehat{\Psi_j}; \Gamma_j)_{n < j \leq m} \mid \Omega \uStile C}
							\infer[double]1[\rWiUn 1]{P:((F \with G)^\kappa, \widehat {\Psi_{i}}; \Gamma_i)_{1 \leq i \leq n} \mid ((F \with G)^{\kappa}, \widehat{\Psi_j}; \Gamma_j)_{n < j \leq m} \mid \Omega \uStile C}
						\end{prooftree} \]
						
						Let $\alpha, \beta$ two labels that do not appear in $\pi_2$.\\
						Let $\pi_2''$ be equal to $\pi_2$ were each instance of $\oplus_1 \kappa$ is replaced by $\alpha$ and each instance of $\oplus_2 \kappa$ is replaced by $\beta$.
						
						The following proof suffices:
						\[\begin{prooftree}
							\hypo{\pi_{11}}
							\infer[no rule]1{\unlab(\widehat \Phi) \lStile F}
							
							\hypo{\pi_{12}}
							\infer[no rule]1{\unlab(\widehat \Phi) \lStile G}
							
							\hypo{\pi_2''}
							\infer[no rule]1{P:(F^{\alpha}, \widehat {\Psi_{i}}; \Gamma_i)_{1 \leq i \leq n} \mid (G^{\beta}, \widehat{\Psi_j}; \Gamma_j)_{n < j \leq m} \mid \Omega \uStile C}
							
							\infer2[\rCutUnH]{P:(F^{\alpha}, \widehat {\Psi_{i}}; \Gamma_i)_{1 \leq i \leq n} \mid (\widehat \Phi, \widehat{\Psi_j}; \Gamma_j)_{n < j \leq m} \mid \Omega \uStile C}
							
							\infer2[\rCutUnH]{P:(\widehat \Phi, \widehat {\Psi_{i}}; \Gamma_i)_{1 \leq i \leq n} \mid (\widehat \Phi, \widehat{\Psi_j}; \Gamma_j)_{n < j \leq m} \mid \Omega \uStile C}
						\end{prooftree} \]
		 				
		 				The equivalence is immediate.
		 		\end{itemize}
		 		
		 \end{itemize}
		 
		 \proofsubparagraph{\rCutUnG\ Case}
		 Let $\pi_1$ a proof of $P:(\widehat{\Phi_i}; \Gamma_i)_{1 \leq i \leq n} \uStile A$ with $c(\pi_1) = 0$.
		 
		 Let $\pi_2$ a proof of $P: (\widehat \Phi; A, \Delta) \mid \Omega \uStile C$ with $c(\pi_2) = 0$.
		 
		 We want to exhibit $\pi$ a proof of $P: (\widehat {\Phi_i}, \widehat \Psi; \Gamma_i, \Delta)_{1 \leq i \leq n} \mid \Omega \uStile C$ such that $c(\pi) \leq |A|$ and
		 \[\pi \equiv \begin{prooftree}
		 	\hypo{\pi_1}
		 	\infer[no rule]1{P:(\widehat{\Phi_i}; \Gamma_i)_{1 \leq i \leq n} \uStile A}
		 	
		 	\hypo{\pi_2}
		 	\infer[no rule]1{P:(\widehat \Psi; A, \Delta) \mid \Omega \uStile C}
		 	
		 	\infer2[\rCutUnG]{P:(\widehat{\Phi_i}, \widehat \Psi; \Gamma_i, \Delta)_{1 \leq i \leq n} \mid \Omega \uStile C}
		 \end{prooftree} \]
		 
		 As in \cref{lem:simple}, we notice that any rule that is not \rAx, \rOneR, \rBtR\ or \rOpR\ commutes under \rCutUnG\ from $\pi_1$. Therefore, we can restrict ourselves to the following cases:
		 
		 \begin{itemize}
		 	\item If $A = a$ a literal, and $\pi_1 = \begin{prooftree}
		 		\infer0[\rAx]{(\emptyset; a) \uStile a}
		 	\end{prooftree}$, and $\pi = \pi_2$ suffices.
		 	
		 	\item Let $A = \unit$, and $\pi_1 = \begin{prooftree}
		 		\infer0[\rOneR]{(\emptyset; \emptyset) \uStile A}
		 	\end{prooftree}$.

		 		We apply \cref{lem:posOne} to create $\pi_3 = \begin{prooftree}
		 			\hypo{\pi_3'}
		 			\infer[no rule]1{P:(\widehat \Psi; \Delta) \mid \Omega \uStile C}
		 			\infer1[\rOneL]{P:(\widehat \Psi; \unit, \Delta) \mid \Omega \uStile C}
		 		\end{prooftree}$ with $c(\pi_3) = c(\pi_2) = 0$ and $\pi_3 \equiv \pi_2$.
		 		
		 		There, $\pi_3'$ trivially suffices.
		 	
		 	\item Let $A = A_1 \boxtimes A_2$, and $\pi_1 = \begin{prooftree}
		 		\hypo{\pi_{11}}
		 		\infer[no rule]1{P:(\widehat{\Phi_i}; \Gamma_i)_{1 \leq i \leq n} \uStile A_1}
		 		
		 		\hypo{\pi_{12}}
		 		\infer[no rule]1{P:(\widehat{\Phi_j'}; \Gamma_j')_{1 \leq j \leq m} \uStile A_2}
		 		
		 		\infer2[\rBtL]{P:(\widehat{\Phi_i}, \widehat{\Phi_j'}; \Gamma_i, \Gamma_j')_{ij} \uStile A_1 \boxtimes A_2}
		 	\end{prooftree}$.
		 	
		 		According to \cref{lem:posBt}, there exists $\pi_3 = \begin{prooftree}
		 			\hypo{\pi_3'}
		 			\infer[no rule]1{P:(\widehat \Psi; A_1, A_2, \Delta) \mid \Omega \uStile C}
		 			\infer1[\rBtL]{P:(\widehat \Psi; A_1 \boxtimes A_2, \Delta) \mid \Omega \uStile C}
		 		\end{prooftree}$ with $c(\pi_3) = c(\pi_2) = 0$ and $\pi_3 \equiv \pi_2$.
		 		
		 		Trivially, the following $\pi$ suffices:
		 		\[\begin{prooftree}
		 			\hypo{\pi_{11}}
		 			\infer[no rule]1{P:(\widehat{\Phi_i}; \Gamma_i)_{1 \leq i \leq n} \uStile A_1}
		 			
		 			\hypo{\pi_{12}}
		 			\infer[no rule]1{P:(\widehat{\Phi_j'}; \Gamma_j')_{1 \leq j \leq m} \uStile A_2}
		 			
		 			\hypo{\pi_3'}
		 			\infer[no rule]1{P:(\widehat \Psi; A_1, A_2, \Delta) \mid \Omega \uStile C}
		 			
		 			\infer2[\rCutUnG]{P:(\widehat{\Phi_j'}; A_1, \Gamma_j', \Delta)_{1 \leq j \leq m} \mid \Omega \uStile C}
		 			
		 			\infer[double]2[\rCutUnG]{P:(\widehat{\Phi_i}, \widehat{\Phi_j'}, \widehat \Psi; \Gamma_i, \Gamma_j', \Delta)_{i, j} \mid \Omega \uStile C}
		 		\end{prooftree} \]
		 		With the double line denoting $m$ applications of \rCutUnG\ with $\pi_{11}$.
		 		
		 	\item
		 		Let $A = A_1 \oplus A_2$, $\pi_1 = \begin{prooftree}
		 			\hypo{\pi_{11}}
		 			\infer[no rule]1{P:(\widehat{\Phi_i}; \Gamma_i)_{1 \leq i \leq n} \uStile A_1}
		 			
		 			\hypo{\pi_{12}}
		 			\infer[no rule]1{P:(\widehat{\Phi_j'}; \Gamma_j')_{1 \leq j \leq m} \uStile A_2}
		 			
		 			\infer2[\rOpR]{P:(\widehat{\Phi_i}; \Gamma_i)_{1 \leq i \leq n} \mid (\widehat{\Phi_j'}; \Gamma_j')_{1 \leq j \leq m} \uStile A_1 \oplus A_2}
		 			
		 		\end{prooftree}$.
		 		
		 		Using \cref{lem:posOp}, we create $\pi_3 = \begin{prooftree}
		 			\hypo{\pi_3'}
		 			\infer[no rule]1{P:(\widehat \Psi; A_1, \Delta) \mid (\widehat \Psi; A_2, \Delta) \mid \Omega \uStile C}
		 			
		 			\infer1[\rOpL]{P:(\widehat \Psi; A_1 \oplus A_2, \Delta) \mid \Omega \uStile C}
		 		\end{prooftree}$
		 		
		 		with $c(\pi_3) = c(\pi_2) = 0$ and $\pi_3 \equiv \pi_2$.
		 		
		 		The following $\pi$ trivially suffices:
		 		\begin{small}
		 		\[\begin{prooftree}
		 			\hypo{\pi_{11}}
		 			\infer[no rule]1{P:(\widehat{\Phi_i}; \Gamma_i)_{i} \uStile A_1}
		 			
		 			\hypo{\pi_{12}}
		 			\infer[no rule]1{P:(\widehat{\Phi_j'}; \Gamma_j')_{j} \uStile A_2}
		 			
		 			\hypo{\pi_3'}
		 			\infer[no rule]1{P:(\widehat \Psi; A_1, \Delta) \mid (\widehat \Psi; A_2, \Delta) \mid \Omega \uStile C}
		 			
		 			\infer2[\rCutUnG]{P: (\widehat \Psi; A_1, \Delta) \mid (\widehat{\Phi_j'}, \widehat \Psi; \Gamma_j')_{1 \leq j \leq m} \mid \Omega \uStile C}
		 			
		 			\infer2[\rCutUnG]{P: (\widehat{\Phi_i}, \widehat \Psi; \Gamma_i A_1, \Delta)_{1 \leq i \leq n} \mid (\widehat{\Phi_j'}, \widehat \Psi; \Gamma_j')_{1 \leq j \leq m} \mid \Omega \uStile C}
		 		\end{prooftree} \]
		 		\end{small}
		 \end{itemize}
		 
		 Note: it follows easily that instances of \rCutUnG\ can be eliminated without any further lemmas.
		 
		 This covers all cases for this the last part of the lemmas.
	\end{proof}
	
	These lemmas allow us to state and prove a first version of the cut-elimination theorem. However, since cut rank does not take promises into account, it introduces a slight difference between fully cut-free proofs, and proofs with cut-rank zero. Therefore, we need to prove a slightly weaker theorem first.
	
	\begin{lemma}[Pre-Cut Elimination]
		\begin{itemize}
			\item If $\pi$ is a proof of $\Phi \lStile H$, there exists $\rho$ a cut-free proof of $\Phi \lStile H$ such that $\sem \pi = \sem \rho$.
			\item Let $P$ be a set of cut-free promises. If $\pi$ is a proof of $P: \Omega \uStile C$, there exists $\rho$ a cut-free proof of $P: \Omega \uStile C$ such that $\sem \pi = \sem \rho$.
		\end{itemize}
	\end{lemma}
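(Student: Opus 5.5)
We proceed by the standard Gentzen-style strategy: a lexicographic induction on the pair $(c(\pi), n(\pi))$, where $n(\pi)$ is the number of cut instances in $\pi$ whose cut formula $T$ satisfies $|T|+1 = c(\pi)$, i.e.\ the cuts of maximal rank. If $c(\pi)=0$ the proof is already cut-free, so the only work is to show that whenever $c(\pi)>0$ we can produce an equivalent proof of the same sequent (with the same promise set $P$ in the $\uStile$ case) in which either $c$ decreases or $c$ stays the same and $n$ decreases.

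To do this, we pick a \emph{topmost} cut of maximal rank in $\pi$, i.e.\ an instance of \rCutLin, \rCutUnH\ or \rCutUnG\ whose two premise subproofs $\pi_1,\pi_2$ have cut rank strictly less than $c(\pi)$. We first apply the induction hypothesis to $\pi_1$ and $\pi_2$ separately (they are smaller in the lexicographic order, since their own $c$ is strictly smaller), obtaining equivalent cut-free $\rho_1,\rho_2$. For \rCutUnH\ the left premise is a $\lStile$ proof and must be fully cut-free, which the induction gives; for the right premise, the promise set may be extended inside by \rArUnNew, and here we use the hypothesis that $P$ is cut-free: promises introduced by \rArUnNew\ inside a proof carry their proof $\pi_L$ as an actual subproof, which is part of the induction, so after elimination all promises remain cut-free. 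We then invoke \cref{lem:cutred} on $\rho_1,\rho_2$ (which satisfy its hypotheses: cut-free, resp.\ cut rank $0$, and for \rCutUnH\ freshness of $\alpha$ is given by the rule's side condition) to get a proof of the cut's conclusion of rank at most $|T| < c(\pi)$, equivalent to the cut. Replacing the subproof by this one yields a proof equivalent to $\pi$ (semantics is compositional, as every rule's interpretation is a function of its premises' interpretations), with one fewer maximal-rank cut and no new ones, so $n$ decreases or $c$ drops.

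The main obstacle is the replacement step in the $\uStile$ fragment: substituting a subproof can alter which promises and labels appear below it. A replaced subproof whose premise lived under promise set $P\sqcup p_L$ may now use \rArUnPro\ or \rArUnNew\ differently (as in the \rArUnPro$\to$\rArUnNew\ swaps of \cref{lem:mono}), and \cref{lem:cutred} may rewrite the proof $\pi_L$ stored in a promise (the $p_L \mapsto p_L'$ step). I would handle this by observing that \cref{lem:cutred} preserves the conclusion sequent up to the content of the promise set, which is exactly what proof equivalence allows, and by checking that the rules below the replaced subproof only depend on the promises they explicitly introduce; any promise changed by the replacement is one introduced strictly above the cut, so the ambient $P$ at the root is untouched. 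Consistency of labeling is preserved because \cref{lem:cutred} only introduces fresh labels. With this, the induction closes and both items follow, the $\lStile$ item being the special case with no promises at all.
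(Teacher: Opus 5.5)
\textbf{There is a genuine gap in how you handle promises.} Your engine is reasonable: take a topmost maximal-rank cut, clean its premises, apply \cref{lem:cutred}, and splice the result back. But the promise issue, which you rightly single out as the main obstacle, does not go through as written. It fails in two places.

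First, the statement you apply inductively to the premises of the chosen cut (its second item) requires the ambient promise set to be cut-free. Suppose the cut sits in the right premise of some \rArUnNew\ rule $X$ lying between the root and the cut. Then that promise set contains the promise $p_L$ created by $X$. Its proof $\pi_L$ is the \emph{left} premise of $X$, which is a sibling branch, not a subproof of your cut. Being topmost of maximal rank says nothing about $\pi_L$: it may contain lower-rank cuts, or another maximal-rank cut. So the hypothesis of your induction is simply not met there. Your remark about promises ``introduced inside'' only covers \rArUnNew\ rules above the cut.

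Second, the chosen cut may itself lie inside such a $\pi_L$. Splicing in the new proof then changes $p_L$, which is recorded in the promise set of every sequent of the right premise of $X$ and is consumed by the \rArUnPro\ rules there. So it is false that every promise affected by the replacement is introduced strictly above the cut; the whole sibling branch must be rewritten with the new promise. This is semantically harmless, since $\sem{\pi_L}$ is preserved and cut rank ignores promise proofs that are not subproofs. It is, however, a non-local change your argument does not account for.

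You also cannot patch the first point by cleaning $\pi_L$ first with your induction hypothesis. A proper subproof need not be smaller for $(c,n)$, because $\pi_L$ may contain all the maximal-rank cuts of $\pi$.

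\textbf{How the paper avoids this.} It nests a structural induction on $\pi$ inside the induction on $c(\pi)$, and keeps ``the ambient promise set is cut-free'' as an invariant. In the \rArUnNew\ case it proceeds in three steps:
\begin{enumerate}
\item it turns the left premise $\pi_1$ into a cut-free $\rho_1$;
\item it substitutes $\rho_1$ for $\pi_1$ in the promise throughout the right premise, yielding $\pi_2'$ with $d(\pi_2')=d(\pi_2)$ and only cut-free ambient promises;
\item only then does it recurse into $\pi_2'$.
\end{enumerate}
At a cut, it cleans both premises structurally, applies \cref{lem:cutred}, and applies the outer hypothesis directly to the result, whose rank is at most $|F| < c(\pi)$. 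No subproof is ever spliced into a larger proof.

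\textbf{How to keep your argument.} If you want to retain the $(c,n)$ measure, you need three changes:
\begin{enumerate}
\item drop the cut-freeness of $P$ from the inductive statement, proving cut rank $0$ for arbitrary $P$;
\item perform the promise substitution in sibling branches explicitly;
\item recover the lemma at the end, using the fact that every promise proof occurring in a cut-rank-$0$ proof of $P:\Omega\uStile C$ is either in $P$ or an actual subproof.
\end{enumerate}

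One smaller point: what makes the splice legal at all is that \cref{lem:cutred} returns a proof of exactly the same sequent, promise set included. Equivalence ``up to the promise set'' would not let you plug it back under the rules below.
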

	
	\begin{proof}
		Let us proceed by induction over $c(\pi)$.
		
		\proofsubparagraph{Initial Case}
		If $c(\pi) = 0$, it is trivial.
		
		\proofsubparagraph{Induction}
		Let $n \geq 0$. We assume that the theorem is true for all proofs of cut rank less than or equal to $n$.
		
		Let $\pi$ such that $c(\pi) = n+1$.
		
		We proceed by structural induction over $\pi$:
		\begin{itemize}
			\item If $\pi = \begin{prooftree}
				\hypo{\pi'}
				\infer[no rule]1{\Phi, G, F, \Psi \lStile H}
				\infer1[\rExchLin]{\Phi, F, G, \Psi \lStile F}
			\end{prooftree}$, then we apply the induction hypothesis to $\pi'$ and obtain $\rho$
			
			Then, $\rho = \begin{prooftree}
				\hypo{\rho'}
				\infer[no rule]1{\Phi, G, F, \Psi \lStile H}
				\infer1[\rExchLin]{\Phi, F, G, \Psi \lStile F}
			\end{prooftree}$ suffices.
			
			\item We proceed identically for \rUnitRight, \rOtLinL, \rArLinR, \rWiLinL{i}, \rExchUnW, \rExchUnH, \rExchUnG, \rOneL, \rBtL, \rOpL, \rPhTh, \rRotTh, \rOtUn, \rWiUn{i}\ and \rArUnPro.
			
			\item If $\pi = \begin{prooftree}
				\hypo{\pi_1}
				\infer[no rule]{1}{\Phi \lStile F}
				
				\hypo{\pi_2}
				\infer[no rule]{1}{\Psi \lStile G}
				
				\infer2[\rOtLinR]{\Phi, \Psi \lStile F \otimes G}
			\end{prooftree}$ then, by structural induction hypothesis over $\pi_1$ and $\pi_2$, we obtain $\rho_1$ and $\rho_2$.
			
			From this, $\rho = \begin{prooftree}
				\hypo{\rho_1}
				\infer[no rule]{1}{\Phi \lStile F}
				
				\hypo{\rho_2}
				\infer[no rule]{1}{\Psi \lStile G}
				
				\infer2[\rOtLinR]{\Phi, \Psi \lStile F \otimes G}
			\end{prooftree}$ suffices.
			
			\item We proceed identically for \rArLinL, \rWiLinR, \rBtR, \rOpR\ and \rUnitUn.
			
			\item If $\pi = \begin{prooftree}
				\hypo{\pi_1}
				\infer[no rule]1{\unlab(\widehat \Phi) \lStile F}
				
				\hypo{\pi_2}
				\infer[no rule]1{P \sqcup (\widehat \Phi, \pi_1, F \stackrel \alpha \multimap G, G^\beta): (G^\beta, \widehat{\Psi}; \Gamma) \mid \Omega \uStile C}
				
				\infer2[\rArUnNew]{P:(\widehat \Phi, F \stackrel{\alpha}{\multimap} G \widehat{\Psi}; \Gamma) \mid \Omega \uStile C}
			\end{prooftree}$ then, by induction hypothesis, there exists a cut free equivalent $\rho_1$ of $\pi_1$.
			
			We then consider $\pi_2'$, obtained by replacing $\pi_1$ by $\rho_1$ in the $G^\beta$ promise.
			The promises in the initial sequent of $\pi_2'$ are therefore all cut-free, and $d(\pi_2') = d(\pi_2) < d(\pi)$.
			
			Therefore, we can apply the induction hypothesis to $\pi_2'$ and obtain its equivalent $\rho_2$.
			
			We can introduce
			\[\rho = \begin{prooftree}
				\hypo{\rho_1}
				\infer[no rule]1{\unlab(\widehat \Phi) \lStile F}
				
				\hypo{\rho_2}
				\infer[no rule]1{P \sqcup (\widehat \Phi, \rho_1, F \stackrel \alpha \multimap G, G^\beta): (G^\beta, \widehat{\Psi_i}; \Gamma_i)_{i \in I} \mid \Omega \uStile C}
				
				\infer2[\rArUnNew]{P:(\widehat \Phi, F \stackrel{\alpha}{\multimap} G \widehat{\Psi_i}; \Gamma_i)_{i \in I} \mid \Omega \uStile C}
			\end{prooftree}\]
			
			which suffices.
			
			\item If $\pi = \begin{prooftree}
				\hypo{\pi_1}
				\infer[no rule]1{\Phi \lStile F}
				\hypo{\pi_2}
				\infer[no rule]1{F, \Psi \lStile H}
				
				\infer2[\rCutLin]{\Phi, \Psi \lStile H}
			\end{prooftree}$, then let us use the structural induction on $\pi_1$ and $\pi_2$ to obtain $\rho_1$ and $\rho_2$ cut-free.
			
			We then use \cref{lem:cutred} using $\rho_1$ and $\rho_2$, to create $\pi_3$ equivalent to \[\begin{prooftree}
				\hypo{\rho_1}
				\infer[no rule]1{\Phi \lStile F}
				\hypo{\rho_2}
				\infer[no rule]1{F, \Psi \lStile H}
				
				\infer2[\rCutLin]{\Phi, \Psi \lStile H}	
			\end{prooftree} \equiv \pi\] with cut rank less than or equal to $|F|$.
			
			We trivially have $|F| \leq n$, and we apply the cut-rank induction hypothesis to $\pi_3$ to create $\rho$.
			
			\item We proceed similarly for \rCutUnH\ and \rCutUnG.

		\end{itemize}
	\end{proof}
	
	\cutelim*

	\begin{proof}
		We obtain this from pre-cut elimination by first eliminating cuts from every proof in $P$ and turning it into $Q$. We then apply pre-cut elimination to the resulting proof.
	\end{proof}

\end{document}

%% file: example-circuit.tikz
\begin{tikzpicture}[scale=2]
	\begin{pgfonlayer}{nodelayer}
		\node [style=none] (0) at (-1, 0.5) {};
		\node [style=none] (1) at (1.5, 0.5) {};
		\node [style=none] (2) at (-1, 0) {};
		\node [style=none] (3) at (1.5, 0) {};
		\node [style=none] (4) at (-1, -0.5) {};
		\node [style=none] (5) at (1.5, -0.5) {};
		\node [style=dot] (6) at (-0.5, 0) {};
		\node [style=none] (7) at (-0.5, -0.5) {$\oplus$};
		\node [style=dot] (8) at (1, -0.5) {};
		\node [style=none] (9) at (1, 0.5) {$\oplus$};
		\node [style=box] (10) at (-0.5, 0.5) {$H$};
		\node [style=box] (11) at (0.25, 0) {$P(\frac\pi2)$};
	\end{pgfonlayer}
	\begin{pgfonlayer}{edgelayer}
		\draw (1.center) to (0.center);
		\draw (3.center) to (2.center);
		\draw (4.center) to (5.center);
		\draw (6) to (7.center);
		\draw (8) to (9.center);
	\end{pgfonlayer}
\end{tikzpicture}

%% file: Grover.tikz
\begin{tikzpicture}[scale=2]
	\begin{pgfonlayer}{nodelayer}
		\node [style=none] (0) at (-5.25, 0.5) {};
		\node [style=none] (1) at (-4.5, 0.5) {};
		\node [style=none] (2) at (-5.25, -0.5) {};
		\node [style=none] (3) at (-4.5, -0.5) {};
		\node [style=none] (4) at (-4.5, 0.625) {};
		\node [style=box] (5) at (-4.875, 0.5) {$H$};
		\node [style=box] (6) at (-4.875, -0.5) {$H$};
		\node [style=none] (7) at (-4.5, -0.625) {};
		\node [style=none] (8) at (-4, -0.625) {};
		\node [style=none] (9) at (-4, 0.625) {};
		\node [style=none] (10) at (-4, 0.5) {};
		\node [style=none] (11) at (-3.25, 0.5) {};
		\node [style=none] (12) at (-4, -0.5) {};
		\node [style=none] (13) at (-3.25, -0.5) {};
		\node [style=box] (14) at (-3.625, 0.5) {$H$};
		\node [style=box] (15) at (-3.625, -0.5) {$H$};
		\node [style=none] (16) at (-3.25, 0.625) {};
		\node [style=none] (17) at (-3.25, -0.625) {};
		\node [style=none] (18) at (-2.75, -0.625) {};
		\node [style=none] (19) at (-2.75, 0.625) {};
		\node [style=none] (20) at (-3, 0) {$R$};
		\node [style=none] (21) at (-2.75, 0.5) {};
		\node [style=none] (22) at (-2, 0.5) {};
		\node [style=none] (23) at (-2.75, -0.5) {};
		\node [style=none] (24) at (-2, -0.5) {};
		\node [style=box] (26) at (-2.375, 0.5) {$H$};
		\node [style=box] (27) at (-2.375, -0.5) {$H$};
		\node [style=none] (42) at (-1.5, 0.5) {};
		\node [style=none] (43) at (-1.25, 0.5) {};
		\node [style=none] (44) at (-1.5, -0.5) {};
		\node [style=none] (45) at (-1.25, -0.5) {};
		\node [style=none] (46) at (-1.25, 0.625) {};
		\node [style=none] (49) at (-1.25, -0.625) {};
		\node [style=none] (50) at (-0.75, -0.625) {};
		\node [style=none] (51) at (-0.75, 0.625) {};
		\node [style=none] (52) at (-0.75, 0.5) {};
		\node [style=none] (53) at (0, 0.5) {};
		\node [style=none] (54) at (-0.75, -0.5) {};
		\node [style=none] (55) at (0, -0.5) {};
		\node [style=box] (56) at (-0.375, 0.5) {$H$};
		\node [style=box] (57) at (-0.375, -0.5) {$H$};
		\node [style=none] (58) at (0, 0.625) {};
		\node [style=none] (59) at (0, -0.625) {};
		\node [style=none] (60) at (0.5, -0.625) {};
		\node [style=none] (61) at (0.5, 0.625) {};
		\node [style=none] (62) at (0.25, 0) {$R$};
		\node [style=none] (63) at (0.5, 0.5) {};
		\node [style=none] (64) at (1.25, 0.5) {};
		\node [style=none] (65) at (0.5, -0.5) {};
		\node [style=none] (66) at (1.25, -0.5) {};
		\node [style=box] (67) at (0.875, 0.5) {$H$};
		\node [style=box] (68) at (0.875, -0.5) {$H$};
		\node [style=none] (73) at (-1.75, 0.5) {...};
		\node [style=none] (74) at (-1.75, -0.5) {...};
		\node [style=none, rotate=90] (75) at (-5, 0) {...};
		\node [style=none, rotate=90] (76) at (1, 0) {...};
		\node [style=none, rotate=90] (77) at (-3.625, 0) {...};
		\node [style=none, rotate=90] (78) at (-2.375, 0) {...};
		\node [style=none, rotate=90] (80) at (-1.5, 0) {...};
		\node [style=none, rotate=90] (81) at (-0.375, 0) {...};
		\node [style=none] (82) at (-4.625, 0.875) {};
		\node [style=none] (83) at (1.125, 0.875) {};
		\node [style=none] (84) at (-1.75, 1.125) {iterate};
	\end{pgfonlayer}
	\begin{pgfonlayer}{edgelayer}
		\draw (1.center) to (0.center);
		\draw (3.center) to (2.center);
		\draw [style=dashed] (4.center) to (7.center);
		\draw [style=dashed] (7.center) to (8.center);
		\draw [style=dashed] (8.center) to (9.center);
		\draw [style=dashed] (9.center) to (4.center);
		\draw (11.center) to (10.center);
		\draw (13.center) to (12.center);
		\draw (16.center) to (17.center);
		\draw (17.center) to (18.center);
		\draw (18.center) to (19.center);
		\draw (19.center) to (16.center);
		\draw (22.center) to (21.center);
		\draw (24.center) to (23.center);
		\draw (43.center) to (42.center);
		\draw (45.center) to (44.center);
		\draw [style=dashed] (46.center) to (49.center);
		\draw [style=dashed] (49.center) to (50.center);
		\draw [style=dashed] (50.center) to (51.center);
		\draw [style=dashed] (51.center) to (46.center);
		\draw (53.center) to (52.center);
		\draw (55.center) to (54.center);
		\draw (58.center) to (59.center);
		\draw (59.center) to (60.center);
		\draw (60.center) to (61.center);
		\draw (61.center) to (58.center);
		\draw (64.center) to (63.center);
		\draw (66.center) to (65.center);
		\draw [decoration=brace, decorate] (82.center) to (83.center);
	\end{pgfonlayer}
\end{tikzpicture}

%% file: switch-circuit.tikz
\begin{tikzpicture}[scale=2]
	\begin{pgfonlayer}{nodelayer}
		\node [style=none] (0) at (0, 0.25) {};
		\node [style=none] (1) at (-2, 0.25) {};
		\node [style=none] (2) at (0, -0.25) {};
		\node [style=none] (3) at (-2, -0.25) {};
		\node [style=box] (4) at (-0.5, -0.25) {$U$};
		\node [style=box] (5) at (-1.5, -0.25) {$U$};
		\node [style=box] (6) at (-1, -0.25) {$V$};
		\node [style=dot] (7) at (-0.5, 0.25) {};
		\node [style=white dot] (8) at (-1.5, 0.25) {};
	\end{pgfonlayer}
	\begin{pgfonlayer}{edgelayer}
		\draw (2.center) to (3.center);
		\draw (1.center) to (0.center);
		\draw (7) to (4);
		\draw (8) to (5);
	\end{pgfonlayer}
\end{tikzpicture}

%% file: grenoble-proof.tikz
\begin{tikzpicture}
	\begin{pgfonlayer}{nodelayer}
		\node [style=rule dot] (1) at (0, -5) {$\oplus$};
		\node [style=none] (2) at (-6, -2) {};
		\node [style=none] (3) at (-7, 0) {};
		\node [style=none] (4) at (-7, -1) {};
		\node [style=none] (5) at (-5, -1) {};
		\node [style=none] (6) at (-5, 0) {};
		\node [style=none] (7) at (-6, -0.5) {\rUnitUn $f_0$};
		\node [style=none] (8) at (-6, -1) {};
		\node [style=none] (9) at (-6, 0) {};
		\node [style=rule dot] (10) at (-6, 2) {$\oplus$};
		\node [style=none] (11) at (-8, 4.5) {};
		\node [style=none] (12) at (-4, 4.5) {};
		\node [style=rule dot] (13) at (-8, 5.5) {$\oplus$};
		\node [style=rule dot] (18) at (-4, 5.5) {$\oplus$};
		\node [style=none] (19) at (-8.5, 9.5) {};
		\node [style=none] (20) at (-8.5, 8.5) {};
		\node [style=none] (21) at (-6.5, 8.5) {};
		\node [style=none] (22) at (-6.5, 9.5) {};
		\node [style=none] (23) at (-7.5, 9) {\rUnitUn $f_1$};
		\node [style=none] (24) at (-8, 8.5) {};
		\node [style=none] (25) at (-7.5, 9.5) {};
		\node [style=none] (26) at (-5.5, 9.5) {};
		\node [style=none] (27) at (-5.5, 8.5) {};
		\node [style=none] (28) at (-3.5, 8.5) {};
		\node [style=none] (29) at (-3.5, 9.5) {};
		\node [style=none] (30) at (-4.5, 9) {\rUnitUn $f_2$};
		\node [style=none] (31) at (-4, 8.5) {};
		\node [style=none] (32) at (-4.5, 9.5) {};
		\node [style=none] (33) at (-7, 8.5) {};
		\node [style=none] (34) at (-5, 8.5) {};
		\node [style=none] (35) at (-1, 0) {};
		\node [style=none] (36) at (-1, -1) {};
		\node [style=none] (37) at (1, -1) {};
		\node [style=none] (38) at (1, 0) {};
		\node [style=none] (39) at (0, -0.5) {\rUnitUn $f_1$};
		\node [style=none] (40) at (0, -1) {};
		\node [style=none] (41) at (0, 0) {};
		\node [style=rule dot] (42) at (0, 2) {$\oplus$};
		\node [style=none] (43) at (-2, 4.5) {};
		\node [style=none] (44) at (2, 4.5) {};
		\node [style=rule dot] (45) at (-2, 5.5) {$\oplus$};
		\node [style=rule dot] (50) at (2, 5.5) {$\oplus$};
		\node [style=none] (51) at (-2.5, 9.5) {};
		\node [style=none] (52) at (-2.5, 8.5) {};
		\node [style=none] (53) at (-0.5, 8.5) {};
		\node [style=none] (54) at (-0.5, 9.5) {};
		\node [style=none] (55) at (-1.5, 9) {\rUnitUn $f_2$};
		\node [style=none] (56) at (-2, 8.5) {};
		\node [style=none] (57) at (-1.5, 9.5) {};
		\node [style=none] (58) at (0.5, 9.5) {};
		\node [style=none] (59) at (0.5, 8.5) {};
		\node [style=none] (60) at (2.5, 8.5) {};
		\node [style=none] (61) at (2.5, 9.5) {};
		\node [style=none] (62) at (1.5, 9) {\rUnitUn $f_0$};
		\node [style=none] (63) at (2, 8.5) {};
		\node [style=none] (64) at (1.5, 9.5) {};
		\node [style=none] (65) at (-1, 8.5) {};
		\node [style=none] (66) at (1, 8.5) {};
		\node [style=none] (67) at (5, 0) {};
		\node [style=none] (68) at (5, -1) {};
		\node [style=none] (69) at (7, -1) {};
		\node [style=none] (70) at (7, 0) {};
		\node [style=none] (71) at (6, -0.5) {\rUnitUn $f_2$};
		\node [style=none] (72) at (6, -1) {};
		\node [style=none] (73) at (6, 0) {};
		\node [style=rule dot] (74) at (6, 2) {$\oplus$};
		\node [style=none] (75) at (4, 4.5) {};
		\node [style=none] (76) at (8, 4.5) {};
		\node [style=rule dot] (77) at (4, 5.5) {$\oplus$};
		\node [style=rule dot] (82) at (8, 5.5) {$\oplus$};
		\node [style=none] (83) at (3.5, 9.5) {};
		\node [style=none] (84) at (3.5, 8.5) {};
		\node [style=none] (85) at (5.5, 8.5) {};
		\node [style=none] (86) at (5.5, 9.5) {};
		\node [style=none] (87) at (4.5, 9) {\rUnitUn $f_0$};
		\node [style=none] (88) at (4, 8.5) {};
		\node [style=none] (89) at (4.5, 9.5) {};
		\node [style=none] (90) at (6.5, 9.5) {};
		\node [style=none] (91) at (6.5, 8.5) {};
		\node [style=none] (92) at (8.5, 8.5) {};
		\node [style=none] (93) at (8.5, 9.5) {};
		\node [style=none] (94) at (7.5, 9) {\rUnitUn $f_1$};
		\node [style=none] (95) at (8, 8.5) {};
		\node [style=none] (96) at (7.5, 9.5) {};
		\node [style=none] (97) at (5, 8.5) {};
		\node [style=none] (98) at (7, 8.5) {};
		\node [style=none] (99) at (6, -2) {};
		\node [style=none] (100) at (-8.5, 12.5) {};
		\node [style=none] (101) at (-8.5, 11.5) {};
		\node [style=none] (102) at (-6.5, 11.5) {};
		\node [style=none] (103) at (-6.5, 12.5) {};
		\node [style=none] (106) at (-7.5, 12.5) {};
		\node [style=none] (107) at (-7.5, 11.5) {};
		\node [style=none] (108) at (-5.5, 12.5) {};
		\node [style=none] (109) at (-5.5, 11.5) {};
		\node [style=none] (110) at (-3.5, 11.5) {};
		\node [style=none] (111) at (-3.5, 12.5) {};
		\node [style=none] (112) at (-4.5, 12) {\rUnitUn $f_1$};
		\node [style=none] (113) at (-4.5, 11.5) {};
		\node [style=none] (114) at (-4.5, 12.5) {};
		\node [style=none] (115) at (3.5, 12.5) {};
		\node [style=none] (116) at (3.5, 11.5) {};
		\node [style=none] (117) at (5.5, 11.5) {};
		\node [style=none] (118) at (5.5, 12.5) {};
		\node [style=none] (119) at (4.5, 12) {\rUnitUn $f_1$};
		\node [style=none] (120) at (4.5, 11.5) {};
		\node [style=none] (121) at (4.5, 12.5) {};
		\node [style=none] (122) at (-2.5, 12.5) {};
		\node [style=none] (123) at (-2.5, 11.5) {};
		\node [style=none] (124) at (-0.5, 11.5) {};
		\node [style=none] (125) at (-0.5, 12.5) {};
		\node [style=none] (126) at (-1.5, 12) {\rUnitUn $f_0$};
		\node [style=none] (127) at (-1.5, 11.5) {};
		\node [style=none] (128) at (-1.5, 12.5) {};
		\node [style=none] (129) at (6.5, 12.5) {};
		\node [style=none] (130) at (6.5, 11.5) {};
		\node [style=none] (131) at (8.5, 11.5) {};
		\node [style=none] (132) at (8.5, 12.5) {};
		\node [style=none] (133) at (7.5, 12) {\rUnitUn $f_0$};
		\node [style=none] (134) at (7.5, 11.5) {};
		\node [style=none] (135) at (7.5, 12.5) {};
		\node [style=none] (136) at (0.5, 12.5) {};
		\node [style=none] (137) at (0.5, 11.5) {};
		\node [style=none] (138) at (2.5, 11.5) {};
		\node [style=none] (139) at (2.5, 12.5) {};
		\node [style=none] (140) at (1.5, 12) {\rUnitUn $f_2$};
		\node [style=none] (141) at (1.5, 11.5) {};
		\node [style=none] (142) at (1.5, 12.5) {};
		\node [style=none] (149) at (-7.5, 12) {\rUnitUn $f_2$};
		\node [style=rule dot] (150) at (0, -7) {$\boxtimes$};
		\node [style=rule dot] (151) at (0, -9) {$\boxtimes$};
		\node [style=none] (152) at (0, -11) {};
		\node [style=none] (153) at (-9, 14) {};
		\node [style=none] (154) at (9, 14) {};
		\node [style=none] (155) at (-9, 17) {};
		\node [style=none] (156) at (9, 17) {};
		\node [style=none] (157) at (-7.5, 14) {};
		\node [style=none] (158) at (-4.5, 14) {};
		\node [style=none] (159) at (-1.5, 14) {};
		\node [style=none] (160) at (1.5, 14) {};
		\node [style=none] (161) at (4.5, 14) {};
		\node [style=none] (162) at (7.5, 14) {};
		\node [style=none] (163) at (0, 15.5) {$\pi_{\text{fin}}$};
	\end{pgfonlayer}
	\begin{pgfonlayer}{edgelayer}
		\draw [style=box edge] (4.center)
			 to (3.center)
			 to (6.center)
			 to (5.center)
			 to cycle;
		\draw [in=-90, out=165, looseness=0.50] (1) to (2.center);
		\draw (2.center) to (8.center);
		\draw [style=box edge] (20.center)
			 to (19.center)
			 to (22.center)
			 to (21.center)
			 to cycle;
		\draw [style=box edge] (27.center)
			 to (26.center)
			 to (29.center)
			 to (28.center)
			 to cycle;
		\draw (9.center) to (10);
		\draw [in=270, out=135] (10) to (11.center);
		\draw (11.center) to (13);
		\draw [in=45, out=-90] (12.center) to (10);
		\draw (12.center) to (18);
		\draw [style=box edge] (36.center)
			 to (35.center)
			 to (38.center)
			 to (37.center)
			 to cycle;
		\draw [style=box edge] (52.center)
			 to (51.center)
			 to (54.center)
			 to (53.center)
			 to cycle;
		\draw [style=box edge] (59.center)
			 to (58.center)
			 to (61.center)
			 to (60.center)
			 to cycle;
		\draw (41.center) to (42);
		\draw [in=-90, out=135] (42) to (43.center);
		\draw (43.center) to (45);
		\draw [in=45, out=-90] (44.center) to (42);
		\draw (44.center) to (50);
		\draw (1) to (40.center);
		\draw [style=box edge] (68.center)
			 to (67.center)
			 to (70.center)
			 to (69.center)
			 to cycle;
		\draw [style=box edge] (85.center)
			 to (84.center)
			 to (83.center)
			 to (86.center)
			 to cycle;
		\draw [style=box edge] (93.center)
			 to (92.center)
			 to (91.center)
			 to (90.center)
			 to cycle;
		\draw (73.center) to (74);
		\draw [in=-90, out=135] (74) to (75.center);
		\draw (75.center) to (77);
		\draw [in=45, out=-90] (76.center) to (74);
		\draw (76.center) to (82);
		\draw [in=15, out=-90, looseness=0.50] (99.center) to (1);
		\draw (99.center) to (72.center);
		\draw [style=box edge] (103.center)
			 to (102.center)
			 to (101.center)
			 to (100.center)
			 to cycle;
		\draw [style=box edge] (108.center)
			 to (111.center)
			 to (110.center)
			 to (109.center)
			 to cycle;
		\draw [style=box edge] (117.center)
			 to (116.center)
			 to (115.center)
			 to (118.center)
			 to cycle;
		\draw [style=box edge] (123.center)
			 to (122.center)
			 to (125.center)
			 to (124.center)
			 to cycle;
		\draw [style=box edge] (130.center)
			 to (129.center)
			 to (132.center)
			 to (131.center)
			 to cycle;
		\draw [style=box edge] (137.center)
			 to (136.center)
			 to (139.center)
			 to (138.center)
			 to cycle;
		\draw (25.center) to (107.center);
		\draw (113.center) to (32.center);
		\draw (127.center) to (57.center);
		\draw (141.center) to (64.center);
		\draw (120.center) to (89.center);
		\draw (134.center) to (96.center);
		\draw [bend left] (13) to (24.center);
		\draw [in=-90, out=120] (18) to (33.center);
		\draw [bend left=15, looseness=1.25] (45) to (56.center);
		\draw [in=-90, out=120] (50) to (65.center);
		\draw [bend left=45, looseness=0.75] (77) to (88.center);
		\draw [in=-75, out=60, looseness=0.75] (13) to (31.center);
		\draw [in=-120, out=45, looseness=1.25] (18) to (34.center);
		\draw [in=-90, out=60] (45) to (63.center);
		\draw [in=-90, out=45] (50) to (66.center);
		\draw [in=-90, out=60] (77) to (95.center);
		\draw [in=-90, out=120] (82) to (97.center);
		\draw [in=-90, out=60] (82) to (98.center);
		\draw (152.center) to (151);
		\draw (151) to (150);
		\draw (150) to (1);
		\draw [style=box edge] (156.center)
			 to (154.center)
			 to (153.center)
			 to (155.center)
			 to cycle;
		\draw (106.center) to (157.center);
		\draw (114.center) to (158.center);
		\draw (128.center) to (159.center);
		\draw (142.center) to (160.center);
		\draw (121.center) to (161.center);
		\draw (135.center) to (162.center);
	\end{pgfonlayer}
\end{tikzpicture}

%% file: rep-graph.tikz
\begin{tikzpicture}
	\begin{pgfonlayer}{nodelayer}
		\node [style=none] (0) at (-12, 3) {\rExchUnW:};
		\node [style=none] (1) at (-9, 2) {};
		\node [style=none] (2) at (-9, 4) {};
		\node [style=none] (3) at (-7, 2) {};
		\node [style=none] (4) at (-7, 4) {};
		\node [style=none] (5) at (-9, 1) {};
		\node [style=none] (6) at (-7, 1) {};
		\node [style=none] (7) at (-9, 5) {};
		\node [style=none] (8) at (-7, 5) {};
		\node [style=none] (9) at (-9, 0) {$\omega$};
		\node [style=none] (10) at (-7, 0) {$\omega'$};
		\node [style=none] (11) at (-9, 6) {$\omega'$};
		\node [style=none] (12) at (-7, 6) {$\omega$};
		\node [style=none] (13) at (-9.5, 1.5) {};
		\node [style=none] (14) at (-6.5, 1.5) {};
		\node [style=none] (15) at (-9.5, 4.5) {};
		\node [style=none] (16) at (-6.75, 4.5) {};
		\node [style=none] (17) at (-10.25, 1.5) {$P$};
		\node [style=none] (18) at (-10.25, 4.5) {$P$};
		\node [style=none] (19) at (-4, 3) {\rExchUnH:};
		\node [style=none] (20) at (0, 1) {};
		\node [style=none] (21) at (0, 5) {};
		\node [style=rule dot] (22) at (0, 3) {$\sigma_1$};
		\node [style=none] (23) at (0, 0) {$(\widehat{\Phi}, F^\alpha, G^\beta, \widehat \Psi; \Gamma)$};
		\node [style=none] (24) at (0, 6) {$(\widehat{\Phi}, G^\beta, F^\alpha, \widehat \Psi; \Gamma)$};
		\node [style=none] (25) at (-0.5, 1.5) {};
		\node [style=none] (26) at (0.5, 1.5) {};
		\node [style=none] (27) at (-0.5, 4.5) {};
		\node [style=none] (28) at (0.5, 4.5) {};
		\node [style=none] (29) at (-1.25, 1.5) {$P$};
		\node [style=none] (30) at (-1.25, 4.5) {$P$};
		\node [style=none] (31) at (4, 3) {\rExchUnG:};
		\node [style=none] (32) at (8, 1) {};
		\node [style=none] (33) at (8, 5) {};
		\node [style=rule dot] (34) at (8, 3) {$\sigma_2$};
		\node [style=none] (35) at (8, 0) {$(\widehat \Phi; \Gamma, A, B, \Delta)$};
		\node [style=none] (36) at (8, 6) {$(\widehat \Phi; \Gamma, B, A, \Delta)$};
		\node [style=none] (37) at (7.5, 1.5) {};
		\node [style=none] (38) at (8.5, 1.5) {};
		\node [style=none] (39) at (7.5, 4.5) {};
		\node [style=none] (40) at (8.5, 4.5) {};
		\node [style=none] (41) at (6.75, 1.5) {$P$};
		\node [style=none] (42) at (6.75, 4.5) {$P$};
		\node [style=none] (43) at (-11.5, -21) {\rOtUn:};
		\node [style=none] (44) at (-8.5, -23) {};
		\node [style=none] (45) at (-8.5, -19) {};
		\node [style=rule dot] (46) at (-8.5, -21) {$\otimes$};
		\node [style=none] (47) at (-8.5, -24) {$((F \otimes G)^\alpha, \widehat \Phi; \Gamma)$};
		\node [style=none] (48) at (-8.5, -18) {$(F^{\otimes_1 \alpha}, G^{\otimes_2 \alpha}, \widehat \Phi; \Gamma)$};
		\node [style=none] (49) at (-9, -22.5) {};
		\node [style=none] (50) at (-8, -22.5) {};
		\node [style=none] (51) at (-9, -19.5) {};
		\node [style=none] (52) at (-8, -19.5) {};
		\node [style=none] (53) at (-9.75, -22.5) {$P$};
		\node [style=none] (54) at (-9.75, -19.5) {$P$};
		\node [style=none] (55) at (-4, -21) {\rWiUn i:};
		\node [style=none] (56) at (0, -23) {};
		\node [style=none] (57) at (0, -19) {};
		\node [style=rule dot] (58) at (0, -21) {$\with_i$};
		\node [style=none] (59) at (0, -24) {$((F_1 \with F_2)^\alpha, \widehat \Phi; \Gamma)$};
		\node [style=none] (60) at (0, -18) {$(F_i^{\oplus_i \alpha}, \widehat \Phi; \Gamma)$};
		\node [style=none] (61) at (-0.5, -22.5) {};
		\node [style=none] (62) at (0.5, -22.5) {};
		\node [style=none] (63) at (-0.5, -19.5) {};
		\node [style=none] (64) at (0.5, -19.5) {};
		\node [style=none] (65) at (-1.25, -22.5) {$P$};
		\node [style=none] (66) at (-1.25, -19.5) {$P$};
		\node [style=none] (67) at (3.5, -21) {\rArUnNew:};
		\node [style=none] (68) at (8, -23) {};
		\node [style=none] (69) at (8, -19) {};
		\node [style=none] (71) at (8, -24) {$(\widehat \Phi, F \stackrel{\alpha}{\multimap} G, \widehat \Psi; \Gamma)$};
		\node [style=none] (72) at (8, -18) {$(G^{\beta}, \widehat \Psi; \Gamma)$};
		\node [style=none] (73) at (7.5, -22.5) {};
		\node [style=none] (74) at (8.5, -22.5) {};
		\node [style=none] (75) at (7.5, -19.5) {};
		\node [style=none] (76) at (8.5, -19.5) {};
		\node [style=none] (77) at (6.75, -22.5) {$P$};
		\node [style=none] (78) at (6.25, -19.5) {$P\sqcup p_L$};
		\node [style=rule dot] (79) at (5.5, -21) {$\pi_L$};
		\node [style=none] (80) at (12.25, -21) {\rArUnPro:};
		\node [style=none] (93) at (-12.25, -5) {\rBtL:};
		\node [style=none] (94) at (-8.25, -7) {};
		\node [style=none] (95) at (-8.25, -3) {};
		\node [style=rule dot] (96) at (-8.25, -5) {$\boxtimes$};
		\node [style=none] (97) at (-8.25, -8) {$(\widehat \Phi; A \boxtimes B, \Gamma)$};
		\node [style=none] (98) at (-8.25, -2) {$(\widehat \Phi; A, B, \Gamma)$};
		\node [style=none] (99) at (-8.75, -6.5) {};
		\node [style=none] (100) at (-7.75, -6.5) {};
		\node [style=none] (101) at (-8.75, -3.5) {};
		\node [style=none] (102) at (-7.75, -3.5) {};
		\node [style=none] (103) at (-9.5, -6.5) {$P$};
		\node [style=none] (104) at (-9.5, -3.5) {$P$};
		\node [style=none] (105) at (-4, -5) {\rOpL:};
		\node [style=none] (106) at (0, -7) {};
		\node [style=none] (107) at (-1, -3) {};
		\node [style=rule dot] (108) at (0, -5) {$\oplus$};
		\node [style=none] (109) at (0, -8) {$(\widehat \Phi; A \oplus B, \Gamma)$};
		\node [style=none] (110) at (-1.5, -2.5) {$(\widehat \Phi;A, \Gamma)$};
		\node [style=none] (111) at (-0.5, -6.5) {};
		\node [style=none] (112) at (0.5, -6.5) {};
		\node [style=none] (113) at (-1.25, -3.5) {};
		\node [style=none] (114) at (1.25, -3.5) {};
		\node [style=none] (115) at (-1.25, -6.5) {$P$};
		\node [style=none] (116) at (-2, -3.5) {$P$};
		\node [style=none] (117) at (1, -3) {};
		\node [style=none] (118) at (1.5, -2.5) {$(\widehat \Phi;B, \Gamma)$};
		\node [style=none] (119) at (4, -5) {\rRotTh:};
		\node [style=none] (120) at (7, -7) {};
		\node [style=none] (121) at (8, -3) {};
		\node [style=rule dot] (122) at (8, -5) {$R_\theta$};
		\node [style=none] (123) at (7, -8) {$(\widehat \Phi; \Gamma)$};
		\node [style=none] (124) at (8, -2) {$(\widehat \Phi; (\unit \oplus \unit), \Gamma)$};
		\node [style=none] (125) at (6.75, -6.5) {};
		\node [style=none] (126) at (9.25, -6.5) {};
		\node [style=none] (127) at (7.5, -3.5) {};
		\node [style=none] (128) at (8.5, -3.5) {};
		\node [style=none] (129) at (6.25, -6.5) {$P$};
		\node [style=none] (130) at (6.75, -3.5) {$P$};
		\node [style=none] (131) at (9, -8) {$(\widehat \Phi; \Gamma)$};
		\node [style=none] (132) at (9, -7) {};
		\node [style=none] (133) at (12, -4.75) {\rPhTh:};
		\node [style=none] (134) at (15, -7) {};
		\node [style=none] (135) at (15, -3) {};
		\node [style=rule dot] (136) at (15, -5) {$P_\theta$};
		\node [style=none] (137) at (15, -8) {$\omega$};
		\node [style=none] (138) at (15, -2) {$\omega$};
		\node [style=none] (139) at (14.5, -6.5) {};
		\node [style=none] (140) at (15.5, -6.5) {};
		\node [style=none] (141) at (14.5, -3.5) {};
		\node [style=none] (142) at (15.5, -3.5) {};
		\node [style=none] (143) at (13.75, -6.5) {$P$};
		\node [style=none] (144) at (13.75, -3.5) {$P$};
		\node [style=none] (145) at (8, -21) {\rArUnNew $\alpha$};
		\node [style=none] (146) at (6.75, -20.25) {};
		\node [style=none] (147) at (9.25, -20.25) {};
		\node [style=none] (148) at (6.75, -21.75) {};
		\node [style=none] (149) at (9.25, -21.75) {};
		\node [style=none] (150) at (7, -21) {};
		\node [style=none] (151) at (8, -20.25) {};
		\node [style=none] (152) at (8, -21.75) {};
		\node [style=none] (153) at (15, -23) {};
		\node [style=none] (154) at (15, -19) {};
		\node [style=none] (155) at (15, -24) {$(\widehat \Phi, F \stackrel{\alpha}{\multimap} G, \widehat \Psi; \Gamma)$};
		\node [style=none] (156) at (15, -18) {$(G^{\beta}, \widehat \Psi; \Gamma)$};
		\node [style=none] (157) at (14.5, -22.5) {};
		\node [style=none] (158) at (15.5, -22.5) {};
		\node [style=none] (159) at (14.5, -19.5) {};
		\node [style=none] (160) at (15.5, -19.5) {};
		\node [style=none] (162) at (13.25, -19.5) {$P\sqcup p_L$};
		\node [style=none] (164) at (15, -21) {\rArUnNew $\alpha$};
		\node [style=none] (165) at (13.75, -20.25) {};
		\node [style=none] (166) at (16.25, -20.25) {};
		\node [style=none] (167) at (13.75, -21.75) {};
		\node [style=none] (168) at (16.25, -21.75) {};
		\node [style=none] (169) at (14, -21) {};
		\node [style=none] (170) at (15, -20.25) {};
		\node [style=none] (171) at (15, -21.75) {};
		\node [style=none] (172) at (13.25, -22.5) {$P\sqcup p_L$};
		\node [style=none] (173) at (-9, -11.5) {};
		\node [style=none] (174) at (-5, -11.5) {};
		\node [style=none] (175) at (-9, -13) {};
		\node [style=none] (176) at (-5, -13) {};
		\node [style=none] (177) at (-7, -12.25) {\rUnitUn $\alpha$};
		\node [style=none] (178) at (-8.5, -13) {};
		\node [style=none] (179) at (-5.5, -13) {};
		\node [style=none] (180) at (-8.5, -14.25) {};
		\node [style=none] (181) at (-5.5, -14.25) {};
		\node [style=none] (182) at (-7, -11.5) {};
		\node [style=none] (183) at (-7, -10.25) {};
		\node [style=none] (184) at (-7, -15) {$(\widehat{\Phi_i}, A \stackrel{\alpha}{\uarrow} B, \widehat{\Psi}; \Gamma_i, \Delta)_i$};
		\node [style=none] (185) at (-7, -9.75) {$(\widehat \Psi; B, \Delta)$};
		\node [style=none] (186) at (-9, -12.25) {};
		\node [style=rule dot] (187) at (-10.25, -12.25) {$\pi_U$};
		\node [style=none] (188) at (-12, -12.25) {\rUnitUn:};
		\node [style=none] (189) at (-5, -13.75) {};
		\node [style=none] (190) at (-9, -13.75) {};
		\node [style=none] (191) at (-6.5, -10.75) {};
		\node [style=none] (192) at (-7.5, -10.75) {};
		\node [style=none] (193) at (-9.75, -13.75) {$P$};
		\node [style=none] (194) at (-8.25, -10.75) {$P$};
		\node [style=none] (195) at (-7, -13.5) {...};
		\node [style=none] (196) at (2, -11.5) {};
		\node [style=none] (197) at (6, -11.5) {};
		\node [style=none] (198) at (2, -13) {};
		\node [style=none] (199) at (6, -13) {};
		\node [style=none] (200) at (4, -12.25) {\rCutUnG};
		\node [style=none] (201) at (2.5, -13) {};
		\node [style=none] (202) at (5.5, -13) {};
		\node [style=none] (203) at (2.5, -14.25) {};
		\node [style=none] (204) at (5.5, -14.25) {};
		\node [style=none] (205) at (4, -11.5) {};
		\node [style=none] (206) at (4, -10.25) {};
		\node [style=none] (207) at (4, -15) {$(\widehat{\Phi_i}, \widehat{\Psi}; \Gamma_i, \Delta)_i$};
		\node [style=none] (208) at (4, -9.75) {$(\widehat \Psi; B, \Delta)$};
		\node [style=none] (209) at (2, -12.25) {};
		\node [style=rule dot] (210) at (0.5, -12.25) {$\pi'$};
		\node [style=none] (211) at (-1.5, -12.25) {\rCutUnG:};
		\node [style=none] (212) at (6, -13.75) {};
		\node [style=none] (213) at (2, -13.75) {};
		\node [style=none] (214) at (4.5, -10.75) {};
		\node [style=none] (215) at (3.5, -10.75) {};
		\node [style=none] (216) at (1.25, -13.75) {$P$};
		\node [style=none] (217) at (2.75, -10.75) {$P$};
		\node [style=none] (218) at (4, -13.5) {...};
		\node [style=none] (219) at (12.75, -11.5) {};
		\node [style=none] (220) at (16.75, -11.5) {};
		\node [style=none] (221) at (12.75, -13) {};
		\node [style=none] (222) at (16.75, -13) {};
		\node [style=none] (223) at (14.75, -12.25) {\rCutUnH};
		\node [style=none] (224) at (13.25, -13) {};
		\node [style=none] (225) at (16.25, -13) {};
		\node [style=none] (226) at (13.25, -14.25) {};
		\node [style=none] (227) at (16.25, -14.25) {};
		\node [style=none] (228) at (13.25, -11.5) {};
		\node [style=none] (229) at (13.25, -10.25) {};
		\node [style=none] (230) at (14.75, -15) {$(\widehat{\Phi}, \widehat{\Psi_i}; \Gamma_i)_i$};
		\node [style=none] (231) at (14.75, -9.75) {$(F^\alpha, \widehat{\Psi_i};\Gamma_i)_i$};
		\node [style=none] (232) at (12.75, -12.25) {};
		\node [style=rule dot] (233) at (11.5, -12.25) {$\pi'$};
		\node [style=none] (234) at (9.5, -12.25) {\rCutUnH:};
		\node [style=none] (235) at (16.75, -13.75) {};
		\node [style=none] (236) at (12.75, -13.75) {};
		\node [style=none] (237) at (16.75, -10.75) {};
		\node [style=none] (238) at (12.75, -10.75) {};
		\node [style=none] (239) at (12, -13.75) {$P$};
		\node [style=none] (240) at (11.75, -10.75) {$P$};
		\node [style=none] (241) at (14.75, -13.5) {...};
		\node [style=none] (242) at (16.25, -11.5) {};
		\node [style=none] (243) at (16.25, -10.25) {};
		\node [style=none] (244) at (15, 1) {};
		\node [style=none] (245) at (15, 5) {};
		\node [style=rule dot] (246) at (15, 3) {\rOneL};
		\node [style=none] (247) at (15, 0) {$(\widehat \Phi; \unit, \Gamma)$};
		\node [style=none] (249) at (14.5, 1.5) {};
		\node [style=none] (250) at (15.5, 1.5) {};
		\node [style=none] (251) at (14.5, 4.5) {};
		\node [style=none] (252) at (15.5, 4.5) {};
		\node [style=none] (253) at (13.75, 1.5) {$P$};
		\node [style=none] (254) at (13.75, 4.5) {$P$};
		\node [style=none] (256) at (15, 6) {$(\widehat \Phi; \Gamma)$};
		\node [style=none] (257) at (12.25, 3) {\rOneL:};
	\end{pgfonlayer}
	\begin{pgfonlayer}{edgelayer}
		\draw (7.center) to (2.center);
		\draw [in=90, out=-90, looseness=0.75] (2.center) to (3.center);
		\draw (3.center) to (6.center);
		\draw (5.center) to (1.center);
		\draw [in=-90, out=90, looseness=0.75] (1.center) to (4.center);
		\draw (4.center) to (8.center);
		\draw [style=dashedline] (13.center) to (14.center);
		\draw [style=dashedline] (16.center) to (15.center);
		\draw (20.center) to (22);
		\draw (22) to (21.center);
		\draw [style=dashedline] (26.center) to (25.center);
		\draw [style=dashedline] (28.center) to (27.center);
		\draw (32.center) to (34);
		\draw (34) to (33.center);
		\draw [style=dashedline] (38.center) to (37.center);
		\draw [style=dashedline] (40.center) to (39.center);
		\draw (44.center) to (46);
		\draw (46) to (45.center);
		\draw [style=dashedline] (50.center) to (49.center);
		\draw [style=dashedline] (52.center) to (51.center);
		\draw (56.center) to (58);
		\draw (58) to (57.center);
		\draw [style=dashedline] (62.center) to (61.center);
		\draw [style=dashedline] (64.center) to (63.center);
		\draw [style=dashedline] (74.center) to (73.center);
		\draw [style=dashedline] (76.center) to (75.center);
		\draw (94.center) to (96);
		\draw (96) to (95.center);
		\draw [style=dashedline] (100.center) to (99.center);
		\draw [style=dashedline] (102.center) to (101.center);
		\draw (106.center) to (108);
		\draw [in=-90, out=135] (108) to (107.center);
		\draw [style=dashedline] (112.center) to (111.center);
		\draw [style=dashedline] (114.center) to (113.center);
		\draw [in=-90, out=45] (108) to (117.center);
		\draw [in=-135, out=90] (120.center) to (122);
		\draw (122) to (121.center);
		\draw [style=dashedline] (126.center) to (125.center);
		\draw [style=dashedline] (128.center) to (127.center);
		\draw [in=-45, out=90] (132.center) to (122);
		\draw (134.center) to (136);
		\draw (136) to (135.center);
		\draw [style=dashedline] (140.center) to (139.center);
		\draw [style=dashedline] (142.center) to (141.center);
		\draw [style=box edge] (149.center)
			 to (148.center)
			 to (146.center)
			 to (147.center)
			 to cycle;
		\draw (150.center) to (79);
		\draw (152.center) to (68.center);
		\draw (151.center) to (69.center);
		\draw [style=dashedline] (158.center) to (157.center);
		\draw [style=dashedline] (160.center) to (159.center);
		\draw [style=box edge] (168.center)
			 to (167.center)
			 to (165.center)
			 to (166.center)
			 to cycle;
		\draw (171.center) to (153.center);
		\draw (170.center) to (154.center);
		\draw [style=box edge] (175.center)
			 to (173.center)
			 to (174.center)
			 to (176.center)
			 to cycle;
		\draw (178.center) to (180.center);
		\draw (179.center) to (181.center);
		\draw (182.center) to (183.center);
		\draw (187) to (186.center);
		\draw [style=dashedline] (190.center) to (189.center);
		\draw [style=dashedline] (191.center) to (192.center);
		\draw [style=box edge] (198.center)
			 to (196.center)
			 to (197.center)
			 to (199.center)
			 to cycle;
		\draw (201.center) to (203.center);
		\draw (202.center) to (204.center);
		\draw (205.center) to (206.center);
		\draw (210) to (209.center);
		\draw [style=dashedline] (213.center) to (212.center);
		\draw [style=dashedline] (214.center) to (215.center);
		\draw [style=box edge] (222.center)
			 to (221.center)
			 to (219.center)
			 to (220.center)
			 to cycle;
		\draw (224.center) to (226.center);
		\draw (225.center) to (227.center);
		\draw (228.center) to (229.center);
		\draw (233) to (232.center);
		\draw [style=dashedline] (236.center) to (235.center);
		\draw [style=dashedline] (237.center) to (238.center);
		\draw (242.center) to (243.center);
		\draw (244.center) to (246);
		\draw (246) to (245.center);
		\draw [style=dashedline] (250.center) to (249.center);
		\draw [style=dashedline] (252.center) to (251.center);
	\end{pgfonlayer}
\end{tikzpicture}

%% file: cutred-prf-sigma-1.tikz
\begin{tikzpicture}
	\begin{pgfonlayer}{nodelayer}
		\node [style=none] (0) at (-2, -1.5) {};
		\node [style=none] (1) at (-2, -2.5) {};
		\node [style=none] (2) at (1.25, -1.5) {};
		\node [style=none] (3) at (1.25, -2.5) {};
		\node [style=none] (4) at (-0.25, 0.5) {};
		\node [style=none] (5) at (-0.25, -0.5) {};
		\node [style=none] (6) at (3, 0.5) {};
		\node [style=none] (7) at (3, -0.5) {};
		\node [style=none] (8) at (-1.75, 2.5) {};
		\node [style=none] (9) at (-1.75, -1.5) {};
		\node [style=none] (10) at (-0.75, 2.5) {};
		\node [style=none] (11) at (-0.75, -1.5) {};
		\node [style=none] (12) at (0, -0.5) {};
		\node [style=none] (13) at (0, -1.5) {};
		\node [style=none] (14) at (1, -0.5) {};
		\node [style=none] (15) at (1, -1.5) {};
		\node [style=none] (16) at (-0.375, -2) {\rCutUnH};
		\node [style=none] (17) at (1.375, 0) {$B$};
		\node [style=none] (18) at (0, 2.5) {};
		\node [style=none] (19) at (0, 0.5) {};
		\node [style=none] (20) at (1, 2.5) {};
		\node [style=none] (21) at (1, 0.5) {};
		\node [style=none] (22) at (1.75, 2.5) {};
		\node [style=none] (23) at (1.75, 0.5) {};
		\node [style=none] (24) at (2.75, 2.5) {};
		\node [style=none] (25) at (2.75, 0.5) {};
		\node [style=none] (26) at (1.75, -0.5) {};
		\node [style=none] (27) at (1.75, -3.5) {};
		\node [style=none] (28) at (2.75, -0.5) {};
		\node [style=none] (29) at (2.75, -3.5) {};
		\node [style=none] (30) at (-1.75, -2.5) {};
		\node [style=none] (31) at (-1.75, -3.5) {};
		\node [style=none] (32) at (-0.75, -2.5) {};
		\node [style=none] (33) at (-0.75, -3.5) {};
		\node [style=none] (34) at (0, -2.5) {};
		\node [style=none] (35) at (0, -3.5) {};
		\node [style=none] (36) at (1, -2.5) {};
		\node [style=none] (37) at (1, -3.5) {};
		\node [style=rule dot] (38) at (-2.5, -1) {$\pi_1$};
		\node [style=none] (39) at (-2, -2) {};
		\node [style=rule dot] (40) at (-0.75, 1.5) {$\sigma$};
		\node [style=none] (41) at (-2, 3.5) {};
		\node [style=none] (42) at (-2, 2.5) {};
		\node [style=none] (43) at (3, 3.5) {};
		\node [style=none] (44) at (3, 2.5) {};
		\node [style=none] (45) at (0.5, 3) {$\pi_2'$};
		\node [style=none] (46) at (-1.25, 0) {...};
		\node [style=none] (47) at (0.5, 1.5) {...};
		\node [style=none] (48) at (2.25, 1.5) {...};
		\node [style=none] (49) at (0.5, -1) {...};
		\node [style=none] (50) at (2.25, -2.25) {...};
		\node [style=none] (51) at (-1.25, -3) {...};
		\node [style=none] (52) at (0.5, -3) {...};
	\end{pgfonlayer}
	\begin{pgfonlayer}{edgelayer}
		\draw [style=box edge] (0.center)
			 to (1.center)
			 to (3.center)
			 to (2.center)
			 to cycle;
		\draw [style=box edge] (4.center)
			 to (5.center)
			 to (7.center)
			 to (6.center)
			 to cycle;
		\draw (8.center) to (9.center);
		\draw (10.center) to (11.center);
		\draw (12.center) to (13.center);
		\draw (14.center) to (15.center);
		\draw (18.center) to (19.center);
		\draw (20.center) to (21.center);
		\draw (22.center) to (23.center);
		\draw (24.center) to (25.center);
		\draw (26.center) to (27.center);
		\draw (28.center) to (29.center);
		\draw (30.center) to (31.center);
		\draw (32.center) to (33.center);
		\draw (34.center) to (35.center);
		\draw (36.center) to (37.center);
		\draw [in=180, out=-90, looseness=1.25] (38) to (39.center);
		\draw [style=box edge] (43.center)
			 to (41.center)
			 to (42.center)
			 to (44.center)
			 to cycle;
	\end{pgfonlayer}
\end{tikzpicture}

%% file: cutred-prf-sigma-2.tikz
\begin{tikzpicture}
	\begin{pgfonlayer}{nodelayer}
		\node [style=none] (0) at (1.5, -1.5) {};
		\node [style=none] (1) at (1.5, -2.5) {};
		\node [style=none] (2) at (4.75, -1.5) {};
		\node [style=none] (3) at (4.75, -2.5) {};
		\node [style=none] (4) at (3.25, 1.5) {};
		\node [style=none] (5) at (3.25, 0.5) {};
		\node [style=none] (6) at (6.5, 1.5) {};
		\node [style=none] (7) at (6.5, 0.5) {};
		\node [style=none] (8) at (1.75, 2.5) {};
		\node [style=none] (9) at (1.75, -1.5) {};
		\node [style=none] (10) at (2.75, 2.5) {};
		\node [style=none] (11) at (2.75, -1.5) {};
		\node [style=none] (12) at (3.5, 0.5) {};
		\node [style=none] (13) at (3.5, -1.5) {};
		\node [style=none] (14) at (4.5, 0.5) {};
		\node [style=none] (15) at (4.5, -1.5) {};
		\node [style=none] (16) at (3.125, -2) {\rCutUnH};
		\node [style=none] (17) at (4.875, 1) {$B$};
		\node [style=none] (18) at (3.5, 2.5) {};
		\node [style=none] (19) at (3.5, 1.5) {};
		\node [style=none] (20) at (4.5, 2.5) {};
		\node [style=none] (21) at (4.5, 1.5) {};
		\node [style=none] (22) at (5.25, 2.5) {};
		\node [style=none] (23) at (5.25, 1.5) {};
		\node [style=none] (24) at (6.25, 2.5) {};
		\node [style=none] (25) at (6.25, 1.5) {};
		\node [style=none] (26) at (5.25, 0.5) {};
		\node [style=none] (27) at (5.25, -3.5) {};
		\node [style=none] (28) at (6.25, 0.5) {};
		\node [style=none] (29) at (6.25, -3.5) {};
		\node [style=none] (30) at (1.75, -2.5) {};
		\node [style=none] (31) at (1.75, -3.5) {};
		\node [style=none] (32) at (2.75, -2.5) {};
		\node [style=none] (33) at (2.75, -3.5) {};
		\node [style=none] (34) at (3.5, -2.5) {};
		\node [style=none] (35) at (3.5, -3.5) {};
		\node [style=none] (36) at (4.5, -2.5) {};
		\node [style=none] (37) at (4.5, -3.5) {};
		\node [style=rule dot] (38) at (1, -1) {$\pi_1$};
		\node [style=none] (39) at (1.5, -2) {};
		\node [style=rule dot] (40) at (2.75, -0.5) {$\sigma$};
		\node [style=none] (41) at (1.5, 3.5) {};
		\node [style=none] (42) at (1.5, 2.5) {};
		\node [style=none] (43) at (6.5, 3.5) {};
		\node [style=none] (44) at (6.5, 2.5) {};
		\node [style=none] (45) at (4, 3) {$\pi_2'$};
		\node [style=none] (46) at (2.25, 1) {...};
		\node [style=none] (47) at (4, 2) {...};
		\node [style=none] (48) at (5.75, 2) {...};
		\node [style=none] (49) at (4, -0.5) {...};
		\node [style=none] (50) at (5.75, -1.25) {...};
		\node [style=none] (51) at (2.25, -3) {...};
		\node [style=none] (52) at (4, -3) {...};
	\end{pgfonlayer}
	\begin{pgfonlayer}{edgelayer}
		\draw [style=box edge] (2.center)
			 to (0.center)
			 to (1.center)
			 to (3.center)
			 to cycle;
		\draw [style=box edge] (5.center)
			 to (7.center)
			 to (6.center)
			 to (4.center)
			 to cycle;
		\draw (8.center) to (9.center);
		\draw (10.center) to (11.center);
		\draw (12.center) to (13.center);
		\draw (14.center) to (15.center);
		\draw (18.center) to (19.center);
		\draw (20.center) to (21.center);
		\draw (22.center) to (23.center);
		\draw (24.center) to (25.center);
		\draw (26.center) to (27.center);
		\draw (28.center) to (29.center);
		\draw (30.center) to (31.center);
		\draw (32.center) to (33.center);
		\draw (34.center) to (35.center);
		\draw (36.center) to (37.center);
		\draw [in=180, out=-90, looseness=1.25] (38) to (39.center);
		\draw [style=box edge] (43.center)
			 to (41.center)
			 to (42.center)
			 to (44.center)
			 to cycle;
	\end{pgfonlayer}
\end{tikzpicture}

%% file: cutred-prf-sigma-3.tikz
\begin{tikzpicture}
	\begin{pgfonlayer}{nodelayer}
		\node [style=none] (0) at (-3, -0.5) {};
		\node [style=none] (1) at (-3, -1.5) {};
		\node [style=none] (2) at (0.25, -0.5) {};
		\node [style=none] (3) at (0.25, -1.5) {};
		\node [style=none] (4) at (-1.25, 1.5) {};
		\node [style=none] (5) at (-1.25, 0.5) {};
		\node [style=none] (6) at (2, 1.5) {};
		\node [style=none] (7) at (2, 0.5) {};
		\node [style=none] (8) at (-2.75, 2.5) {};
		\node [style=none] (9) at (-2.75, -0.5) {};
		\node [style=none] (10) at (-1.75, 2.5) {};
		\node [style=none] (11) at (-1.75, -0.5) {};
		\node [style=none] (12) at (-1, 0.5) {};
		\node [style=none] (13) at (-1, -0.5) {};
		\node [style=none] (14) at (0, 0.5) {};
		\node [style=none] (15) at (0, -0.5) {};
		\node [style=none] (16) at (-1.375, -1) {\rCutUnH};
		\node [style=none] (17) at (0.375, 1) {$B$};
		\node [style=none] (18) at (-1, 2.5) {};
		\node [style=none] (19) at (-1, 1.5) {};
		\node [style=none] (20) at (0, 2.5) {};
		\node [style=none] (21) at (0, 1.5) {};
		\node [style=none] (22) at (0.75, 2.5) {};
		\node [style=none] (23) at (0.75, 1.5) {};
		\node [style=none] (24) at (1.75, 2.5) {};
		\node [style=none] (25) at (1.75, 1.5) {};
		\node [style=none] (26) at (0.75, 0.5) {};
		\node [style=none] (27) at (0.75, -3.5) {};
		\node [style=none] (28) at (1.75, 0.5) {};
		\node [style=none] (29) at (1.75, -3.5) {};
		\node [style=none] (30) at (-2.75, -1.5) {};
		\node [style=none] (31) at (-2.75, -3.5) {};
		\node [style=none] (32) at (-1.75, -1.5) {};
		\node [style=none] (33) at (-1.75, -3.5) {};
		\node [style=none] (34) at (-1, -1.5) {};
		\node [style=none] (35) at (-1, -3.5) {};
		\node [style=none] (36) at (0, -1.5) {};
		\node [style=none] (37) at (0, -3.5) {};
		\node [style=rule dot] (38) at (-3.5, 0) {$\pi_1$};
		\node [style=none] (39) at (-3, -1) {};
		\node [style=rule dot] (40) at (-1.75, -2.5) {$\sigma$};
		\node [style=none] (41) at (-3, 3.5) {};
		\node [style=none] (42) at (-3, 2.5) {};
		\node [style=none] (43) at (2, 3.5) {};
		\node [style=none] (44) at (2, 2.5) {};
		\node [style=none] (45) at (-0.5, 3) {$\pi_2'$};
		\node [style=none] (46) at (-2.25, 1) {...};
		\node [style=none] (47) at (-0.5, 2) {...};
		\node [style=none] (48) at (1.25, 2) {...};
		\node [style=none] (49) at (-0.5, 0) {...};
		\node [style=none] (50) at (1.25, -1.25) {...};
		\node [style=none] (51) at (-2.25, -3.25) {...};
		\node [style=none] (52) at (-0.5, -2.5) {...};
	\end{pgfonlayer}
	\begin{pgfonlayer}{edgelayer}
		\draw [style=box edge] (0.center)
			 to (1.center)
			 to (3.center)
			 to (2.center)
			 to cycle;
		\draw [style=box edge] (4.center)
			 to (5.center)
			 to (7.center)
			 to (6.center)
			 to cycle;
		\draw (8.center) to (9.center);
		\draw (10.center) to (11.center);
		\draw (12.center) to (13.center);
		\draw (14.center) to (15.center);
		\draw (18.center) to (19.center);
		\draw (20.center) to (21.center);
		\draw (22.center) to (23.center);
		\draw (24.center) to (25.center);
		\draw (26.center) to (27.center);
		\draw (28.center) to (29.center);
		\draw (30.center) to (31.center);
		\draw (32.center) to (33.center);
		\draw (34.center) to (35.center);
		\draw (36.center) to (37.center);
		\draw [in=180, out=-90, looseness=1.25] (38) to (39.center);
		\draw [style=box edge] (43.center)
			 to (41.center)
			 to (42.center)
			 to (44.center)
			 to cycle;
	\end{pgfonlayer}
\end{tikzpicture}

%% file: cutred-prf-rUnitUn-1.tikz
\begin{tikzpicture}
	\begin{pgfonlayer}{nodelayer}
		\node [style=none] (0) at (-5.25, -3) {};
		\node [style=none] (1) at (-5.25, -4) {};
		\node [style=none] (2) at (-0.25, -3) {};
		\node [style=none] (3) at (-0.25, -4) {};
		\node [style=none] (4) at (-1.75, -1) {};
		\node [style=none] (5) at (-1.75, -2) {};
		\node [style=none] (6) at (1.5, -1) {};
		\node [style=none] (7) at (1.5, -2) {};
		\node [style=none] (8) at (-5, 3.5) {};
		\node [style=none] (9) at (-5, -3) {};
		\node [style=none] (10) at (-4, 3.5) {};
		\node [style=none] (11) at (-4, -3) {};
		\node [style=none] (12) at (-1.5, -2) {};
		\node [style=none] (13) at (-1.5, -3) {};
		\node [style=none] (14) at (-0.5, -2) {};
		\node [style=none] (15) at (-0.5, -3) {};
		\node [style=none] (16) at (-2.625, -3.5) {\rCutUnH};
		\node [style=none] (17) at (-0.125, -1.5) {$B$};
		\node [style=none] (18) at (-1.5, 1) {};
		\node [style=none] (19) at (-1.5, -1) {};
		\node [style=none] (20) at (-0.5, 1) {};
		\node [style=none] (21) at (-0.5, -1) {};
		\node [style=none] (22) at (0.25, 3.5) {};
		\node [style=none] (23) at (0.25, -1) {};
		\node [style=none] (24) at (1.25, 3.5) {};
		\node [style=none] (25) at (1.25, -1) {};
		\node [style=none] (26) at (0.25, -2) {};
		\node [style=none] (27) at (0.25, -5) {};
		\node [style=none] (28) at (1.25, -2) {};
		\node [style=none] (29) at (1.25, -5) {};
		\node [style=none] (30) at (-5, -4) {};
		\node [style=none] (31) at (-5, -5) {};
		\node [style=none] (32) at (-4, -4) {};
		\node [style=none] (33) at (-4, -5) {};
		\node [style=none] (34) at (-1.5, -4) {};
		\node [style=none] (35) at (-1.5, -5) {};
		\node [style=none] (36) at (-0.5, -4) {};
		\node [style=none] (37) at (-0.5, -5) {};
		\node [style=rule dot] (38) at (-5.75, -2.5) {$\pi_1$};
		\node [style=none] (39) at (-5.25, -3.5) {};
		\node [style=none] (41) at (-5.25, 4.5) {};
		\node [style=none] (42) at (-5.25, 3.5) {};
		\node [style=none] (43) at (1.5, 4.5) {};
		\node [style=none] (44) at (1.5, 3.5) {};
		\node [style=none] (45) at (-1.75, 4) {$\pi_2'$};
		\node [style=none] (46) at (-4.5, -1.5) {...};
		\node [style=none] (47) at (-1, -0.5) {...};
		\node [style=none] (48) at (0.75, -0.5) {...};
		\node [style=none] (49) at (-1, -2.5) {...};
		\node [style=none] (50) at (0.75, -3.75) {...};
		\node [style=none] (51) at (-4.5, -4.5) {...};
		\node [style=none] (52) at (-1, -4.5) {...};
		\node [style=none] (53) at (-2.25, -4) {};
		\node [style=none] (54) at (-2.25, -5) {};
		\node [style=none] (55) at (-2.25, 1) {};
		\node [style=none] (56) at (-2.25, -3) {};
		\node [style=none] (57) at (-2.5, 2) {};
		\node [style=none] (58) at (-2.5, 1) {};
		\node [style=none] (59) at (-0.25, 2) {};
		\node [style=none] (60) at (-0.25, 1) {};
		\node [style=none] (61) at (-1.375, 1.5) {\rUnitUn};
		\node [style=none] (62) at (-1.375, 3.5) {};
		\node [style=none] (63) at (-1.375, 2) {};
		\node [style=rule dot] (64) at (-3, 2.5) {$\pi_2$};
		\node [style=none] (65) at (-2.5, 1.5) {};
		\node [style=none, font={\scriptsize}] (66) at (-2.625, 0) {$\omega_i$};
	\end{pgfonlayer}
	\begin{pgfonlayer}{edgelayer}
		\draw [style=box edge] (0.center)
			 to (1.center)
			 to (3.center)
			 to (2.center)
			 to cycle;
		\draw [style=box edge] (4.center)
			 to (5.center)
			 to (7.center)
			 to (6.center)
			 to cycle;
		\draw (8.center) to (9.center);
		\draw (10.center) to (11.center);
		\draw (12.center) to (13.center);
		\draw (14.center) to (15.center);
		\draw (18.center) to (19.center);
		\draw (20.center) to (21.center);
		\draw (22.center) to (23.center);
		\draw (24.center) to (25.center);
		\draw (26.center) to (27.center);
		\draw (28.center) to (29.center);
		\draw (30.center) to (31.center);
		\draw (32.center) to (33.center);
		\draw (34.center) to (35.center);
		\draw (36.center) to (37.center);
		\draw [in=180, out=-90, looseness=1.25] (38) to (39.center);
		\draw [style=box edge] (43.center)
			 to (41.center)
			 to (42.center)
			 to (44.center)
			 to cycle;
		\draw (53.center) to (54.center);
		\draw (55.center) to (56.center);
		\draw [style=box edge] (59.center)
			 to (57.center)
			 to (58.center)
			 to (60.center)
			 to cycle;
		\draw (62.center) to (63.center);
		\draw [in=180, out=-90, looseness=1.25] (64) to (65.center);
	\end{pgfonlayer}
\end{tikzpicture}

%% file: cutred-prf-rUnitUn-2.tikz
\begin{tikzpicture}
	\begin{pgfonlayer}{nodelayer}
		\node [style=none] (0) at (-6, -3) {};
		\node [style=none] (1) at (-6, -4) {};
		\node [style=none] (2) at (0.25, -3) {};
		\node [style=none] (3) at (0.25, -4) {};
		\node [style=none] (4) at (-1.75, -1) {};
		\node [style=none] (5) at (-1.75, -2) {};
		\node [style=none] (6) at (2, -1) {};
		\node [style=none] (7) at (2, -2) {};
		\node [style=none] (8) at (-5.75, 3.5) {};
		\node [style=none] (9) at (-5.75, -3) {};
		\node [style=none] (10) at (-4.75, 3.5) {};
		\node [style=none] (11) at (-4.75, -3) {};
		\node [style=none] (12) at (-1.5, -2) {};
		\node [style=none] (13) at (-1.5, -3) {};
		\node [style=none] (14) at (0, -2) {};
		\node [style=none] (15) at (0, -3) {};
		\node [style=none] (16) at (-2.625, -3.5) {\rCutUnH};
		\node [style=none] (17) at (0.125, -1.5) {$B$};
		\node [style=none] (18) at (-1.5, 1) {};
		\node [style=none] (19) at (-1.5, -1) {};
		\node [style=none] (20) at (0, 1) {};
		\node [style=none] (21) at (0, -1) {};
		\node [style=none] (22) at (0.75, 3.5) {};
		\node [style=none] (23) at (0.75, -1) {};
		\node [style=none] (24) at (1.75, 3.5) {};
		\node [style=none] (25) at (1.75, -1) {};
		\node [style=none] (26) at (0.75, -2) {};
		\node [style=none] (27) at (0.75, -5) {};
		\node [style=none] (28) at (1.75, -2) {};
		\node [style=none] (29) at (1.75, -5) {};
		\node [style=none] (30) at (-5.75, -4) {};
		\node [style=none] (31) at (-5.75, -5) {};
		\node [style=none] (32) at (-4.75, -4) {};
		\node [style=none] (33) at (-4.75, -5) {};
		\node [style=none] (34) at (-1.5, -4) {};
		\node [style=none] (35) at (-1.5, -5) {};
		\node [style=none] (36) at (0, -4) {};
		\node [style=none] (37) at (0, -5) {};
		\node [style=rule dot] (38) at (-6.5, -2.5) {$\pi_1$};
		\node [style=none] (39) at (-6, -3.5) {};
		\node [style=none] (41) at (-6, 4.5) {};
		\node [style=none] (42) at (-6, 3.5) {};
		\node [style=none] (43) at (2, 4.5) {};
		\node [style=none] (44) at (2, 3.5) {};
		\node [style=none] (45) at (-1.75, 4) {$\pi_2'$};
		\node [style=none] (46) at (-5.25, -1.5) {...};
		\node [style=none] (47) at (-0.75, -0.5) {...};
		\node [style=none] (48) at (1.25, -0.5) {...};
		\node [style=none] (49) at (-0.75, -2.5) {...};
		\node [style=none] (50) at (1.25, -3.75) {...};
		\node [style=none] (51) at (-5.25, -4.5) {...};
		\node [style=none] (52) at (-0.75, -4.5) {...};
		\node [style=none] (53) at (-3, -4) {};
		\node [style=none] (54) at (-3, -5) {};
		\node [style=none] (55) at (-3, 1) {};
		\node [style=none] (56) at (-3, -3) {};
		\node [style=none] (57) at (-3.25, 2) {};
		\node [style=none] (58) at (-3.25, 1) {};
		\node [style=none] (59) at (0.25, 2) {};
		\node [style=none] (60) at (0.25, 1) {};
		\node [style=none] (61) at (-1.5, 1.5) {\rUnitUn};
		\node [style=none] (62) at (-1.5, 3.5) {};
		\node [style=none] (63) at (-1.5, 2) {};
		\node [style=rule dot] (64) at (-3.75, 2.5) {$\pi_2$};
		\node [style=none] (65) at (-3.25, 1.5) {};
		\node [style=none, font={\scriptsize}] (66) at (-3.375, 0) {$\omega_i$};
		\node [style=none, font={\scriptsize}] (67) at (-1.875, 0) {$\omega_{j_1}$};
		\node [style=none, font={\scriptsize}] (68) at (-0.375, 0) {$\omega_{j_n}$};
	\end{pgfonlayer}
	\begin{pgfonlayer}{edgelayer}
		\draw [style=box edge] (0.center)
			 to (1.center)
			 to (3.center)
			 to (2.center)
			 to cycle;
		\draw [style=box edge] (4.center)
			 to (5.center)
			 to (7.center)
			 to (6.center)
			 to cycle;
		\draw (8.center) to (9.center);
		\draw (10.center) to (11.center);
		\draw (12.center) to (13.center);
		\draw (14.center) to (15.center);
		\draw (18.center) to (19.center);
		\draw (20.center) to (21.center);
		\draw (22.center) to (23.center);
		\draw (24.center) to (25.center);
		\draw (26.center) to (27.center);
		\draw (28.center) to (29.center);
		\draw (30.center) to (31.center);
		\draw (32.center) to (33.center);
		\draw (34.center) to (35.center);
		\draw (36.center) to (37.center);
		\draw [in=180, out=-90, looseness=1.25] (38) to (39.center);
		\draw [style=box edge] (43.center)
			 to (41.center)
			 to (42.center)
			 to (44.center)
			 to cycle;
		\draw (53.center) to (54.center);
		\draw (55.center) to (56.center);
		\draw [style=box edge] (59.center)
			 to (57.center)
			 to (58.center)
			 to (60.center)
			 to cycle;
		\draw (62.center) to (63.center);
		\draw [in=180, out=-90, looseness=1.25] (64) to (65.center);
	\end{pgfonlayer}
\end{tikzpicture}

%% file: cutred-prf-rUnitUn-3.tikz
\begin{tikzpicture}
	\begin{pgfonlayer}{nodelayer}
		\node [style=none] (0) at (-6, -3) {};
		\node [style=none] (1) at (-6, -4) {};
		\node [style=none] (2) at (0.25, -3) {};
		\node [style=none] (3) at (0.25, -4) {};
		\node [style=none] (4) at (0.5, -1) {};
		\node [style=none] (5) at (0.5, -2) {};
		\node [style=none] (6) at (2, -1) {};
		\node [style=none] (7) at (2, -2) {};
		\node [style=none] (8) at (-5.75, 3.5) {};
		\node [style=none] (9) at (-5.75, -3) {};
		\node [style=none] (10) at (-4.75, 3.5) {};
		\node [style=none] (11) at (-4.75, -3) {};
		\node [style=none] (16) at (-2.625, -3.5) {\rCutUnH};
		\node [style=none] (17) at (1.25, -1.5) {$B$};
		\node [style=none] (18) at (-1.5, 1) {};
		\node [style=none] (19) at (-1.5, -3) {};
		\node [style=none] (20) at (0, 1) {};
		\node [style=none] (21) at (0, -3) {};
		\node [style=none] (22) at (0.75, 3.5) {};
		\node [style=none] (23) at (0.75, -1) {};
		\node [style=none] (24) at (1.75, 3.5) {};
		\node [style=none] (25) at (1.75, -1) {};
		\node [style=none] (26) at (0.75, -2) {};
		\node [style=none] (27) at (0.75, -5) {};
		\node [style=none] (28) at (1.75, -2) {};
		\node [style=none] (29) at (1.75, -5) {};
		\node [style=none] (30) at (-5.75, -4) {};
		\node [style=none] (31) at (-5.75, -5) {};
		\node [style=none] (32) at (-4.75, -4) {};
		\node [style=none] (33) at (-4.75, -5) {};
		\node [style=none] (34) at (-1.5, -4) {};
		\node [style=none] (35) at (-1.5, -5) {};
		\node [style=none] (36) at (0, -4) {};
		\node [style=none] (37) at (0, -5) {};
		\node [style=rule dot] (38) at (-6.5, -2.5) {$\pi_1$};
		\node [style=none] (39) at (-6, -3.5) {};
		\node [style=none] (40) at (-6, 4.5) {};
		\node [style=none] (41) at (-6, 3.5) {};
		\node [style=none] (42) at (2, 4.5) {};
		\node [style=none] (43) at (2, 3.5) {};
		\node [style=none] (44) at (-1.75, 4) {$\pi_2'$};
		\node [style=none] (45) at (-5.25, -1.5) {...};
		\node [style=none] (47) at (1.25, -0.5) {...};
		\node [style=none] (48) at (-0.75, -2.5) {...};
		\node [style=none] (49) at (1.25, -3.75) {...};
		\node [style=none] (50) at (-5.25, -4.5) {...};
		\node [style=none] (51) at (-0.75, -4.5) {...};
		\node [style=none] (52) at (-3, -4) {};
		\node [style=none] (53) at (-3, -5) {};
		\node [style=none] (54) at (-3, 1) {};
		\node [style=none] (55) at (-3, -3) {};
		\node [style=none] (56) at (-3.25, 2) {};
		\node [style=none] (57) at (-3.25, 1) {};
		\node [style=none] (58) at (0.25, 2) {};
		\node [style=none] (59) at (0.25, 1) {};
		\node [style=none] (60) at (-1.5, 1.5) {\rUnitUn};
		\node [style=none] (61) at (-1.5, 3.5) {};
		\node [style=none] (62) at (-1.5, 2) {};
		\node [style=rule dot] (63) at (-3.75, 2.5) {$\pi_2$};
		\node [style=none] (64) at (-3.25, 1.5) {};
		\node [style=none, font={\scriptsize}] (65) at (-3.375, 0) {$\omega_i$};
		\node [style=none, font={\scriptsize}] (66) at (-1.875, 0) {$\omega_{j_1}$};
		\node [style=none, font={\scriptsize}] (67) at (-0.375, 0) {$\omega_{j_n}$};
	\end{pgfonlayer}
	\begin{pgfonlayer}{edgelayer}
		\draw [style=box edge] (0.center)
			 to (1.center)
			 to (3.center)
			 to (2.center)
			 to cycle;
		\draw [style=box edge] (4.center)
			 to (5.center)
			 to (7.center)
			 to (6.center)
			 to cycle;
		\draw (8.center) to (9.center);
		\draw (10.center) to (11.center);
		\draw (18.center) to (19.center);
		\draw (20.center) to (21.center);
		\draw (22.center) to (23.center);
		\draw (24.center) to (25.center);
		\draw (26.center) to (27.center);
		\draw (28.center) to (29.center);
		\draw (30.center) to (31.center);
		\draw (32.center) to (33.center);
		\draw (34.center) to (35.center);
		\draw (36.center) to (37.center);
		\draw [in=180, out=-90, looseness=1.25] (38) to (39.center);
		\draw [style=box edge] (42.center)
			 to (40.center)
			 to (41.center)
			 to (43.center)
			 to cycle;
		\draw (52.center) to (53.center);
		\draw (54.center) to (55.center);
		\draw [style=box edge] (58.center)
			 to (56.center)
			 to (57.center)
			 to (59.center)
			 to cycle;
		\draw (61.center) to (62.center);
		\draw [in=180, out=-90, looseness=1.25] (63) to (64.center);
	\end{pgfonlayer}
\end{tikzpicture}

%% file: cutred-prf-rUnitUn-4.tikz
\begin{tikzpicture}
	\begin{pgfonlayer}{nodelayer}
		\node [style=none] (0) at (-5.25, -3) {};
		\node [style=none] (1) at (-5.25, -4) {};
		\node [style=none] (2) at (1, -3) {};
		\node [style=none] (3) at (1, -4) {};
		\node [style=none] (4) at (-1, 0) {};
		\node [style=none] (5) at (-1, -1) {};
		\node [style=none] (6) at (2.75, 0) {};
		\node [style=none] (7) at (2.75, -1) {};
		\node [style=none] (8) at (-5, 4.5) {};
		\node [style=none] (9) at (-5, -3) {};
		\node [style=none] (10) at (-4, 4.5) {};
		\node [style=none] (11) at (-4, -3) {};
		\node [style=none] (12) at (-0.75, -1) {};
		\node [style=none] (13) at (-0.75, -3) {};
		\node [style=none] (14) at (0.75, -1) {};
		\node [style=none] (15) at (0.75, -3) {};
		\node [style=none] (16) at (-1.875, -3.5) {\rCutUnH};
		\node [style=none] (17) at (0.875, -0.5) {$B$};
		\node [style=none] (18) at (-0.75, 2) {};
		\node [style=none] (19) at (-0.75, 0) {};
		\node [style=none] (20) at (0.75, 2) {};
		\node [style=none] (21) at (0.75, 0) {};
		\node [style=none] (22) at (1.5, 4.5) {};
		\node [style=none] (23) at (1.5, 0) {};
		\node [style=none] (24) at (2.5, 4.5) {};
		\node [style=none] (25) at (2.5, 0) {};
		\node [style=none] (26) at (1.5, -1) {};
		\node [style=none] (27) at (1.5, -5) {};
		\node [style=none] (28) at (2.5, -1) {};
		\node [style=none] (29) at (2.5, -5) {};
		\node [style=none] (30) at (-5, -4) {};
		\node [style=none] (31) at (-5, -5) {};
		\node [style=none] (32) at (-4, -4) {};
		\node [style=none] (33) at (-4, -5) {};
		\node [style=none] (34) at (-0.75, -4) {};
		\node [style=none] (35) at (-0.75, -5) {};
		\node [style=none] (36) at (0.75, -4) {};
		\node [style=none] (37) at (0.75, -5) {};
		\node [style=rule dot] (38) at (-5.75, -2.5) {$\pi_1$};
		\node [style=none] (39) at (-5.25, -3.5) {};
		\node [style=none] (40) at (-5.25, 5.5) {};
		\node [style=none] (41) at (-5.25, 4.5) {};
		\node [style=none] (42) at (2.75, 5.5) {};
		\node [style=none] (43) at (2.75, 4.5) {};
		\node [style=none] (44) at (-1, 5) {$\pi_2'$};
		\node [style=none] (45) at (-4.5, -0.5) {...};
		\node [style=none] (46) at (0, 0.5) {...};
		\node [style=none] (47) at (2, 0.5) {...};
		\node [style=none] (48) at (0, -2.5) {...};
		\node [style=none] (49) at (2, -3.75) {...};
		\node [style=none] (50) at (-4.5, -4.5) {...};
		\node [style=none] (51) at (0, -4.5) {...};
		\node [style=none] (52) at (-2.25, -4) {};
		\node [style=none] (53) at (-2.25, -5) {};
		\node [style=none] (54) at (-2.25, 2) {};
		\node [style=none] (55) at (-2.25, -3) {};
		\node [style=none] (56) at (-2.5, 3) {};
		\node [style=none] (57) at (-2.5, 2) {};
		\node [style=none] (58) at (1, 3) {};
		\node [style=none] (59) at (1, 2) {};
		\node [style=none] (60) at (-0.75, 2.5) {\rUnitUn};
		\node [style=none] (61) at (-0.75, 4.5) {};
		\node [style=none] (62) at (-0.75, 3) {};
		\node [style=rule dot] (63) at (-3, 3.5) {$\pi_2$};
		\node [style=none] (64) at (-2.5, 2.5) {};
		\node [style=rule dot, font={\scriptsize}] (68) at (-0.75, -2) {$r$};
		\node [style=none, font={\scriptsize}] (69) at (-2.6, 1) {$\omega_i$};
		\node [style=none, font={\scriptsize}] (70) at (-1.1, 1) {$\omega_{j_1}$};
		\node [style=none, font={\scriptsize}] (71) at (0.4, 1) {$\omega_{j_n}$};
	\end{pgfonlayer}
	\begin{pgfonlayer}{edgelayer}
		\draw [style=box edge] (0.center)
			 to (1.center)
			 to (3.center)
			 to (2.center)
			 to cycle;
		\draw [style=box edge] (4.center)
			 to (5.center)
			 to (7.center)
			 to (6.center)
			 to cycle;
		\draw (8.center) to (9.center);
		\draw (10.center) to (11.center);
		\draw (12.center) to (13.center);
		\draw (14.center) to (15.center);
		\draw (18.center) to (19.center);
		\draw (20.center) to (21.center);
		\draw (22.center) to (23.center);
		\draw (24.center) to (25.center);
		\draw (26.center) to (27.center);
		\draw (28.center) to (29.center);
		\draw (30.center) to (31.center);
		\draw (32.center) to (33.center);
		\draw (34.center) to (35.center);
		\draw (36.center) to (37.center);
		\draw [in=180, out=-90, looseness=1.25] (38) to (39.center);
		\draw [style=box edge] (42.center)
			 to (40.center)
			 to (41.center)
			 to (43.center)
			 to cycle;
		\draw (52.center) to (53.center);
		\draw (54.center) to (55.center);
		\draw [style=box edge] (58.center)
			 to (56.center)
			 to (57.center)
			 to (59.center)
			 to cycle;
		\draw (61.center) to (62.center);
		\draw [in=180, out=-90, looseness=1.25] (63) to (64.center);
	\end{pgfonlayer}
\end{tikzpicture}

%% file: cutred-prf-rUnitUn-3f.tikz
\begin{tikzpicture}
	\begin{pgfonlayer}{nodelayer}
		\node [style=none] (0) at (-6, -3) {};
		\node [style=none] (1) at (-6, -4) {};
		\node [style=none] (2) at (0.25, -3) {};
		\node [style=none] (3) at (0.25, -4) {};
		\node [style=none] (4) at (0.5, 2) {};
		\node [style=none] (5) at (0.5, 1) {};
		\node [style=none] (6) at (2, 2) {};
		\node [style=none] (7) at (2, 1) {};
		\node [style=none] (8) at (-5.75, 3) {};
		\node [style=none] (9) at (-5.75, -3) {};
		\node [style=none] (10) at (-4.75, 3) {};
		\node [style=none] (11) at (-4.75, -3) {};
		\node [style=none] (16) at (-2.625, -3.5) {\rCutUnH};
		\node [style=none] (17) at (1.25, 1.5) {$B$};
		\node [style=none] (18) at (-1.5, -1) {};
		\node [style=none] (19) at (-1.5, -3) {};
		\node [style=none] (20) at (0, -1) {};
		\node [style=none] (21) at (0, -3) {};
		\node [style=none] (22) at (0.75, 3) {};
		\node [style=none] (23) at (0.75, 2) {};
		\node [style=none] (24) at (1.75, 3) {};
		\node [style=none] (25) at (1.75, 2) {};
		\node [style=none] (26) at (0.75, 1) {};
		\node [style=none] (27) at (0.75, -5) {};
		\node [style=none] (28) at (1.75, 1) {};
		\node [style=none] (29) at (1.75, -5) {};
		\node [style=none] (30) at (-5.75, -4) {};
		\node [style=none] (31) at (-5.75, -5) {};
		\node [style=none] (32) at (-4.75, -4) {};
		\node [style=none] (33) at (-4.75, -5) {};
		\node [style=none] (34) at (-1.5, -4) {};
		\node [style=none] (35) at (-1.5, -5) {};
		\node [style=none] (36) at (0, -4) {};
		\node [style=none] (37) at (0, -5) {};
		\node [style=rule dot] (38) at (-6.5, -2.5) {$\pi_1$};
		\node [style=none] (39) at (-6, -3.5) {};
		\node [style=none] (40) at (-6, 4) {};
		\node [style=none] (41) at (-6, 3) {};
		\node [style=none] (42) at (2, 4) {};
		\node [style=none] (43) at (2, 3) {};
		\node [style=none] (44) at (-1.75, 3.5) {$\pi_2'$};
		\node [style=none] (45) at (-5.25, -1.5) {...};
		\node [style=none] (47) at (1.25, 2.5) {...};
		\node [style=none] (48) at (-0.75, -2.5) {...};
		\node [style=none] (49) at (1.25, -3.75) {...};
		\node [style=none] (50) at (-5.25, -4.5) {...};
		\node [style=none] (51) at (-0.75, -4.5) {...};
		\node [style=none] (52) at (-3, -4) {};
		\node [style=none] (53) at (-3, -5) {};
		\node [style=none] (54) at (-3, -1) {};
		\node [style=none] (55) at (-3, -3) {};
		\node [style=none] (56) at (-3.25, 0) {};
		\node [style=none] (57) at (-3.25, -1) {};
		\node [style=none] (58) at (0.25, 0) {};
		\node [style=none] (59) at (0.25, -1) {};
		\node [style=none] (60) at (-1.5, -0.5) {\rUnitUn};
		\node [style=none] (61) at (-1.5, 3) {};
		\node [style=none] (62) at (-1.5, 0) {};
		\node [style=rule dot] (63) at (-3.75, 0.5) {$\pi_2$};
		\node [style=none] (64) at (-3.25, -0.5) {};
		\node [style=none, font={\scriptsize}] (65) at (-3.375, -2) {$\omega_i$};
		\node [style=none, font={\scriptsize}] (66) at (-1.875, -2) {$\omega_{j_1}$};
		\node [style=none, font={\scriptsize}] (67) at (-0.375, -2) {$\omega_{j_n}$};
	\end{pgfonlayer}
	\begin{pgfonlayer}{edgelayer}
		\draw [style=box edge] (0.center)
			 to (1.center)
			 to (3.center)
			 to (2.center)
			 to cycle;
		\draw [style=box edge] (4.center)
			 to (5.center)
			 to (7.center)
			 to (6.center)
			 to cycle;
		\draw (8.center) to (9.center);
		\draw (10.center) to (11.center);
		\draw (18.center) to (19.center);
		\draw (20.center) to (21.center);
		\draw (22.center) to (23.center);
		\draw (24.center) to (25.center);
		\draw (26.center) to (27.center);
		\draw (28.center) to (29.center);
		\draw (30.center) to (31.center);
		\draw (32.center) to (33.center);
		\draw (34.center) to (35.center);
		\draw (36.center) to (37.center);
		\draw [in=180, out=-90, looseness=1.25] (38) to (39.center);
		\draw [style=box edge] (42.center)
			 to (40.center)
			 to (41.center)
			 to (43.center)
			 to cycle;
		\draw (52.center) to (53.center);
		\draw (54.center) to (55.center);
		\draw [style=box edge] (58.center)
			 to (56.center)
			 to (57.center)
			 to (59.center)
			 to cycle;
		\draw (61.center) to (62.center);
		\draw [in=180, out=-90, looseness=1.25] (63) to (64.center);
	\end{pgfonlayer}
\end{tikzpicture}

%% file: cutred-prf-rUnitUn-3ff.tikz
\begin{tikzpicture}
	\begin{pgfonlayer}{nodelayer}
		\node [style=none] (0) at (-6, 0) {};
		\node [style=none] (1) at (-6, -1) {};
		\node [style=none] (2) at (-1.25, 0) {};
		\node [style=none] (3) at (-1.25, -1) {};
		\node [style=none] (4) at (0.5, 2) {};
		\node [style=none] (5) at (0.5, 1) {};
		\node [style=none] (6) at (2, 2) {};
		\node [style=none] (7) at (2, 1) {};
		\node [style=none] (8) at (-5.75, 3) {};
		\node [style=none] (9) at (-5.75, 0) {};
		\node [style=none] (10) at (-4.75, 3) {};
		\node [style=none] (11) at (-4.75, 0) {};
		\node [style=none] (16) at (-3.625, -0.5) {\rCutUnH};
		\node [style=none] (17) at (1.25, 1.5) {$B$};
		\node [style=none] (18) at (-1.5, -3.5) {};
		\node [style=none] (19) at (-1.5, -5) {};
		\node [style=none] (20) at (0, -3.5) {};
		\node [style=none] (21) at (0, -5) {};
		\node [style=none] (22) at (0.75, 3) {};
		\node [style=none] (23) at (0.75, 2) {};
		\node [style=none] (24) at (1.75, 3) {};
		\node [style=none] (25) at (1.75, 2) {};
		\node [style=none] (26) at (0.75, 1) {};
		\node [style=none] (27) at (0.75, -5) {};
		\node [style=none] (28) at (1.75, 1) {};
		\node [style=none] (29) at (1.75, -5) {};
		\node [style=none] (30) at (-5.75, -1) {};
		\node [style=none] (31) at (-5.75, -5) {};
		\node [style=none] (32) at (-4.75, -1) {};
		\node [style=none] (33) at (-4.75, -5) {};
		\node [style=none] (34) at (-1.5, -1) {};
		\node [style=none] (35) at (-1.5, -2.5) {};
		\node [style=rule dot] (38) at (-6.5, 0.5) {$\pi_1$};
		\node [style=none] (39) at (-6, -0.5) {};
		\node [style=none] (40) at (-6, 4) {};
		\node [style=none] (41) at (-6, 3) {};
		\node [style=none] (42) at (2, 4) {};
		\node [style=none] (43) at (2, 3) {};
		\node [style=none] (44) at (-1.75, 3.5) {$\pi_2'$};
		\node [style=none] (45) at (-5.25, 0.75) {...};
		\node [style=none] (47) at (1.25, 2.5) {...};
		\node [style=none] (48) at (-0.75, -4.75) {...};
		\node [style=none] (49) at (1.25, -3.75) {...};
		\node [style=none] (50) at (-5.25, -4.5) {...};
		\node [style=none] (54) at (-3, -3.5) {};
		\node [style=none] (55) at (-3, -5) {};
		\node [style=none] (56) at (-3.25, -2.5) {};
		\node [style=none] (57) at (-3.25, -3.5) {};
		\node [style=none] (58) at (0.25, -2.5) {};
		\node [style=none] (59) at (0.25, -3.5) {};
		\node [style=none] (60) at (-1.5, -3) {\rUnitUn};
		\node [style=none] (61) at (-1.5, 3) {};
		\node [style=none] (62) at (-1.5, 0) {};
		\node [style=rule dot] (63) at (-3.75, -2) {$\pi_2$};
		\node [style=none] (64) at (-3.25, -3) {};
		\node [style=none, font={\scriptsize}] (65) at (-3.375, -4.25) {$\omega_i$};
		\node [style=none, font={\scriptsize}] (66) at (-1.875, -4.25) {$\omega_{j_1}$};
		\node [style=none, font={\scriptsize}] (67) at (-0.375, -4.25) {$\omega_{j_n}$};
	\end{pgfonlayer}
	\begin{pgfonlayer}{edgelayer}
		\draw [style=box edge] (0.center)
			 to (1.center)
			 to (3.center)
			 to (2.center)
			 to cycle;
		\draw [style=box edge] (4.center)
			 to (5.center)
			 to (7.center)
			 to (6.center)
			 to cycle;
		\draw (8.center) to (9.center);
		\draw (10.center) to (11.center);
		\draw (18.center) to (19.center);
		\draw (20.center) to (21.center);
		\draw (22.center) to (23.center);
		\draw (24.center) to (25.center);
		\draw (26.center) to (27.center);
		\draw (28.center) to (29.center);
		\draw (30.center) to (31.center);
		\draw (32.center) to (33.center);
		\draw (34.center) to (35.center);
		\draw [in=180, out=-90, looseness=1.25] (38) to (39.center);
		\draw [style=box edge] (42.center)
			 to (40.center)
			 to (41.center)
			 to (43.center)
			 to cycle;
		\draw (54.center) to (55.center);
		\draw [style=box edge] (58.center)
			 to (56.center)
			 to (57.center)
			 to (59.center)
			 to cycle;
		\draw (61.center) to (62.center);
		\draw [in=180, out=-90, looseness=1.25] (63) to (64.center);
	\end{pgfonlayer}
\end{tikzpicture}